\newcommand{\p}{\partial}
\newcommand{\dd}{{\rm d}}
\newcommand{\bd}{\begin{definition}}                
\newcommand{\ed}{\end{definition}}                  
\newcommand{\bc}{\begin{corollary}}                 
\newcommand{\ec}{\end{corollary}}                   
\newcommand{\bl}{\begin{lemma}}                     
\newcommand{\el}{\end{lemma}}                       
\newcommand{\bp}{\begin{proposition}}            
\newcommand{\ep}{\end{proposition}}                
\newcommand{\bere}{\begin{remark}}                  
\newcommand{\ere}{\end{remark}}                     
\newcommand{\bt}{\begin{theorem}}
\newcommand{\et}{\end{theorem}}
\newcommand{\be}{\begin{equation}}
\newcommand{\ee}{\end{equation}}
\newcommand{\bit}{\begin{itemize}}
\newcommand{\eit}{\end{itemize}}
\newtheorem{theorem}{Theorem}[section]
\newtheorem{corollary}[theorem]{Corollary}
\newtheorem{lemma}[theorem]{Lemma}
\newtheorem{proposition}[theorem]{Proposition}
\theoremstyle{definition}
\newtheorem{definition}[theorem]{Definition}
\theoremstyle{remark}
\newtheorem{remark}[theorem]{Remark}
\newtheorem{example}[theorem]{Example}
\begin{document}

\title{Horizon classification via  Riemannian flows\footnote{In this web version we fixed a few more typos with respect to the version published in Annales Henri Poincar\'e. Also, we added a couple of  explanatory footnotes. The content does not change.}}

\author{R. A. Hounnonkpe\footnote{Universit\'e d'Abomey-Calavi, B\'enin and Institut de Math\'ematiques et de Sciences Physiques (IMSP), Porto-Novo, B\'enin. E-mail:
rhounnonkpe@ymail.com} \ and \
E. Minguzzi\footnote{Dipartimento di Matematica, Universit\`a degli Studi di Pisa,  Largo B. Pontecorvo 5,  I-56127 Pisa, Italy. E-mail:
ettore.minguzzi@unipi.it}}

\date{}

\maketitle

\begin{abstract}
\noindent We point out that the geometry of connected totally geodesic compact null hypersurfaces in Lorentzian manifolds is only slightly more specialized than that of Riemannian flows over compact manifolds, the latter mathematical theory having been much studied in the context of foliation theory since the work by Reinhart (Ann Math
69:119, 1959). We are then able to import  results on Riemannian flows to the horizon case,  so obtaining theorems on the dynamical structure of compact horizons that do not rely on (non-)degeneracy assumptions. Furthermore, we clarify the relation between isometric/geodesible Riemannian flows and non-degeneracy conditions.
This work also contains some positive results on the possibility of finding, in the degenerate case, lightlike  fields tangent to the horizon that have  zero surface gravity.
\end{abstract}


\section{Compact horizons and Riemannian flows}

According to the Penrose's strong cosmic censorship  (SCC) conjecture \cite{penrose69,penrose79} the maximal globally hyperbolic development of generic initial data for the Einstein equations is inextendible. In other words, it roughly states that generically, determinism holds in GR, i.e.\ that the future is determined by the past. The abundance of exact spacetimes that present Cauchy horizons would be due to their non-genericity. If SCC is true, then, to start with, {\em compact} Cauchy horizons should not form save for non-generic i.e., very special, spacetimes.

This expectation has been formalized in the  Isenberg-Moncrief conjecture \cite{moncrief83}, which roughly states that if a spacetime admits a non-empty
compact Cauchy horizon, then the spacetime presents a Killing symmetry (hence might be rightly regarded as non-generic), and the Cauchy horizon is  a Killing horizon.

Isenberg and Moncrief studied this problem  solving it in the positive already in their first work under (a) analyticity assumptions on the spacetime metric and the horizon; (b) (electro-)vacuum conditions; (c) closedness of the horizon generators.

Since their first work, they conjectured that in (a) analyticity could be weakened to smoothness and that condition (c) could be dropped. Concerning (a), it was shown that there is no need to assume smoothness of the horizon as it follows from the smoothness of the metric under the null energy condition \cite{larsson14,minguzzi14d}. Under such energy assumption, compact Cauchy horizons are smooth totally geodesic null hypersurfaces, so one often focuses the study on these objects.

The total geodesic property for the horizon $H$ reads as follows
\begin{equation} \label{pprt}
\nabla_X n = \omega(X) n,
\end{equation}
where $X$ is an arbitrary field tangent to the horizon, $n$ is the future-directed lightlike vector field tangent to the horizon and  $\omega: H \to T^*H$ is a 1-form. The connection $\nabla$ can be regarded as the Levi-Civita connection of the spacetime metric or as the induced connection on the horizon. There is no ambiguity precisely because of the total geodesic property. The {\em surface gravity} is the quantity
\[
\kappa:=\omega(n).
\]
The tangent field can be rescaled $n'=e^f n$, $f:H \to \mathbb{R}$, which redefines the 1-form and surface gravity as follows
\begin{align}
    &\omega'=\omega+\mathrm{d}f, \label{ren1}\\
    &\kappa'=e^f(\kappa+ n(f) ). \label{ren2}
\end{align}

A natural problem is that of establishing whether surface gravity can be normalized to a constant,   as this is a necessary condition for the horizon to be Killing \cite{chrusciel20}. Moreover, under the constancy of surface gravity, it is not hard to show that the Lie derivative of the metric at the horizons vanishes, $L_n g=0$. The problem of extending $n$ to a Killing vector field in a neighborhood of the horizon becomes then approachable through PDE methods with the idea of  propagating the condition $L_ng=0$.

Let us also call a horizon {\em degenerate} if it admits a complete generator, and {\em non-degenerate} if it admits an incomplete generator. Isenberg and Moncrief also proved the {\em dichotomy}, i.e.\ that under the above assumptions ((a), (b) and (c)), the horizon is either degenerate, in which case all generators are complete, or non-degenerate in which case all generators are future incomplete (or dually).

In a long series of works \cite{moncrief83,isenberg85,isenberg92,moncrief08,moncrief20} spanning almost forty years they weakened (a) or (c) and used some alternative combinations and variants of these properties, or assumptions of non-degeneracy, to get stronger versions. The original problem remained open but in the process they developed some novel and useful techniques among which are the {\em Gaussian null coordinates} and the {\em  ribbon argument}.

More recently, in a work by Bustamante and Reiris \cite{reiris21} and in a subsequent work by Gurriaran and the second author \cite{minguzzi21},  by using the ribbon argument the dichotomy was proved in the smooth category and, also, it was proved that in  the non-degenerate case surface gravity can be normalized to a non-zero constant  (if surface gravity is normalizable to zero then the horizon is degenerate). In \cite{minguzzi21}  is was shown that condition (b) can be dropped provided the dominant energy condition is assumed.

The extension PDE step was obtained, still in the non-degenerate case, in works by Petersen \cite{petersen18b,petersen19} and Petersen and Racz \cite{petersen18}, so that the Isenberg-Moncrief conjecture is now solved for non-degenerate horizons. In this case, Bustamante and Reiris have also obtained a detailed classification result for the dynamics of the generators in 4 spacetime dimensions by using the theory of Lie group actions over compact manifolds \cite{bustamante21} (see also \cite{kroenke24} for other geometrical results for non-degenerate horizons but in the analytic setting).

It is apparent from the previous summary of available results that the Isenberg-Moncrief conjecture remains widely open in the degenerate case.\footnote{Studies in the near horizon geometry of extremal stationary black holes \cite{kunduri13} assume that the horizon is Killing from the outset so they are not really helpful in proving the conjecture though they are important for clarifying the classification of horizons.}
With this work we aim to change this situation by obtaining some useful and significant results.


We end this section by introducing some definitions and terminology which are the same of \cite{minguzzi14d,minguzzi18b}. A spacetime
$(M, g)$ is a paracompact, time oriented Lorentzian manifold of dimension $n + 1 \ge 2$.
The signature of the metric is $(-,+,\ldots,+)$.
We  assume that  $M$ is  $C^k$, $4 \le k \le \infty$, or even analytic, and so as it is $C^1$ it has a unique $C^\infty$ compatible structure (Whitney) \cite[Theor.\ 2.9]{hirsch76}.
Thus we could assume that $M$ is smooth without loss of generality.

The metric will be assumed to be $C^3$, but it is likely that the degree can be lowered. We assume at least this degree of differentiability because it was also assumed in  \cite{minguzzi14d} over which results we rely.
For shortness, sometimes we shall sloppily use the word {\em smooth} as meaning {\em as much as the regularity of the metric allows}, when such a regularity is clear.

\section{The Riemannian flow structure of horizons}

For shortness, we shall use the following terminology\\

\begin{definition}
 A  {\em horizon} is  a connected totally geodesic $C^3$ null hypersurface. \\
\end{definition}

With $H$ we shall denote a horizon in the above sense. With $n$ we denote a future-directed lightlike vector field tangent to the horizon.

Concerning the totally geodesic property, we recall that it is equivalent to the vanishing of the second fundamental form, which is defined as an endomorphism $X\mapsto b(X):= \overline{\nabla_Xn}$ on the quotient bundle $TH/n$, where the overline denotes projection under that quotient. It can be shown that it vanishes iff $\nabla_XY\in TH$ for any two vector fields tangent to the horizon. This equivalence goes back to Kupeli \cite[Thm.\ 30]{kupeli87}. The condition $b=0$ can also be written as Eq.\ (\ref{pprt}) while the property ``$\nabla_XY\in TH$ for any two vector fields tangent to the horizon'' makes it possible to regard $\nabla$ both as the spacetime connection and as the connection induced on $H$.

We shall be mostly interested in the compact horizon case but many of the results that follow do not depend on this assumption, so this property will be spelled out  when needed.


We denote the metric induced on the horizon with $g_T$ and call it {\em transverse metric}.  Forgetting about the spacetime, a horizon can be regarded as a triple $(H,g_T,\nabla)$  where $g_T$ is, at every point,  a positive semi-definite symmetric bilinear  form with 1-dimensional oriented kernel $\textrm{span}(n)$, $n$ is a $C^2$ (positive) vector field on $H$ such that $g_T(n,
\cdot)=0$, $\nabla$ is an affine connection, and we have the compatibility conditions $\nabla g_T =0$, $\nabla n=n \otimes
\omega$ for some 1-form $\omega$. By using these properties it follows that the  flow of  $n$  preserves $g_T$ (e.g.\ \cite[Thm. 37]{kupeli87} \cite[Lemma B1]{friedrich99}\cite{moncrief08}\cite[Lemma 7]{minguzzi21}), that is for $X,Y\in TH$
\[
L_n g_T(X,Y)\!=g_T(\nabla_X n, Y)+g_T(X, \nabla_Y n)\!=\omega(X) g_T(n,Y)+\omega(Y) g_T(X,n)=0.
\]
Observe that the kernel $\textrm{span}(n)$ being 1-dimensional is necessarily integrable. Its leaves are the (unparametrized) horizon generators.

Now, let us pause and introduce a different concept.

A foliation $(H,\mathcal{F})$ in a manifold is a completely integrable distribution $D$ of constant dimension. An integral leaf is denoted $L$ and the set of leaves is $\mathcal{F}$. Two foliations $(M,\mathcal{F})$, $(M',\mathcal{F}')$ are {\em conjugate} if there is a diffeomorphism $M\to M'$ that sends leaves to leaves.

Suppose that on a manifold $H$ there is a positive semi-definite symmetric bilinear form  $g_T:H\to T^*H\otimes_H T^*H$  with a kernel
\[
\ker g_T:=\{X\in TH: g_T(X,\cdot)=0\}
\]
 of constant dimension,  and that $L_Xg_T=0$ for every $X \in \Gamma(\ker g_T)$. Then the distribution $\ker g_T$ is integrable and defines a foliation $\mathcal{F}$ on $H$ \cite[Prop.\
 3.2]{molino88}.

The foliation $(H,\mathcal{F})$ is said to be {\em Riemannian} if there is a metric $g_T$ with the properties of the previous paragraph from which the foliation can be recovered  \cite[Sec.\
 3.2]{molino88}.
Sometimes by {\em Riemannian foliation} one understands the pair  $(H, g_T)$  if a metric $g_T$ which satisfies the above properties is assigned.

 A {\em Riemannian flow} is a Riemannian oriented 1-dimensional foliation  \cite{carriere84}. We denote with $n$ a non-vanishing positively oriented field tangent to the foliation. It is defined up to positive rescalings. The Riemannian flow condition requires the existence of $g_T$ such that $L_n g_T=0$ for some choice of $n$. Indeed,
 \begin{align*}
 (L_{fn} g_T)(X,Y)&={ fn(g_T(X,Y))}-g_T(L_{fn}X, Y)-g_T(X,L_{fn} Y)\\
 &={f}(L_n g_T)(X,Y)+X(f)g_T(n,Y)+Y(f) g_T(X, n)=0.
\end{align*}
 Not all oriented 1-dimensional foliations are Riemannian flows \cite{molino88}.

The first statement in the following result is clear\\

\begin{proposition} \label{bort}
Every horizon is, in particular, a Riemannian flow with a preassigned $g_T$.
Every smooth null hypersurface $H$ that induces  a Riemannian flow structure $(H,g_T)$ is necessarily totally geodesic.
\end{proposition}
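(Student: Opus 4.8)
\emph{Plan.} The plan is to relate the Riemannian flow condition for the induced metric directly to the vanishing of the null second fundamental form of $H$. Let $H\subset M$ be a smooth null hypersurface, let $n$ be a future-directed null generator, and let $g_T$ be the metric induced on $H$ by pull-back of $g$. I would first recall the two basic linear-algebraic facts about null hypersurfaces: (i) the radical of $g_T$ is one-dimensional and equals $\mathrm{span}(n)$, which moreover is $g$-orthogonal to all of $TH$, so that $\ker g_T=(TH)^{\perp}=\mathrm{span}(n)$ and $g_T$ is positive semi-definite; and (ii) $\nabla_X n\in TH$ for every $X\in\Gamma(TH)$, since $2g(\nabla_X n,n)=\p_X\bigl(g(n,n)\bigr)=0$ on $H$ (here $\nabla$ is the Levi-Civita connection of $g$, and $\nabla_X n$ makes sense because $X$ is tangent to $H$). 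Introduce $B(X,Y):=g(\nabla_X n,Y)$ for $X,Y\in\Gamma(TH)$; from torsion-freeness of $\nabla$ and $g(Z,n)=0$ for all $Z\in TH$ one gets $B(X,Y)-B(Y,X)=-g\bigl(\nabla_X Y,n\bigr)+g\bigl(\nabla_Y X,n\bigr)=-g\bigl([X,Y],n\bigr)=0$, so $B$ is symmetric.

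The key step is the identity, valid for $X,Y\in\Gamma(TH)$ extended as fields tangent to $H$,
\[
(L_n g_T)(X,Y)=\p_n\bigl(g_T(X,Y)\bigr)-g_T\bigl([n,X],Y\bigr)-g_T\bigl(X,[n,Y]\bigr)=g(\nabla_X n,Y)+g(\nabla_Y n,X)=2B(X,Y),
\]
which follows from $g_T=g$ on $TH$, metric compatibility of $\nabla$, and $[n,X]=\nabla_n X-\nabla_X n$. Now the hypothesis that $(H,g_T)$ is a Riemannian flow structure is precisely that $L_Z g_T=0$ for every $Z\in\Gamma(\ker g_T)$; applying this to $Z=n$ gives $L_n g_T=0$, whence $B\equiv 0$ by the displayed identity together with the symmetry of $B$. (By the computation displayed just before this proposition, $L_{e^f n}g_T=L_n g_T$ because $n\in\ker g_T$, so the rescaling freedom of the flow direction is immaterial.) Consequently $g(\nabla_X n,Y)=0$ for all $X,Y\in\Gamma(TH)$, and combining this with $g(\nabla_X n,n)=0$ we conclude that $\nabla_X n$ is $g$-orthogonal to $TH$, i.e.\ $\nabla_X n\in(TH)^{\perp}=\mathrm{span}(n)$. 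Therefore there is a $1$-form $\omega$ on $H$ with $\nabla_X n=\omega(X)\,n$ for all $X\in\Gamma(TH)$, which is exactly the total geodesic condition.

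\emph{Main obstacle.} I do not expect an analytic obstacle: the argument is a short tensorial computation once the definitions are unwound, and regularity is not an issue at the assumed differentiability. The only point requiring genuine care is conceptual: one must match the abstract degenerate metric $g_T$ of the Riemannian flow structure with the metric induced by the embedding of $H$ in $M$, identify $\ker g_T$ with the generator line $\mathrm{span}(n)$, and then use the peculiar feature of a null hyperplane, the coincidence $n^{\perp}=TH$, to pass from ``$\nabla_X n$ is $g$-orthogonal to $TH$'' to ``$\nabla_X n$ is proportional to $n$''. It is precisely because the ``normal'' direction $n$ of a null hypersurface is itself tangent to $H$ that the degenerate induced metric already encodes the extrinsic geometry, which is what makes the second statement of the proposition hold.
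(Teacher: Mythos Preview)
Your proof is correct and follows essentially the same route as the paper: both establish the identity $L_n g_T(X,Y)=g(\nabla_X n,Y)+g(X,\nabla_Y n)$ and read off the totally geodesic condition from $L_n g_T=0$. The only cosmetic difference is that the paper rewrites this as $-2g(n,\nabla_X Y)$ and states the conclusion as $\nabla_X Y\in TH$, whereas you deduce $\nabla_X n\in(TH)^\perp=\mathrm{span}(n)$ directly, which matches the form $\nabla_X n=\omega(X)n$ given in the introduction; the two formulations are immediately equivalent.
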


\begin{proof}
The last statement follows from the Kupeli's result \cite[Thm. 37]{kupeli87}: for a null hypersurface $H$ the condition $L_n g_T=0$  is equivalent to the totally geodesic property. Indeed, for $X,Y\in \mathfrak{X}(H)$, denoting with $\nabla$ the spacetime connection, and using the fact that it is torsionless
\begin{align*}
L_n g_T(X,Y)&=g_T(\nabla_X n, Y)+g_T(X, \nabla_Y n)=-[g(n, \nabla_X Y+\nabla_Y X)]\\
&=-2g(n, \nabla_X Y) ,
\end{align*}
which implies $L_n g_T=0$ iff $\nabla_XY \in TH$.
\end{proof}
In horizon geometry it is possible to define the surface gravity as $\kappa:=\omega(n)=\textrm{tr} \nabla n$, while it does not seem to be directly definable in the geometry of Riemannian flows. Still, part of our problem will be to relate results on surface gravity to the  Riemannian flow structure of the horizon.

The theory of Riemannian foliations began with the work by Reinhart \cite{reinhart59} (1959) and has been presented in several books, including those by Molino \cite{molino88} and Tondeur \cite{tondeur97}.
The purpose of this work is to apply the theory of  Riemannian flows to  horizons. In fact, the results on Riemannian foliations/flows are so well adapted to our problems and yet so well established that it is surprising that they have not been noticed and applied before to horizon geometry. We shall show that several results discovered over the years in horizon geometry were essentially already known by the community working in  Riemannian foliations. We believe that making manifest these connections might be as valuable as working out entirely new original results, as  communication between two different fields might result in a faster development for both of them.

We start presenting some key results for Riemannian flows. Some of them  hold  for Riemannian foliations, but in our context it does not seem particularly useful to retain complete generality. In fact, we phrase them in terms of the Riemannian flow structure of the horizon for direct application in general relativity.

Let us consider the horizon as a foliated manifold, where the leaves are the generators, hence generated by a distribution $D:=\textrm{span}(n)$.
By continuity of $n$, $D$ is a closed subbundle of $TH$. A foliation is {\em simple} if the leaves are the connected fibers of a submersion $H\to W$ where $W$ is a manifold (manifolds are Hausdorff in our terminology). Every foliation is locally simple. For instance,  a flow is covered by charts that are cylindrical subsets  $U:=I \times B \subset \mathbb{R}\times \mathbb{R}^q$, $q=n-1$, $B$ representing the local quotient and the affine real line intervals giving the local leaves (also called {\em plaques} in this context). In general, the space of leaves is not a manifold.

A set $S\subset H$ is {\em saturated} if it is the union of leaves. The {\em saturation} of a subset $A$ is the union of all the leaves it intersects, and it is  saturated. A perhaps non-obvious fact from foliation theory \cite[Prop.\ 1.3]{molino88} states that the closure and interior of a saturated set are saturated, and that the saturation of an open set is open.

An immersed submanifold $T$ of $H$ of dimension $q=n-1$ is a {\em transversal} (this is a local concept) if its tangent space is supplementary to  $D$ at every point. It is a {\em total transversal} if it intersects every leaves.\footnote{One should not confuse this notion of immersed transverse submanifold, which might not exist, with the so-called {\em transverse manifold} of a foliation. The latter always exists and can be constructed from the disjoint sum of local transversal, identified with open sets of $\mathbb{R}^q$. This transversal comes with differentiable maps between the mentioned sets that constitute it, which are actually isometries in the case of a Riemannian foliation. This point of view is due to Haefliger \cite[Chap.\ 1]{tondeur97}.} A compact total transversal $T$ is a  {\em cross-section} if for every $p \in H$ there is $t >0$ such that $\varphi^n_t(p)\in T$ where $\varphi^n$ is the flow of $n$. If $H$ is compact a compact total transversal is a cross-section \cite{fried82}.

 A foliation rarely admits a total transversal, but this happens for foliations obtained by an important procedure known as {\em suspension} \cite{smale67}\cite[p.\ 28]{molino88}. It starts from two manifolds $A$ and $T$, where $A$ has the dimension of the desired leaves, and rather than producing $H$ via a product $A\times T$, it uses the fundamental group of $\pi_1(A,x_0)$ with respect to some point $x_0\in A$ and a homomorphism $h: \pi_1(A,x_0) \to Diff(T)$ to define how to glue everything into $H$. In the flow case, $A$ has non-trivial homotopy only if it is a circle, thus we just need to  assign a diffeomorphism $\phi:T\to T$, consider the product $[0,1]\times T$ and identify for every $p\in T$, $(1,p)$ with $(0,\phi^{-1}(p))$, in other words consider the manifold $\mathbb{R}\times T/\sim$, $(t,p)\sim (t+1,\phi(p))$. Actually, as we are constructing a Riemannian flow, $T$ must be regarded as a Riemannian manifold $(T,g_T)$ and so $\phi$ has to be an isometry.  A flow on a compact manifold $H$ is a suspension iff it admits a cross-section \cite{fried82}.

 We note that this type of suspension construction is  encountered in the study of black hole horizons after a certain compactification trick is applied, particularly higher dimensional ones where the generators do not necessarily close \cite{friedrich99,hollands07}.

Since every leaf is the  continuous image of an integral curve of $n$, it is possible to attach to it a topology induced by the real line called {\em leaf topology}. If it coincides with the topology induced from $H$ we speak of {\em proper}  leaf. This happens if and only if the leaf admits a transversal at some point which intersects the leaf at just that point, in which case it admits one at every point \cite[Prop.\ 1.5]{molino88}. Every closed leaf is proper \cite{molino88} (by {\em closed leaf} we mean one  which is the image of a generator $\gamma: \mathbb{R} \to H$, with $\gamma(a)=\gamma(b)$ for some $a\ne b$, not to be confused with the {\em leaf closure} which is $\overline{\textrm{Im} \gamma}$ where the closure is in the topology of $H$).
A stronger concept is that of a {\em regular} leaf, that is a leaf that admits a transversal that  intersects each leaf at most once \cite{reinhart59}.

A vector field $X \in \mathfrak{X}(H)$ is said to be  {\em foliate} if $L_X n \propto n$, in other words, if its flow sends leaves to leaves \cite[Prop.\ 2.2]{molino88}. This property is well-posed as it is invariant under rescalings of $n$. If $X$ is foliate then $X+fn$ is foliate where $f$ is any $C^1$ function.

The quotient space  $Q=TH/D $ is a $q$-dimensional (transverse) vector bundle with base $H$. We denote with an overbar the projection $TH\to Q$. A local section $\bar X$ of $Q$ is said to {\em slide along the leaf} if it is the projection of a foliate field $X$.

It is also natural to study the bundle $B(H)$ of orthonormal reference frames of $Q$. The main idea by Molino, which allowed him to obtain structure theorems that were subsequently refined by other authors and that we shall recall in what follows, was to consider this bundle and a suitably lifted foliation over it.

A $k$-form $\omega$ over $H$ is said to be {\em basic} if it satisfies
\[
\iota_n \omega=0, \qquad \iota_n \dd \omega=0,
\]
 which is  well posed as invariant under rescalings of $n$. The second condition can also be replaced by $L_n \omega$.
The exterior differential of a basic form is basic. Reinhart \cite{reinhart59b} introduced the {\em basic cohomology} $H^k_b(\mathcal{F})$ of the (class of) closed basic $k$-forms that are not the external differential of a basic  $k-1$-form.

Another concept introduced by Reinhart is the following \cite{reinhart59}\\

\begin{definition}
A Riemannian metric $g_B$ on the foliated manifold $H$ is {\em bundle-like} if for every foliate fields $X,Y$ that are $g_B$-orthogonal to $n$  we have \[
n (g_B(X,Y))=0.
\]
\end{definition}

The following result is a simple consequence of the definitions\\

\begin{proposition}
Every bundle-like metric can be uniquely written in the form
\[
g_B=g_T+\sigma \otimes \sigma
\]
for some 1-form $\sigma$ such that $\sigma(n)>0$ and where $g_T$ is a transverse metric in the sense that it has signature $(0,+,\ldots,+)$ and $L_n g_T=0$, where $n \in \textrm{ker} g_T$. Conversely, given the transverse metric $g_T$, the expression $g_B$ above provides a bundle-like metric for any   1-form $\sigma$ such that $\sigma (n) > 0$ (these forms exist, this is clear locally, the global result being obtained with a partition of unity).\\
\end{proposition}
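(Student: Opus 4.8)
The plan is to handle the two directions in turn, starting from the observation that the decomposition, if it exists at all, is completely forced. Suppose $g_B=g_T+\sigma\otimes\sigma$ with $g_T(n,\cdot)=0$ and $\sigma(n)>0$. Feeding $n$ into the first slot yields $g_B(n,\cdot)=\sigma(n)\,\sigma(\cdot)$, and feeding $n$ into the remaining slot as well gives $\sigma(n)^2=g_B(n,n)>0$; hence necessarily $\sigma=g_B(n,\cdot)/\sqrt{g_B(n,n)}$ and $g_T=g_B-\sigma\otimes\sigma$. This already proves uniqueness, and it tells us which objects to take in the existence part. Note that this $\sigma$ is unchanged under positive rescalings $n\mapsto fn$, so the splitting depends only on $g_B$ and the orientation of the flow.

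With $\sigma$ and $g_T$ defined as above, I would next check that $g_T$ is a transverse metric. By construction $g_T(n,\cdot)=0$. Decomposing $TH=\mathrm{span}(n)\oplus n^{\perp_{g_B}}$, any $X\in n^{\perp_{g_B}}$ has $\sigma(X)=0$, so $g_T(X,X)=g_B(X,X)>0$; and for $X=an+Y$ with $Y\in n^{\perp_{g_B}}$ one finds $g_T(X,X)=g_T(Y,Y)=g_B(Y,Y)$, which is nonnegative and vanishes iff $Y=0$. Thus $g_T$ is positive semidefinite with kernel exactly $\mathrm{span}(n)$, i.e.\ has signature $(0,+,\dots,+)$.

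The only substantive point is $L_n g_T=0$, and this is the step where the bundle-like hypothesis actually gets used. Since $L_n g_T$ is tensorial it suffices to evaluate it on a local frame, and the convenient frame consists of $n$ together with foliate fields $X_1,\dots,X_q$ that are $g_B$-orthogonal to $n$. Such fields exist locally: in a foliated chart the transverse coordinate fields, whose flows preserve the foliation, are foliate, and subtracting a suitable multiple of $n$ keeps them foliate (since $X+fn$ is foliate for any $C^1$ function $f$) while making them $g_B$-orthogonal to $n$, so that $n,X_1,\dots,X_q$ is still a frame. Now $L_n g_T(n,\cdot)=0$ because $g_T(n,\cdot)$ vanishes identically and $[n,n]=0$; and on $X_i,X_j$, foliateness gives $[n,X_i]\propto n$, so the Lie-derivative formula collapses to $L_n g_T(X_i,X_j)=\partial_n(g_T(X_i,X_j))$, which equals $\partial_n(g_B(X_i,X_j))$ since $\sigma(X_i)=\sigma(X_j)=0$, and this is zero precisely by the bundle-like property. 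Everything else is bookkeeping; the one small piece of insight is recognizing that one must test against foliate, $g_B$-orthogonal fields so that the bundle-like condition becomes applicable, and that such a frame is available locally.

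For the converse, take a transverse metric $g_T$ and any 1-form $\sigma$ with $\sigma(n)>0$, and set $g_B=g_T+\sigma\otimes\sigma$. It is Riemannian: $g_B(X,X)=g_T(X,X)+\sigma(X)^2$ can only vanish if $X\in\ker g_T=\mathrm{span}(n)$, in which case $\sigma(X)^2>0$ unless $X=0$. It is bundle-like because a foliate field $X$ satisfies $g_B(n,X)=\sigma(n)\sigma(X)$, so $g_B(n,X)=0$ is the same as $\sigma(X)=0$; for two foliate fields $X,Y$ both $g_B$-orthogonal to $n$ one then has $g_B(X,Y)=g_T(X,Y)$, and foliateness together with $L_n g_T=0$ forces $\partial_n(g_T(X,Y))=0$, hence $\partial_n(g_B(X,Y))=0$. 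Finally, 1-forms with $\sigma(n)>0$ exist: locally one may take any covector positive on $n$ (e.g.\ $\mathrm{d}x$ in an adapted chart where $x$ grows along the flow), and since fibrewise the set of such covectors is a convex open half-space, a partition of unity glues the local choices into a global $\sigma$. The main obstacle, such as it is, remains the $L_n g_T=0$ verification in the first direction.
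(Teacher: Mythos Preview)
Your proof is correct. The paper does not actually supply a proof of this proposition---it simply states that the result ``is a simple consequence of the definitions'' and then remarks that, once $n$ is $g_B$-normalized, the characteristic form is $\sigma(\cdot)=g_B(n,\cdot)$; your argument is precisely the natural way of unpacking that remark, and your verification of $L_n g_T=0$ via a local frame of foliate $g_B$-orthogonal fields (so that the bundle-like hypothesis bites) is the right idea.
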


Giving $g_B$, the uniquely determined  1-form $\sigma$ is also called the {\em characteristic form} of $g_B$. If $n$ is $g_B$-normalized, it can be written $\sigma(\cdot):=g_B(n,\cdot)$. In other words, $\sigma$ is the 1-form that vanishes in the space orthogonal to the leafs and that measures the $g_B$-length over the oriented leaves.

The curves that are contained in a leaf are said to be {\em vertical}, while those that are at every point  $g_B$-perpendicular to the leaves, and hence have tangent in $\ker \sigma$, are called {\em horizontal}.

A bundle-like metric provides  more information than a Riemannian flow. It might look as a somewhat artificial concept for the study of the Riemannian flow structure, were not for a surprising foundational result by Reinhart on Riemannian foliation theory \cite{reinhart59} \cite[Prop.\ 3.5]{molino88} (here formulated in the flow case)\\

\begin{theorem} \label{kkr}
Consider a foliated manifold $H$ with  one-dimensional leaves  and suppose that $H$ is endowed with a bundle-like metric $g_B$. The $g_B$-geodesics  that start orthogonal to one leaf are orthogonal to all leaves they intersect. In particular, sufficiently close leaves remain at constant $g_B$-distance (this distance depends only on $g_T$).\\
\end{theorem}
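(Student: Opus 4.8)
The plan is to prove the orthogonality assertion first, since it is the analytic core, and then to deduce the constant-distance statement from it by a first-variation argument. Throughout I use the decomposition $g_B=g_T+\sigma\otimes\sigma$ of the previous proposition: because $g_T(n,\cdot)=0$ and $\sigma(n)>0$ one has $g_B(n,X)=\sigma(n)\,\sigma(X)$, so the $g_B$-orthogonal complement of the leaves is precisely $\ker\sigma$, i.e.\ the horizontal distribution, and ``$\dot\gamma$ orthogonal to a leaf'' means $\sigma(\dot\gamma)=0$. Since $g_B$ is Riemannian and $n$ never vanishes, I may rescale $n$ once and for all so that $g_B(n,n)\equiv1$; this affects neither the foliation nor the horizontality condition. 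Given a $g_B$-geodesic $\gamma$ with $g_B(\dot\gamma(0),n)=0$, I would set $u(t):=g_B(\dot\gamma(t),n|_{\gamma(t)})$, note $u(0)=0$, and compute, using that $\nabla$ is Levi--Civita and $\nabla_{\dot\gamma}\dot\gamma=0$, that $u'(t)=g_B(\dot\gamma,\nabla_{\dot\gamma}n)$. The goal is then to massage this into a homogeneous linear ODE for $u$, so that $u(0)=0$ forces $u\equiv0$.

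The one nontrivial ingredient is the pointwise identity: for every $q\in H$ and $v\in T_qH$ with $g_B(v,n)=0$ one has $g_B(v,\nabla_vn)=0$. Since $\nabla$ is metric and torsion-free, $(L_ng_B)(X,Y)=g_B(\nabla_Xn,Y)+g_B(X,\nabla_Yn)$, hence $2\,g_B(v,\nabla_vn)=(L_ng_B)(v,v)$, and it suffices to show that the symmetric tensor $L_ng_B$ vanishes on $\ker\sigma$. By bilinearity this can be checked on a local frame of $\ker\sigma$, and I would choose one made of \emph{foliate} fields: in a foliation chart $(t,x^1,\dots,x^q)$ the coordinate fields $\p_{x^i}$ are foliate (as $n\propto\p_t$ and $[\p_t,\p_{x^i}]=0$), hence so are their horizontal projections $Y_i:=\p_{x^i}-g_B(\p_{x^i},n)\,n$, using that the foliate property survives the addition of a multiple of $n$; and $\{Y_i\}$ spans $\ker\sigma$ at every point of the chart. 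For such fields $[n,Y_i]\propto n$ is $g_B$-orthogonal to $Y_j$, so $(L_ng_B)(Y_i,Y_j)=\p_n\!\big(g_B(Y_i,Y_j)\big)$, which vanishes by the very definition of a bundle-like metric. With the identity in hand, I would decompose $\dot\gamma=w+un$ with $w\in\ker\sigma$ (the coefficient of $n$ being $g_B(\dot\gamma,n)=u$, since $g_B(n,n)\equiv1$); then $g_B(n,n)\equiv1$ gives $g_B(n,\nabla_Xn)=0$ for all $X$, the identity gives $g_B(w,\nabla_wn)=0$, and expanding $u'=g_B(w+un,\nabla_{w+un}n)$ collapses everything to $u'(t)=u(t)\,g_B(\dot\gamma(t),(\nabla_n n)|_{\gamma(t)})$, i.e.\ $u'=c(t)\,u$ with $c$ continuous; uniqueness then yields $u\equiv0$, so $\gamma$ stays orthogonal to every leaf it meets.

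For the constant-distance part I would fix a leaf $L$ and take $L'$ close enough that the nearest point of $L'$ to each point of $L$ is unique and varies smoothly (this is what ``sufficiently close'' buys). The realizing $g_B$-geodesics meet $L'$ orthogonally by first variation, hence by the above are orthogonal to every leaf they cross, in particular to $L$. For $p,q\in L$ joined by a path $c$ in $L$, I would differentiate $\rho(s):=d_{g_B}(c(s),L')$: the first-variation formula gives $\rho'(s)=g_B(\dot\gamma_s(\rho(s)),\dot e(s))-g_B(\dot\gamma_s(0),c'(s))$, where $e(s)\in L'$ is the moving foot; the first term vanishes because $\dot e(s)\in TL'$ while $\gamma_s\perp L'$, and the second because $\dot\gamma_s(0)\perp L$ while $c'(s)\in TL$. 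Hence $\rho$ is constant, so $d_{g_B}(\,\cdot\,,L')$ is constant on $L$. For the final parenthetical, note that along a horizontal curve $\dot c\in\ker\sigma$, so $g_B(\dot c,\dot c)=g_T(\dot c,\dot c)$; thus the length of these orthogonal geodesics — hence the leaf distance — is computed from $g_T$ alone, and comparing with horizontal lifts of transverse paths identifies it with the intrinsic transverse distance $\inf_c\int\sqrt{g_T(\dot c,\dot c)}$, which manifestly involves only $g_T$.

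I expect the main obstacle to be the pointwise identity of the second paragraph, and more precisely the passage from the bundle-like hypothesis — which only constrains $\p_ng_B$ on foliate fields orthogonal to $n$ — to the tensorial vanishing of $L_ng_B$ on all of $\ker\sigma$: this is exactly where one must build a foliate frame of the horizontal distribution and exploit the stability of foliateness under adding multiples of $n$. Everything downstream (the linear ODE, and the first-variation argument for equidistance) is then routine, the only mild care being the smooth dependence of the nearest-point foot on the base point, which the ``sufficiently close'' hypothesis is designed to guarantee.
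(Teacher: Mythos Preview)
Your argument is correct. The paper does not give a detailed proof of this theorem: it attributes the result to Reinhart and offers only a one-sentence justification, namely that in a simple cylindrical neighborhood a horizontal geodesic is the horizontal lift of a geodesic on the local quotient $(B,g_T)$; in other words, the paper appeals to the standard Riemannian-submersion fact that horizontal geodesics project to geodesics and conversely that geodesics lift to horizontal geodesics, so a curve that starts horizontal stays horizontal.

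Your route is genuinely different and more self-contained. Rather than invoking the submersion picture, you run a direct ODE argument on $u(t)=g_B(\dot\gamma,n)$, reducing everything to the pointwise identity $(L_n g_B)\vert_{\ker\sigma}=0$, which you correctly promote from the bundle-like hypothesis (stated only for foliate horizontal fields) to a tensorial statement by constructing a foliate frame of $\ker\sigma$ in a foliation chart and using that $L_n g_B$ is a $(0,2)$-tensor. The decomposition $\dot\gamma=w+un$ with $g_B(n,n)=1$ then collapses $u'$ to $u\,g_B(\dot\gamma,\nabla_n n)$, and the conclusion follows. The paper's sketch is shorter if one already has the O'Neill/Reinhart submersion machinery on hand; your approach avoids that import, is entirely intrinsic to $H$, and makes explicit that the analytic content of ``bundle-like'' is precisely the vanishing of the tangential second fundamental form $g_B(\nabla_{\cdot}\,n,\cdot)$ on the horizontal distribution. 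Your first-variation argument for the equidistance statement is likewise standard and correct, and your observation that horizontal lengths see only $g_T$ matches the paper's parenthetical remark.
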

Indeed, if we look at the local transverse geodesic from a simple cylindrical neighborhood, as introduced above, the geodesic is the horizontal lift of a geodesic on the local quotient $(B, g_T)$ (well, $g_T$ projects to a well defined Riemannian metric on $B$, here denoted in the same way). The  property of the theorem actually characterizes bundle-like metrics  \cite{reinhart59}. The reader acquainted with the literature on horizons will recognize that Reinhart's horizontal exponential map appeared in the work by Bustamante and Reiris \cite{reiris21}.\\

\begin{remark}
The notion of bundle-like metric was rediscovered in horizon geometry through the notion of {\em rigging} \cite{katsuno80,gutierrez16,gutierrez21}, which is a vector field $N \in TM\vert_H$ transverse to $H$. Then $\sigma:=g(N,\cdot)$ and the Riemannian metric is defined as above. Although not uniquely determined, the notion of rigging, and the derived Riemannian metric approach, proved useful in the study of horizons.  More precisely, this approach has been used for general null hypersurfaces for which $(M,g_T)$ is not necessarily a Riemannian flow structure. We recall that in our work the null hypersufaces are totally geodesic, which guarantees that they induce a Riemannian flow.\\
\end{remark}

Let us consider a closed leaf  and let $p$ be one of its points. After one cycle the vector space $Q_p$ is linearly mapped to itself by the slide along the leaves. This is the holonomy of the leaf of the foliation.\\

\begin{remark}
For non Riemannian foliations the holonomy has a  more complicated definition. One needs to introduce a transversal $T$ to the leaf starting from $p$, consider a oriented closed path on the leaf, and the germ of diffeomorphisms from $T$ to $T$ obtained via the sliding along  the leaf \cite{molino88}. In the Riemannian foliation case such germ is determined by the tangential information. Indeed, it is possible to infer the map from leaves to leaves in a local neighborhood of $p$ by introducing a bundle-like metric. Then each leaf is locally determined by the horizontal exponential map of a certain vector $X \in T_pH$,  $\bar X \in Q_p$. The horizontal exponential map image of the sliding vector $X(s)$ belongs to the same   leaf for every $s$ \cite{reinhart59},  so, in order to describe the dynamics of nearby leaves, it is really only important to keep track of how the pointing vector $X(s)$ changes in the sliding along the leaf.\\
\end{remark}

If the holonomy is trivial, i.e.\ the identity, the leaf is said to be  {\em flat}. The foliation is {\em flat} if every leaf is flat \cite{reinhart59}.

If a foliation is simple it is flat. More generally, since the slide along the leaf preserves $g_T$, ultimately the holonomy is an endomorphism of $Q_p$ that preserves $g_T(p)$, thus it is an orthogonal transformation. This transformation  depends only on the foliated homotopy of the path followed.

The main use by Reinhart of Theorem \ref{kkr} was to prove that in Riemannian foliations every two leaves have  the same covering, a fact which is trivial for flows, as the real line covers every leaf. However, he obtains results that are non-trivial also for flows. For instance, he proves that every non-closed leaf has only non-closed nearby leafs \cite[Cor.\ 2]{reinhart59} and that each component of the set of regular leaves is a fibre space over a Hausdorff manifold \cite[Cor.\ 3]{reinhart59}. Furthermore, he proves that every regular leaf is flat, and that a flat foliation that is regular at one point is regular everywhere \cite[Lemmas 5,6]{reinhart59}.

A Satake manifold $S$ is a generalization of differentiable manifold in which the local model manifolds are not the open subsets of $\mathbb{R}^q$ but rather quotient of open sets of $\mathbb{R}^q$ by some finite group of diffeomorphism, see \cite[Sec.\ 3.4]{molino88} for an introduction. Therefore, the Satake manifold is  covered by open sets that are in bijection with such model spaces. A point $x\in S$ is said to be {\em regular} if it admits in the maximal atlas an open neighborhood in bijection with a standard open set of $\mathbb{R}^q$, otherwise it is called {\em singular}. If all points were regular the Satake manifold would be an ordinary manifold. It turns out that the subset of regular points is open and dense in $S$. Furthermore, at a singular point $p$ it is possible to find in the maximal atlas an open set in bijection with the quotient of an open set of $\mathbb{R}^q$ under the action of a finite subgroup of orthogonal transformations,  the point $p$ being mapped to the origin \cite[Lemma 3.6]{molino88}.

Another important result by  Reinhart is \cite{reinhart59} \cite[Prop.\ 3.8]{molino88}\\

\begin{theorem}
Let $(H,g_T)$ be a compact Riemannian flow structure in an $n$-dimensional manifold with closed leaves (that is, foliated by circles). The leaf space $S$ admits a natural structure of $n-1$-dimensional Satake manifold and the projection $H\to S$ is a morphism of Satake manifolds. Furthermore, the set of regular points in $S$ coincides with those leaves that have trivial holonomy. If the singular leaves are removed one is left with a simple foliation, i.e.\ a circle fibration over a standard manifold which is the regular subset of the original Satake base manifold.\\
\end{theorem}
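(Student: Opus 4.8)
The plan is to construct the Satake structure on $S$ leaf by leaf, using a bundle-like metric and Theorem~\ref{kkr} to produce the charts, in the spirit of Reinhart's and Molino's arguments. Fix a leaf $L_0$ with a point $p\in L_0$ and a transversal $T$ at $p$; by the remarks above, in the coordinates on $T$ furnished by Reinhart's horizontal exponential map (with respect to some global bundle-like metric $g_B$) the holonomy of the closed leaf $L_0$ is generated by a single linear map $\rho\in O(Q_p)\cong O(q)$, which is orthogonal because sliding along $L_0$ preserves $g_T(p)$. The first step is to show that $\rho$ has finite order. A nearby leaf $L_{\bar X}$, through $\exp^{\perp}_p(\bar X)$ with $\bar X\in T$ small, meets $T$ exactly in the holonomy orbit $\{\rho^k\bar X:k\in\mathbb Z\}$; if this orbit were infinite, $L_{\bar X}$ would meet the transversal $T$ transversally at infinitely many points and hence could not be a circle, contradicting the hypothesis that every leaf is closed. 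Thus every holonomy orbit near $p$ is finite, which forces $\rho$ to have finite order (an irrational-rotation block would produce vectors with infinite orbit). Hence $\Gamma:=\langle\rho\rangle$ is a finite subgroup of $O(q)$.

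Once $\Gamma$ is finite, local Reeb stability for the (Riemannian) foliation at the compact leaf $L_0$ applies: a saturated neighborhood of $L_0$ is foliated-diffeomorphic to the suspension $(\mathbb R\times D)/\mathbb Z$ of $\rho$ acting on a small $\rho$-invariant ball $D\subset Q_p$, with $L_0$ the central circle. Two points of the transversal $\exp^{\perp}_p(D)$ lie on the same leaf exactly when they are $\Gamma$-related, so the leaves through this neighborhood are in natural bijection with $D/\Gamma$. This provides a Satake chart about $[L_0]\in S$, with $[L_0]$ corresponding to the image of the origin, in which the projection $H\to S$ reads $(\mathbb R\times D)/\mathbb Z\to D\to D/\Gamma$ — a morphism of Satake manifolds. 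In particular $[L_0]$ is a regular point of $S$ iff $\Gamma=\{e\}$, i.e.\ iff $L_0$ has trivial holonomy (equivalently, is flat).

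It then remains to glue these charts and to check that $S$ is Hausdorff. Hausdorffness follows from the compactness of $H$ together with the assumption that all leaves are closed: the leaf geometry is then uniformly bounded, so distinct leaves possess disjoint saturated neighborhoods. For overlaps, if the chart-neighborhoods of $L_0$ and $L_1$ meet, Theorem~\ref{kkr} — geodesics issuing $g_B$-orthogonally from one leaf stay orthogonal to every leaf they meet — identifies the two horizontal exponential maps over the overlap with a smooth change of Euclidean coordinates on the transversals, which descends to the required local diffeomorphism relating $D_0/\Gamma_0$ and $D_1/\Gamma_1$; since the local group of a Satake chart is a conjugacy invariant and here coincides with the holonomy group of the corresponding leaf, this reconfirms ``regular point $\iff$ trivial holonomy''. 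Finally, deleting the singular (i.e.\ non-flat) leaves leaves the open dense saturated set $H_{\mathrm{reg}}$ of flat leaves, whose quotient is the manifold part $S_{\mathrm{reg}}$ of $S$; the restricted foliation is a submersion $H_{\mathrm{reg}}\to S_{\mathrm{reg}}$ with circle fibers, hence (fibers compact, Ehresmann) a locally trivial circle fibration, i.e.\ simple — consistent with Reinhart's results quoted above (every regular leaf is flat; a flat foliation regular at one point is regular everywhere; each component of the regular set fibers over a Hausdorff manifold).

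The step I expect to be the real obstacle is the overlap/compatibility argument: one must simultaneously linearize the holonomy data of all nearby leaves in a coherent way, and this is precisely where the Riemannian (bundle-like) structure, through Theorem~\ref{kkr}, is indispensable — for a general circle foliation the leaf space need not even be Hausdorff, let alone a Satake manifold. The finiteness of holonomy is the other essential input, but under the hypothesis that all leaves are closed it reduces to the short transversality argument above; and I would import the local stability of the compact leaf $L_0$ from foliation theory rather than reprove it.
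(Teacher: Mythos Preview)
The paper does not prove this theorem; it is quoted as a known result and attributed to Reinhart \cite{reinhart59} and Molino \cite[Prop.\ 3.8]{molino88}, so there is no in-paper proof to compare against. Your sketch is essentially the standard argument one finds in those references: linearize the holonomy at a closed leaf via a bundle-like metric and the horizontal exponential map, use closedness of all leaves to force the holonomy to have finite order, invoke local Reeb stability to obtain the local model $D/\Gamma$ with $\Gamma\subset O(q)$ finite, and then glue using Theorem~\ref{kkr}. That is the right route and the right ingredients.

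Two small comments on execution. First, your Hausdorffness justification (``leaf geometry is uniformly bounded'') is vague; the clean statement is that, with a bundle-like metric on the compact $H$, distinct closed leaves are at positive transverse distance (Theorem~\ref{kkr} makes this distance well defined and constant along the leaves), so disjoint saturated tubular neighborhoods exist. Second, in the finiteness-of-holonomy step you should say explicitly why an infinite $\Gamma$-orbit on the transversal contradicts the leaf being a circle: a compact one-dimensional leaf can meet a sufficiently small transversal only finitely many times, and in the Riemannian setting the intersection with $\exp^\perp_p(D)$ is exactly the $\Gamma$-orbit --- this is where the bundle-like metric (not just a general foliation) is already being used, not only in the overlap argument.
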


Another result that can be metioned is the fact that if, in a compact connected Riemannian flow structure,  a leaf is  closed (a circle) with trivial holonomy, then all the other leaves are circles \cite[p.\ 100 (point 4)]{molino88} and so the previous theorem applies.

It can be here recalled that by Epstein's theorem \cite{epstein72}, a compact oriented 3-manifold foliated by circles is diffeomorphic to a Seifert fibration. In the latter structure we find the same notions of regular and singular leaves which correspond to those of the above construction via the Satake quotient manifold. It should be noted that Reinhart's theorem is better adapted to our purposes as it does not impose any constraint on the dimension. Epstein's result requires a condition on dimension (luckily useful in the physical 4-dimensional spacetime case, as horizons are 3-dimensional) as it does not use the existence of the transverse metric $g_T$.

Epstein has also studied flows with closed orbits on 3-manifolds showing that they admit an $S^1$ action. By an averaging process the action is actually isometric and hence the flow is Riemannian  \cite[Remark 3]{carriere84}.

\subsection{Transverse Levi-Civita connection}

The {\em Bott connection} is a partial connection on the bundle $Q\to H$ defined by
\[
\mathring{\nabla}_{n} \bar X= \overline{ L_{n} X}.
\]
It is {\em partial} because the covariant derivative and transport are defined just along the leaves. This definition implies that the Bott parallel transport coincides with the slide along the leaf defined previously.

Assigned the transverse metric $g_T$, $L_n g_T=0$, there is a canonical {\em transverse Levi-Civita connection} for the vector bundle $Q\to H$, see \cite[Thm.\ 5.9]{tondeur97} and \cite[Sec.\ 3.3]{molino88} for a frame bundle formulation. In a simple cylindrical neighborhood $I
\times B$, as introduced above, the transverse Levi-Civita connection is nothing but the pullback connection of the Levi-Civita connection in $(B, g_T)$. In particular, in the direction of the leaves the corresponding transport coincides with the {\em sliding along the leaf}. In a more satisfactory non-local notation, the transverse Levi-Civita connection $\nabla^T$ is characterized by metric compatibility and absence of torsion \cite[Eq.\ (3.7)]{tondeur97} as follows  (here $X,Y$ are not necessarily foliate fields)
\begin{align*}
0&=(\nabla^T g_T)(\bar X,\bar Y):= \dd (g_T(X,Y))-g_T(\nabla^T  \bar X, \bar Y)-g_T( \bar X,\nabla^T  \bar Y), \\
0&=\nabla^T_X \bar Y-\nabla^T_Y \bar X-\overline{[X,Y]}
\end{align*}
where, once again, we do not change symbol to denote the scalar product on $Q_p$ (this should not cause confusion since the transverse metric is annihilated by $n$). The transverse Levi-Civita connection is  compatible with Bott's, namely {\em adapted}, $\nabla^T_n  \bar X=\overline{L_n X}$, and it is also {\em holonomy invariant} \cite[Thm.\ 5.11]{tondeur97} in the sense that  $L_n \nabla^T=0$ where we recall that \cite[Eq.\ 3.5]{tondeur97} (this object is tensorial  in $n, X,\bar Y$)
\[
(L_n\nabla^T)(X)\bar Y =\overline{L_n(\nabla^T_X \bar Y)}-\nabla^T_{L_nX} \bar Y-\nabla^T_X \overline{L_n Y}.
\]
This property could be expected as $L_n g_T=0$ and $\nabla^T$ is uniquely determined from $g_T$. A Koszul type formula for the transverse Levi-Civita connection can be provided \cite[Eq.\ (5.10)]{tondeur97} which can be  used as a definition. Alternatively, one can define a bundle-like metric $g_B$ and obtain the transverse Levi-Civita connection from the Levi-Civita connection $\nabla^B$ as follows \cite[Eq.\ (3.3)]{tondeur97}
\begin{align*}
\nabla^T_n \bar X&=\overline{L_n X},\\
\nabla^T_Y \bar X&=\overline{\nabla^B_Y X}, \qquad \textrm{for} \ Y: \ g_B(Y,n)=0.
\end{align*}
We described the transverse Levi-Civita connection because our aim is to clarify the connection between results in horizon geometry and Riemannian foliation theory. As previously observed, in horizon geometry we have already a connection $\nabla$ so it is natural to ask about the relationship between $\nabla^T$ and $\nabla$. The answer  helps translating results from one subject to the other.\\

\begin{proposition}
On horizon geometry we have $\nabla^T_Y \bar X= \overline{\nabla_Y X}$ where $\bar X \in \Gamma Q$, $X,Y\in
\Gamma TH$.
\end{proposition}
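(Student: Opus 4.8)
\section*{Proof proposal}

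The plan is to show that the assignment $D_Y \bar X := \overline{\nabla_Y X}$, where $X \in \Gamma TH$ is any representative of $\bar X \in \Gamma Q$, is a well-defined linear connection on $Q \to H$ which is metric-compatible and torsion-free, and then to invoke the uniqueness of the transverse Levi-Civita connection (characterized precisely by these two properties, as recalled above) to conclude $D = \nabla^T$.

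First I would check well-definedness, i.e.\ independence of the chosen representative. If $X' = X + f n$ for some $C^1$ function $f$, then, using the totally geodesic identity $\nabla n = n \otimes \omega$, one gets $\nabla_Y X' = \nabla_Y X + (Yf)\, n + f\,\omega(Y)\, n$, and the last two terms lie in $L = \mathrm{span}(n)$, hence are killed by the projection $TH \to Q$; so $\overline{\nabla_Y X'} = \overline{\nabla_Y X}$. Tensoriality of $D$ in $Y$ is inherited from $\nabla$ together with linearity of the projection, and the Leibniz rule $D_Y(f \bar X) = (Yf)\bar X + f\, D_Y \bar X$ follows from the Leibniz rule for $\nabla$; thus $D$ is a genuine connection on $Q$.

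Next I would verify the two defining properties of $\nabla^T$. Since $\ker g_T = \mathrm{span}(n)$, the form $g_T$ descends to a positive definite metric on $Q$, for which I keep the same symbol, with $g_T(X,Y) = g_T(\bar X,\bar Y)$. Metric compatibility of $D$ is then immediate from $\nabla g_T = 0$: for $Z,X,Y \in \Gamma TH$ one has $Z(g_T(X,Y)) = g_T(\nabla_Z X, Y) + g_T(X, \nabla_Z Y) = g_T(\overline{\nabla_Z X},\bar Y) + g_T(\bar X, \overline{\nabla_Z Y}) = g_T(D_Z \bar X, \bar Y) + g_T(\bar X, D_Z \bar Y)$, the middle equality using that $g_T$ annihilates $n$. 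Torsion-freeness is equally direct: $D_X \bar Y - D_Y \bar X = \overline{\nabla_X Y - \nabla_Y X} = \overline{[X,Y]}$, since the induced connection $\nabla$ on the totally geodesic horizon is the restriction of the torsion-free spacetime Levi-Civita connection. By uniqueness of the transverse Levi-Civita connection, $D = \nabla^T$. As a byproduct one recovers adaptedness to the Bott connection, because $L_n X = \nabla_n X - \nabla_X n = \nabla_n X - \omega(X)\, n$ gives $\overline{L_n X} = \overline{\nabla_n X} = D_n \bar X$.

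The computations are all routine; the only point that genuinely requires care is the well-definedness step, where the totally geodesic identity $\nabla n = n \otimes \omega$ is exactly what is needed so that changing the representative of $\bar X$ leaves $\overline{\nabla_Y X}$ unchanged — this is the place where the horizon structure (rather than an arbitrary affine connection on $H$) enters. A secondary bookkeeping subtlety is the systematic identification of $g_T$ on $TH$ with the induced metric on $Q$, which must be used consistently when matching the characterization of $\nabla^T$ recalled above.
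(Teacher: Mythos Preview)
Your proof is correct and follows essentially the same route as the paper: define $D_Y\bar X:=\overline{\nabla_Y X}$, check well-definedness via $\nabla_Y n=\omega(Y)n$, verify metric compatibility from $\nabla g_T=0$ and torsion-freeness from that of $\nabla$, then invoke uniqueness of the transverse Levi-Civita connection. Your version is slightly more detailed (you spell out tensoriality, the Leibniz rule, and the adaptedness to the Bott connection as a byproduct), but the argument is the same.
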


\begin{proof}
To start with, the right-hand side of the equation is well posed since if $X'=X+fn$ then $\nabla_Y X'-\nabla_Y X\propto n$ as $\nabla_Y n=\omega(Y) n$.
Since $g_T$ is the induced metric and $\nabla$ is the induced connection, they are compatible, thus the right-hand side defines a connection on $Q\to H$ which is metric compactible, indeed
\begin{align*}
&\dd (g_T(X,Y))-g_T(\overline{\nabla X}, \bar Y)-g_T( \bar X, \overline{\nabla Y})\\
&=\dd (g_T(X,Y))-g_T(\nabla X,  Y)-g_T( X,\nabla Y)=(\nabla g)(X,Y)=0
\end{align*}

It is also symmetric, again by the symmetry of $\nabla$,
\[
\overline{\nabla_X Y}-\overline{\nabla_Y X}-\overline{[X,Y]}=\overline{T^\nabla(X,Y)}=0.
\]
Thus, the right-hand side satisfies all the conditions for the transverse Levi-Civita connection and the latter is unique.
\end{proof}

The reader is referred to the book by Tondeur \cite{tondeur97} for other results on $\nabla^T$ that might  find application to horizons, and to the paper by Guti\'errez and Olea \cite{gutierrez16} for a related spacetime perspective.

\subsection{Isometric flows}

\begin{definition}
A flow  is {\em isometric} if  there is a Riemannian metric $h$ and a choice for the vector field $n$,  such that $L_n h=0$.\\
\end{definition}

\begin{remark}
When considering an isometric flow it is not restrictive to assume $h(n,n)=1$. It is sufficient to observe that $h'=h/h(n,n)$ satisfies $L_n h'=0$ and $h'(n,n)=1$. An isometric flow is thus represented by a unit Killing field over  a Riemannian space.\\
\end{remark}

Every isometric flow is a Riemannian flow, indeed, $g_T:=h- n^*\otimes n^*$, $n^*:=h(n,\cdot)$ has the desired properties of having signature $(0,+,\ldots,+)$, with kernel $\textrm{span}(n)$, and $L_n g_T=0$.

%
%

An important example of isometric flow is the following \cite[Example I.B.4]{carriere84}\\

\begin{example} \label{ort}
The sphere $S^3$ seen as a subset of $\mathbb{C}^2$, $\vert z_1\vert^2+ \vert z_2\vert^2=1$ with the metric induced from $\vert \dd z_1\vert^2+ \vert \dd z_2\vert^2$  admits the isometric flow $\varphi_t: (z_1,z_2) \mapsto (e^{i \lambda t} z_1, e^{i \mu t} z_2)$ where $\lambda$ and $\mu$ are non-zero real numbers with irrational ratio. There are just two closed orbits defined by $z_1=0$ for one, and $z_2=0$ for the other. The other orbits have as closure a torus of equation $\vert z_1\vert=a$, $0<a<1$. The lens space $L(p,q)$ is the quotient of $S^3$ under the action $(z_1,z_2) \mapsto (e^{2\pi i/p } z_1, e^{2\pi i/q} z_2)$ where $p$ and $q$ are non-zero integers.  The isometric flow $\varphi_t$ passes to an isometric flow on the quotient $L(p,q)$.\\
\end{example}

\begin{definition}
A flow  is {\em geodesible} if  there is a Riemannian metric $h$ and a choice for the vector field $n$,  such that $\nabla_n^h n=0$ and $h(n,n)=1$, where $\nabla^h$ is the Levi-Civita connection for $h$.\\
\end{definition}

In other words, the $h$-length of the orbits is stationary under small perturbations, i.e.\ the orbits of the flow are $n-1$-codimensional  minimal surfaces for the metric $h$ (the foliation is {\em taut}).

The following result improves \cite[Prop.\ III.B.1]{carriere84} \cite{molino85} in the last statement.\\

\begin{proposition} \label{kqf}
For a Riemannian flow on an $n$-dimensional manifold $H$ the following conditions are equivalent:
\begin{itemize}
\item[(i)] it is possible to choose $n$ and a field $P$ of transverse hyperplanes  that is left invariant by the flow of $n$ (i.e if $X\in P$ then $d\varphi_t(X) \in P)$,
\item[(ii)] it is possible to choose $n$ and a 1-form  $\sigma$ such that $\sigma(n)=1$, $L_n \sigma=0$,
\item[(iii)] it is isometric,
\item[(iv)] it is geodesible.
\end{itemize}
If $H$ is connected, compact and oriented, they are also equivalent to:
\begin{itemize}
\item[(v)] $H^{n-1}_b(\mathcal{F})\ne 0$.
\end{itemize}
Suppose (i)-(iv) hold, let $g_T$ be a preassigned Riemannian flow structure and let $P$ be as in (i). Then $n$ and a 1-form field $\sigma$ that satisfies (ii)  can be chosen so that  $n$ is a unit Killing field (hence geodesic) for the bundle-like metric $g_T+\sigma \otimes \sigma$ and $P=\ker \sigma$.\\
\end{proposition}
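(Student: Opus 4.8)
The plan is to build $n$ and $\sigma$ directly from the data furnished by (i), leaving the preassigned transverse metric $g_T$ entirely untouched; this last point is the whole of the improvement over \cite{carriere84,molino85}. First I would take the field $n$ and the transverse hyperplane field $P$ that (i) provides, so that the flow of $n$ leaves $P$ invariant, i.e.\ $[n,\Gamma(P)]\subseteq \Gamma(P)$. Since $n$ is a positive section of the kernel $\ker g_T = D$ and, as recalled above, the condition $L_{n'}g_T=0$ is invariant under every positive rescaling of the generator $n'$ (because $g_T(n',\cdot)=0$), we have $L_n g_T=0$ for exactly this $n$; hence no averaging or deformation of $g_T$ is needed.

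Next I would let $\sigma$ be the unique $1$-form on $H$ with $\ker\sigma=P$ and $\sigma(n)=1$, which exists (with the natural regularity) because $TH=P\oplus\textrm{span}(n)$. The claim is that $(n,\sigma)$ already realizes (ii). Indeed $(L_n\sigma)(n)=\p_n(\sigma(n))-\sigma(L_n n)=\p_n(1)-0=0$, while for $X\in\Gamma(P)$ one computes $(L_n\sigma)(X)=\p_n(\sigma(X))-\sigma([n,X])=-\sigma([n,X])$, which vanishes since flow-invariance of $P$ gives $[n,X]\in\Gamma(P)=\ker\sigma$. As $n$ and $\Gamma(P)$ span $TH$ pointwise, this gives $L_n\sigma=0$, together with $\sigma(n)=1$ and $\ker\sigma=P$.

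Finally I would set $g_B:=g_T+\sigma\otimes\sigma$. Since $g_T$ is a transverse metric with $n\in\ker g_T$ and $\sigma(n)=1>0$, the characterization of bundle-like metrics recalled above shows that $g_B$ is a bundle-like metric with characteristic form $\sigma$, and $\ker\sigma=P$ by construction. Then $g_B(n,n)=g_T(n,n)+\sigma(n)^2=0+1=1$ and $L_n g_B=L_n g_T+(L_n\sigma)\otimes\sigma+\sigma\otimes(L_n\sigma)=0$, so $n$ is a $g_B$-unit Killing field. That a unit Killing field is geodesic then follows from the Killing identity $g_B(\nabla^B_X n,Y)+g_B(X,\nabla^B_Y n)=0$ with $Y=n$: since $g_B(\nabla^B_X n,n)=\frac{1}{2} X(g_B(n,n))=0$, one gets $g_B(X,\nabla^B_n n)=0$ for all $X$, hence $\nabla^B_n n=0$.

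I do not expect a genuine obstacle here: the content is organizational rather than technical. The two facts that make everything fit while keeping $g_T$ \emph{preassigned} are the scaling-invariance of $L_n g_T=0$, which lets us use precisely the $n$ coming from (i), and the fact that $\sigma$ is manufactured by pointwise linear algebra out of $P$ alone, after which both $\sigma(n)=1$ and $L_n\sigma=0$ come for free. The only step requiring any care is the identity $L_n\sigma=0$, and that is exactly where the flow-invariance of $P$ (equivalently $[n,\Gamma(P)]\subseteq\Gamma(P)$) is used.
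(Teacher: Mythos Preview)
Your argument for the final compatibility clause is correct and is exactly the paper's route: define $\sigma$ by $\ker\sigma=P$ and $\sigma(n)=1$, verify $L_n\sigma=0$ from the flow-invariance of $P$, and then observe that $g_B=g_T+\sigma\otimes\sigma$ makes $n$ a unit Killing (hence geodesic) field---this is what the paper means by ``just concatenate the first two implications above.'' The only difference is scope: the paper also spells out the cycle $(i)\Rightarrow(ii)\Rightarrow(iii)\Rightarrow(iv)\Rightarrow(i)$ in the body of the proof (invoking Sullivan for $(iv)\Rightarrow(i)$) and cites Molino--Sergiescu for $(iii)\Leftrightarrow(v)$, whereas you defer all of the equivalences to \cite{carriere84,molino85}; since you flag this explicitly, it is a presentational choice rather than a mathematical gap.
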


In short, the metrics (and forms) that realize the properties can all be chosen `at once' and compatibly with the Riemannian flow structure. This is important in our horizon interpretation, as we have assigned not only a Riemannian foliation, but also the bilinear form $g_T$.

It can be observed \cite{gluck79}  that (ii)  is equivalent to $\sigma(n)>0$ and $i_n \dd \sigma=0$, it is sufficient to rescale $n$.

\begin{proof}
$(i)\Rightarrow (ii)$.
The form $\sigma$ defined by $\ker \sigma=P$ and $\sigma(n)=1$ is such that $L_n \sigma=0$. This can be seen using $(L_n\sigma)(X)=L_n(\sigma(X))-\sigma(L_nX)$, and replacing $X=n$ or $X=e_i$ with $e_i$ basis for $P$ propagated in such a way that $L_n e_i=0$.

$(ii)\Rightarrow (iii)$. Let $g_T$ be such that $L_n g_T=0$ for some choice, and hence every choice, of $n$. In particular, the equation is true for the choice of point (ii).  Thus the metric $h:=g_T+\sigma \otimes \sigma$ is such that $L_nh=0$ (and such that $h(n,n)=1$).

$(iii)\Rightarrow (iv)$. Suppose $L_nh=0$ then $h':=h/h(n,n)$ is  such that $L_n h'=0$, $h'(n,n)=1$, and a  Killing vector field of unit norm is geodesic.

$(iv)\Rightarrow (i)$. This is a result by Sullivan \cite{sullivan79,gluck79}.
Let $h$ and $n$ be such that $\nabla^h_n n=0$ and $h(n,n)=1$. Let $P=\ker h(n, \cdot)$, note that $P$ is transverse to $n$. Let $X$ be a vector field such that $L_n X=0$
\begin{align*}
L_n (h(n,X))&=\nabla_n^h (h(n,X))=h(\nabla_n^h n, X)+h(n, \nabla_n^h X)=h(n,\nabla_X^h n)\\
&=\frac{1}{2}   X (h(n,n))=0,
\end{align*}
which proves that $P$ is preserved.

The equivalence $(iii) \Leftrightarrow (v)$ is due to Molino and Sergiescu \cite{molino85}.

For the last statement just concatenate the first two implications above.
\end{proof}

\begin{remark} \label{pem}
Let us consider a Riemannian flow and let $g_B$ be a bundle-like metric. If we are in the isometric case $n$ and $g_B$ can be chosen such that, $L_n g_B=0$, and we can rescale $g_B$ so that, without loss of generality, $g_B(n,n)=1$. Let $\chi=g_B(n,\cdot)$ be the characteristic form,  then, as $\chi$ is determined by $g_B$, $L_n \chi=0$, and so $\dd \chi$ is basic.

In the more general Riemannian flow case, let $n$ be such that $g_B(n,n)=1$ and let $\chi=g_B(n,\cdot)$, so that $g_B=g_T+\chi \otimes \chi$. We define the {\em mean curvature 1-form} $\mu:=L_n \chi=i_n \dd \chi$, which has the property $i_n \mu=0$, and the  {\em Euler 2-form}
$\mathfrak{e}:= \dd \chi-\chi \wedge \mu$ which has also the property $i_n \mathfrak{e}=0$. Clearly, these forms depend only on $\chi$ and not on $g_T$, where $\chi$ is a 1-form positive on the flow (so $n$ can be normalized from $\chi$).

The terminology for $\mu$ follows from the following calculation with $X\in TH$
\begin{align*}
g_B(\nabla^B_n n,X)&=  {n (g_B(n,X))}-g_B(n, \nabla^B_n X)=\mu(X)+g_B(n, L_nX)-g_B(n, \nabla^B_n X)\\&=\mu(X)+g_B(n,\nabla^B_X n)=\mu(X),
\end{align*}
which can also be written
\[
\nabla^B_n n=\mu^{\sharp_B}.
\]
That is, $\mu$ is the 1-form dual to the acceleration.

 Suppose that $H$ is connected and compact. A deep theorem of foliation theory \cite{dominguez98}  states that $\mathfrak{e}$ and $\mu$ are basic for some choice of $g_B$ (or better $ \chi$) and in this case $\mu$ is closed (actually we have always $\dd\mu_b=0$ for the basic component of the mean curvature form). The {\em {\'A}lvarez class} $[\mu]\in H^1_b(\mathcal{F})$ is important as it depends only on the foliation. We have that the foliation is isometric iff $[\mu]=0$, see \cite[Thm.\ 6.4]{alvarez92}, a result which provides a sixth characterization of isometric flows. The Euler class is also important and clarifies the relationship between cohomology and basic cohomology \cite{royo01}.\\
\end{remark}

Another result by Molino and Sergiescu \cite{molino85} which is worth recalling states that a Riemannian flow on a connected compact manifold admits a cross-section (i.e.\ it comes from a suspension) if and only if the natural map $H^{n-1}_b(\mathcal{F})\to H^{n-1}(\mathcal{F})$ is non-zero. In other words if the foliation comes from a suspension there is a basic $n-1$-form which is not the differential of a, not necessarily basic, $n-2$-form. In the suspension case this form exists and is provided by the ($n-1$-)volume of the transversal lifted to $H$.

If $(H,\mathcal{F})$ is a suspension then its Riemannian flow is geodesible and isometric \cite{gluck79}. Indeed, consider the manifold $[0,1]_t\times T$, where $H$ is obtained identifying $(1,p)$ with $(0, \phi^{-1}(p))$, then the metric $\dd t^2 \oplus g_T$, with $g_T$ Riemannian metric on $T$, is well defined also after the identification,  $n=\p/\p t$ is unit Killing for it, thus its integral curves are geodesics.

Under the dominant energy condition $L_n \omega=\dd \kappa$ \cite[Lemma 6]{minguzzi21} thus, taking into account that non-degenerate compact horizons have a surface gravity that can be normalized to a non-zero constant  (and conversely) \cite{petersen18b,minguzzi21,reiris21}, we get\\

\begin{corollary} \label{mqp}
Under the dominant energy condition  the Riemannian flow on non-degenerate compact horizons is isometric.
\end{corollary}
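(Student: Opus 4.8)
The plan is to reduce the statement to condition (ii) of Proposition \ref{kqf}: if, for some choice of the generator $n$, there is a 1-form $\sigma$ with $\sigma(n)=1$ and $L_n\sigma=0$, then the Riemannian flow is isometric. So it suffices to produce such a pair $(n,\sigma)$ adapted to the horizon data.

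First I would exploit non-degeneracy. Since $H$ is a compact non-degenerate horizon, by the results recalled above (\cite{reiris21,minguzzi21}, which operate in the smooth category through the ribbon argument) the future-directed lightlike generator can be rescaled so that the surface gravity $\kappa=\omega(n)$ is a non-zero constant on $H$; fix such an $n$, with $\kappa\equiv c\neq 0$, and let $\omega$ be the associated 1-form.

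Next I would bring in the dominant energy condition in the form $L_n\omega=\dd\kappa$ (\cite[Lemma 6]{minguzzi21}, quoted above). Because $\kappa$ is now constant, $\dd\kappa=0$, hence $L_n\omega=0$. Setting $\sigma:=\omega/\kappa=\omega/c$, which is a well-defined 1-form since $c$ is a non-zero constant, one obtains at once $\sigma(n)=\omega(n)/c=1$ and $L_n\sigma=(L_n\omega)/c=0$. This is exactly hypothesis (ii) of Proposition \ref{kqf}, so by $(ii)\Leftrightarrow(iii)$ there the Riemannian flow of $H$ is isometric. Using the concluding clause of that proposition one can moreover say that the bundle-like metric $g_B:=g_T+\sigma\otimes\sigma$, built from the preassigned transverse metric $g_T$, makes $n$ a unit Killing (hence geodesic) field; equivalently, in the language of Remark \ref{pem}, the characteristic form of $g_B$ is $\chi=\sigma$, so the mean curvature 1-form $\mu=L_n\chi$ vanishes and the {\'A}lvarez class $[\mu]\in H^1_b(\mathcal{F})$ is zero --- the sixth characterization of isometric flows.

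I do not expect a genuine obstacle inside the argument, which is a one-line computation once $n$ is normalized: the real content has been imported wholesale. The delicate ingredient is the external input that $n$ can be rescaled to constant non-zero surface gravity, which is the substantive theorem of \cite{reiris21,minguzzi21}; the identity $L_n\omega=\dd\kappa$ under the dominant energy condition is the other. The single point worth a remark is consistency of hypotheses: the dominant energy condition implies the null energy condition, under which compact Cauchy horizons are known to be totally geodesic, so imposing both the dominant energy condition and the totally geodesic structure assumed for a horizon introduces no tension.
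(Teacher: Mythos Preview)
Your argument is correct and is essentially identical to the paper's own proof: normalize $n$ so that $\kappa$ is a non-zero constant, use $L_n\omega=\dd\kappa=0$ under the dominant energy condition, set $\sigma=\omega/\kappa$, and invoke Proposition~\ref{kqf}(ii). The paper also records the \'Alvarez-class route you mention as a second proof.
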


\begin{proof}
(first proof) Let $n$ be such that $\kappa \ne 0$ is constant. Let $\sigma:= \omega/\kappa$, then $L_n \sigma=0$, $\sigma(n)=1$, and property (ii) of Prop.\ \ref{kqf} is satisfied.

(second proof) Let $n$ be such that $\kappa \ne 0$ is constant. Let $\chi:= \omega/\kappa$, then $\mu=i_n \dd \chi=i_n \dd \omega/k=0$, hence $[\mu]=0$ which by Alvarez's result implies that the flow isometric.
\end{proof}

\begin{remark}
It is not true that  under the dominant energy condition the compact horizons whose Riemannian flow is isometric are necessarily non-degenerate. For instance, consider  an extremal (Kerr or Kerr-Newmann) black hole and compactify the horizon with the usual trick \cite{friedrich99}. The horizon is a trivial $S^1$-bundle (e.g.\ \cite{minguzzi24}) hence a suspension, hence isometric. More simply, one can obtain a stack of degenerate horizons from the flat metric $\dd u \dd x+\dd y^2+\dd z^2$ with the identifications $x+1\sim x$, $y+1\sim y$, $z+1\sim z$. The obtained horizons $u=const.$ admit the bundle-like metric $\dd x^2+\dd y^2+\dd z^2$ invariant under the Killing field $\p/\p x$ and so the Riemannian flows are isometric. Yet the lightlike geodesics running over them are clearly complete.\\
\end{remark}

The previous result Cor.\ \ref{mqp} was noted first in the vacuum case by Petersen \cite{petersen18b} and used by Bustamante and Reiris \cite{bustamante21} to classify non-degenerate horizons. Indeed, one can take advantage of the Myers-Stenrood theorem, which states that the isometry group of a smooth compact Riemannian
manifold is a compact Lie group. The full machinery of Lie group theory can then be applied \cite{isenberg92,alexandrino15}.

We note that it is possible to infer that the Riemannian flow is isometric without any non-degeneracy assumption by imposing the simple connectedness of the horizon. This fact follows from a result on Riemannian foliation theory due to Ghys \cite{ghys84} \cite[App.\ B, Cor.\ 2.5]{molino88}\\

\begin{proposition} \label{kthf}
On a compact simply connected $n$-dimensional manifold every Riemannian flow is isometric.\\
\end{proposition}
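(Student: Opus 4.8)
The plan is to bypass Ghys's structural analysis and deduce the statement directly from the basic-cohomology criterion for isometric flows recalled in Remark~\ref{pem}: for $H$ connected and compact, a Riemannian flow is isometric if and only if its \'Alvarez class $[\mu]\in H^1_b(\mathcal{F})$ vanishes. So it is enough to show that $H^1_b(\mathcal{F})=0$ as soon as $H$ is simply connected; then the \'Alvarez class, living in a trivial group, is automatically zero.

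The key point is that the natural ``forgetful'' map $H^1_b(\mathcal{F})\to H^1_{dR}(H;\mathbb{R})$ is injective, and I would argue this by hand. Let $\alpha$ be a closed basic $1$-form which is de~Rham exact, say $\alpha=\dd f$ with $f\in C^\infty(H)$. Since $\alpha$ is basic, $\iota_n\alpha=0$, hence $\p_n f=\dd f(n)=\alpha(n)=0$, so $f$ is constant along every leaf. A leafwise constant function is precisely a basic $0$-form (one also has $\iota_n\dd(\dd f)=0$ trivially), so $\alpha=\dd f$ is already exact in the basic complex, i.e.\ $[\alpha]=0$ in $H^1_b(\mathcal{F})$. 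Hence the map is injective.

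Now I would conclude: if $H$ is simply connected then $H^1_{dR}(H;\mathbb{R})=0$ (only the vanishing of the first Betti number is actually used), so by injectivity $H^1_b(\mathcal{F})=0$, in particular the \'Alvarez class $[\mu]$ vanishes, and by the \'Alvarez L\'opez criterion quoted in Remark~\ref{pem} the flow is isometric. Feeding this back into Proposition~\ref{kqf} one recovers all of the equivalent properties (i)--(v); in horizon terms, a simply connected compact horizon carries a lightlike field $n$ which is a unit Killing field for a suitable bundle-like metric $g_T+\sigma\otimes\sigma$, with $P=\ker\sigma$ the associated invariant transverse hyperplane field.

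I do not expect any genuine obstacle inside this argument: the only nontrivial ingredient is the \'Alvarez-class characterisation of isometric flows, which itself rests on the Dom\'inguez--\'Alvarez-L\'opez theory of bundle-like metrics with basic mean curvature already invoked in Remark~\ref{pem}; everything else is the one-line cohomological remark above. If one insisted on a proof independent of that characterisation, the alternative would be Molino's structure theory for Riemannian flows on compact manifolds (leaf closures are tori carrying a linear flow): one would cover $H$ by the obvious torus-invariant transverse hyperplane fields and try to glue them into a global one, using simple connectedness to annihilate the gluing cocycle. That patching step is where the real difficulty would sit, and it is essentially the content of the Ghys--Molino result cited in the statement.
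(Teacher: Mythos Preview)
Your argument is correct. The paper does not supply its own proof of this proposition; it simply quotes Ghys's 1984 theorem (via Molino's book). Your route is genuinely different: instead of Ghys's structural analysis, you combine the \'Alvarez L\'opez criterion already recorded in Remark~\ref{pem} with the elementary injectivity $H^1_b(\mathcal{F})\hookrightarrow H^1_{dR}(H)$, which you prove cleanly (a closed basic $1$-form that is de~Rham exact is automatically basic-exact, since its primitive is leafwise constant).

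Two remarks on the comparison. First, your argument actually uses only $b_1(H)=0$, as you observe, so it yields a formally stronger statement than simple connectedness; this is a pleasant by-product of the cohomological approach. Second, the price you pay is that the ``short'' proof rests on the Dom\'{\i}nguez and \'Alvarez L\'opez results (existence of a bundle-like metric with basic mean curvature, and the characterisation $[\mu]=0\Leftrightarrow$ isometric), which are substantially later and deeper than Ghys's 1984 paper. Ghys's original argument proceeds through Molino's structure theory and does not require that machinery. Since the paper already invokes the \'Alvarez class in Remark~\ref{pem}, your derivation is entirely consistent with the surrounding text and arguably the more natural one to present here; but it would be honest to flag that you are trading Ghys's self-contained geometric proof for a shorter deduction from heavier later theorems.
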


Since the flow of any timelike vector field can be used to establish a homeomorphism between a partial Cauchy hypersurface and the horizon of the (future/past) Cauchy development, we conclude\\
\begin{corollary} \label{clf}
Suppose that a spacetime satisfies the null energy condition and admits a compact simply connected partial Cauchy hypersurface $S$. Then the flow on the horizon $H^+(S)$ (and $H^-(S)$) is isometric. (In spacetime dimension 4 the horizon has topology $S^3$ by Perelman's theorem.)\\
\end{corollary}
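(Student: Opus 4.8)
The plan is to reduce the corollary to Proposition \ref{kthf} by transferring the relevant topology from $S$ to $H^+(S)$. Three ingredients are needed, all available. (1) If $H^+(S)\neq\emptyset$, then by the smoothness results quoted in the introduction \cite{larsson14,minguzzi14d} the null energy condition makes the \emph{compact} Cauchy horizon $H^+(S)$ a smooth totally geodesic null hypersurface, i.e.\ a horizon in our sense; Proposition \ref{bort} then equips it with a Riemannian flow structure whose transverse metric $g_T$ is the one induced from the spacetime. (2) As recalled just before the statement, the flow of any timelike vector field $V$ on $M$ identifies $S$ with $H^+(S)$: an inextendible integral curve of $V$ is timelike, hence meets the acausal edgeless set $S$ at most once; following it into the past from a point of $H^+(S)$ one stays in $\overline{D^+(S)}$ and exits through $S$, and this assignment is a homeomorphism $H^+(S)\to S$. (3) Ghys's theorem in the form of Proposition \ref{kthf}: a Riemannian flow on a compact simply connected manifold is isometric.

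First I would dispatch the trivial case $H^+(S)=\emptyset$, where the conclusion is vacuous. Assuming $H^+(S)\neq\emptyset$, ingredient (2) gives $H^+(S)\cong S$, so $H^+(S)$ is compact (hence ingredient (1) genuinely applies) and simply connected, in particular connected, so it is a single horizon. Being a compact simply connected manifold carrying a Riemannian flow, $H^+(S)$ falls under Proposition \ref{kthf}, whose conclusion — the flow is isometric — is exactly what is claimed. Reversing the time orientation (equivalently, passing to the past Cauchy development) yields the same statement for $H^-(S)$. For the parenthetical remark, in spacetime dimension $4$ the horizon is a closed simply connected $3$-manifold, so Perelman's resolution of the Poincar\'e conjecture forces it to be diffeomorphic to $S^3$ (as is $S$).

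The only genuinely non-routine step is ingredient (2): that the timelike flow maps onto \emph{all} of $H^+(S)$ and, correspondingly, that $H^+(S)$ is compact when $S$ is. This is a causality-theoretic statement about the Cauchy development of a compact acausal edgeless set — morally, that $\overline{D^+(S)}$ is a cylinder over $S$ swept out by the flow of $V$, with $H^+(S)$ as its lid. The delicate point is to exclude integral curves of $V$ issuing from $S$ that run forever inside $\mathrm{int}\, D^+(S)$ without ever reaching the horizon; this is where compactness of $S$ enters. Granting this, everything else is a direct application of Propositions \ref{bort} and \ref{kthf}.
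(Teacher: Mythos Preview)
Your proposal is correct and follows essentially the same route as the paper, which does not give a separate proof but simply records, in the sentence preceding the corollary, that the flow of any timelike vector field establishes a homeomorphism between $S$ and $H^\pm(S)$, and then invokes Proposition~\ref{kthf}. Your write-up is in fact more detailed than the paper's, and your honest flagging of the causality-theoretic content hidden in ingredient~(2) is appropriate; the paper treats that homeomorphism as a known fact and does not elaborate on it either.
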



In general, isometric flows are more restrictive than Riemannian flows and without imposing non-degeneracy conditions or a simply connectedness assumption one has to deal with the latter objects. Luckily, Carri\'ere (1984) sought to classify Riemannian flows on compact 3-manifolds precisely to generalize the classification on isometric flows that was already clear at the time. We apply these  stronger classification results to horizons in the next section. It can be said that while Bustamante and Reiris classification of non-degenerate horizons \cite{bustamante21} is based on that for isometric flows, the below classification of general horizons is based on that for Riemannian flows.

\subsection{Structure theorems}

In this section we assume that $(H,g_T)$ is a Riemannian flow (oriented 1-di\-men\-sional Riemannian foliation). Of course, we are interested in the case in which this structure comes from that of a compact horizon as this will lead to previously unknown results in the horizon case.

We start with
\cite[Prop.\ 4, Cor.\ 5]{carriere84}:\\

\begin{theorem}
The minimal volume in Gromov's sense of a compact $H$  is zero, and so $H$ does not admit a metric with negative sectional curvature.\\
\end{theorem}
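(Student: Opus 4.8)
The plan is to establish the two assertions in turn: first that the minimal volume $\mathrm{MinVol}(H):=\inf\{\mathrm{Vol}(H,g):\ |K_g|\le 1\}$ vanishes, and then to deduce the non-existence of negatively curved metrics from Gromov's lower bound on $\mathrm{MinVol}$. For the first part I would use that the rescaling $g\mapsto\Lambda g$ multiplies the volume by $\Lambda^{(\dim H)/2}$ and divides all sectional curvatures by $\Lambda$; hence it suffices to produce a one-parameter family of metrics $g_\epsilon$ on $H$ with $\mathrm{Vol}(H,g_\epsilon)\to 0$ as $\epsilon\to 0^+$ and with $|K_{g_\epsilon}|$ bounded by a constant independent of $\epsilon$, and then to rescale each member of the family so as to obtain a minimizing sequence whose volumes still tend to $0$.

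Such a family is obtained by collapsing $H$ along the flow. By the proposition on bundle-like metrics I would fix a bundle-like metric $g_B=g_T+\sigma\otimes\sigma$, with $\sigma(n)>0$ and $L_n g_T=0$, and set
\[
g_\epsilon:=g_T+(\epsilon\,\sigma)\otimes(\epsilon\,\sigma),\qquad 0<\epsilon\le 1 .
\]
A repetition of the computation in that proposition shows that $g_\epsilon$ is again bundle-like, with characteristic form $\epsilon\sigma$: the $g_\epsilon$-orthogonal complement of $n$ is still $\ker\sigma$, on which $g_\epsilon$ agrees with $g_T$, and $\partial_n(g_T(X,Y))=0$ for foliate $X,Y\in\ker\sigma$ because $L_n g_T=0$ and $[n,X],[n,Y]\in\mathrm{span}(n)\subset\ker g_T$. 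Computing the volume density in an adapted frame gives $\mathrm{Vol}(H,g_\epsilon)=\epsilon\,\mathrm{Vol}(H,g_B)\to 0$.

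The crux is the uniform two-sided bound on $K_{g_\epsilon}$. Over a simple cylindrical neighbourhood $I\times B$ the metric $g_\epsilon$ is, up to lower order terms governed by $\dd\sigma$ and by the mean-curvature form $\mu$ of the flow, the metric of the Riemannian submersion onto the fixed Riemannian manifold $(B,g_T)$ (well defined precisely because $g_T$ is holonomy invariant) whose fibres are the flow lines, of $g_\epsilon$-length $O(\epsilon)$. One-dimensional fibres have no intrinsic curvature, so by the O'Neill formulas the transverse sectional curvatures converge to those of $(B,g_T)$, while the mixed and the $\mu$-dependent contributions remain \emph{bounded} as $\epsilon\to 0$ (in the geodesible case they are $O(\epsilon^2)$, as in the Berger-sphere collapse, but in general they need only stay bounded, which is all that is required); compactness of $H$ then upgrades the finitely many local estimates to a global bound uniform in $\epsilon$. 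This is the delicate step and I expect it to be the main obstacle, since the curvature of $g_\epsilon$ must be estimated for a general, possibly non-geodesible, Riemannian flow and one has to verify that no term diverges; it is precisely what the cited result of Carri\`ere provides. A variant that bypasses the estimate altogether is to invoke Molino's structure theory: on a compact manifold the closures of the leaves of a Riemannian flow are tori of positive dimension, so $H$ carries an $F$-structure of positive rank and $\mathrm{MinVol}(H)=0$ by the Cheeger--Gromov theory of collapse with bounded curvature.

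Granting $\mathrm{MinVol}(H)=0$, the negative curvature statement follows at once: Gromov proved that a closed manifold admitting a metric of negative sectional curvature has positive simplicial volume $\|H\|>0$, and since $|K_g|\le 1$ forces $\mathrm{Ric}_g\ge-(\dim H-1)g$, his main inequality yields $\mathrm{MinVol}(H)\ge c_{\dim H}\,\|H\|>0$. This contradicts the first part, so $H$ carries no metric of negative sectional curvature.
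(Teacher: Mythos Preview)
The paper does not supply its own proof of this statement: it simply records the result with a citation to Carri\`ere \cite[Prop.\ 4, Cor.\ 5]{carriere84}, adding only the remark that Rendall later obtained a similar conclusion and that a related calculation appears in \cite{carriere84b}. Your proposal is therefore not being compared against a proof in the paper but against Carri\`ere's original argument, and in outline it is the same: collapse the bundle-like metric along the flow via $g_\epsilon=g_T+\epsilon^2\sigma\otimes\sigma$, check that the volume scales like $\epsilon$, and verify that the sectional curvatures stay bounded, then conclude via Gromov's inequality relating simplicial volume to minimal volume.

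Your identification of the curvature bound as the delicate step is correct, and your heuristic is the right one. One reassuring sanity check you could add, available directly from the paper, is the non-isometric example $\mathbb{T}^3_A$: in the orthonormal frame $(X,\epsilon^{-1}Y,Z)$ for $g_\epsilon$ the structure constants of the Lie algebra are independent of $\epsilon$, so the curvature of $g_\epsilon$ is literally constant in $\epsilon$ while the volume is $\epsilon\,\mathrm{Vol}(g_B)$; this shows the bound does not require geodesibility. Your alternative via Molino's leaf-closure tori and Cheeger--Gromov $F$-structures is also valid and arguably cleaner, though it imports heavier machinery than Carri\`ere's direct computation.
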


A similar result was obtained in the context of horizons by Rendall \cite{rendall98} who used the notion of {\em  Riemannian manifold collapsing with bounded diameter}. We note that a similar proof and calculation appeared previously in  \cite{carriere84b}.

We say that the foliation $\mathcal{F}$ has   {\em regular closure} if the closures of the leaves have all the same dimension.
Of the structure theorems by Molino we recall \cite[Lemma 5.2]{molino88}\\

\begin{proposition}
If $(H,\mathcal{F}, g_T)$ is a Riemannian flow with regular closure on a compact connected manifold, then the foliation given by the closures of the leaves $\overline{\mathcal{F}}$ is Riemannian and every metric which is bundle-like for $\mathcal{F}$ is also bundle-like for  $\overline{F}$.\\
\end{proposition}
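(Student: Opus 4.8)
The plan is to reduce everything to the local model of a Riemannian flow and then invoke Reinhart's geodesic characterisation of bundle-like metrics (Theorem~\ref{kkr}). Cover $H$ by simple cylindrical charts $U\cong I\times B$ as in Section~2, with local quotient submersion $\pi\colon U\to B$ whose fibres are the plaques of $\mathcal{F}$ and for which $g_T$ descends to a genuine Riemannian metric $\bar g_T$ on $B$, i.e.\ $g_T=\pi^{*}\bar g_T$. On $B$ the holonomy pseudogroup $\mathcal{H}$ of $\mathcal{F}$ acts by local isometries of $(B,\bar g_T)$, and in the chart the leaf closures of $\mathcal{F}$ are exactly the sets $\pi^{-1}(\bar A)$ with $\bar A$ the closure of an $\mathcal{H}$-orbit. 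The crucial structural input is that the closure $\overline{\mathcal{H}}$ of $\mathcal{H}$ in the $C^{1}$ topology again consists of local isometries of $(B,\bar g_T)$, generated by a sheaf of local Killing fields, and that its orbits are precisely the orbit closures $\bar A$. Under the regular-closure hypothesis these have constant dimension, hence foliate $B$ by a foliation $\overline{\mathcal{F}}_B$ with $\overline{\mathcal{F}}|_{U}=\pi^{-1}(\overline{\mathcal{F}}_B)$; this is what makes the leaf closures into a bona fide foliation $\overline{\mathcal{F}}$, with $T\mathcal{F}\subseteq T\overline{\mathcal{F}}$.

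Granting this, I would first check that $\overline{\mathcal{F}}$ is Riemannian. A transverse metric is obtained by restricting $g_T$ to the $g_T$-orthogonal complement of $T\overline{\mathcal{F}}$; this complement contains $\ker g_T=T\mathcal{F}$ and is a supplement of $T\overline{\mathcal{F}}$ in $TH$, so the resulting positive semidefinite form $\tilde g_T$ has kernel exactly $T\overline{\mathcal{F}}$ and, in the chart, equals $\pi^{*}\bar h$ for $\bar h$ the restriction of $\bar g_T$ to the normal bundle of $\overline{\mathcal{F}}_B$. To see $L_{X}\tilde g_T=0$ for every $X\in\Gamma(T\overline{\mathcal{F}})$, note that locally $T\overline{\mathcal{F}}$ is spanned by $n$ together with $\pi$-related lifts $\hat Y_i$ of the local Killing fields $Y_i$ generating $\overline{\mathcal{H}}$; since $\pi\circ\varphi^{n}_{t}=\pi$ one gets $L_{n}\tilde g_T=0$, and since $\hat Y_i$ is $\pi$-related to $Y_i$ one gets $L_{\hat Y_i}\tilde g_T=\pi^{*}(L_{Y_i}\bar h)=0$; the computation in Section~2 showing $L_{fn}g_T=0$ — which works because each such field lies in $\ker\tilde g_T$, killing the $Z(f)\,\tilde g_T(Z,\cdot)$ terms — then upgrades this to $L_{X}\tilde g_T=0$ for arbitrary $X\in\Gamma(T\overline{\mathcal{F}})$.

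For the bundle-like statement, let $g_B$ be bundle-like for $\mathcal{F}$; by the structure of bundle-like metrics recalled above it equals $g_T'+\sigma\otimes\sigma$ with $g_T'$ a transverse metric (possibly $\ne g_T$), so $\pi\colon(U,g_B)\to(B,\bar g)$ is a Riemannian submersion onto the local-quotient metric $\bar g$ of $g_T'$, and $\mathcal{H}$, hence $\overline{\mathcal{H}}$, acts by $\bar g$-isometries. Let $\gamma$ be a $g_B$-geodesic with $\dot\gamma(0)$ $g_B$-orthogonal to a leaf $\pi^{-1}(\bar A)$ of $\overline{\mathcal{F}}$. Since $T\mathcal{F}\subseteq T\overline{\mathcal{F}}$ we have $\dot\gamma(0)\perp T\mathcal{F}$, so Theorem~\ref{kkr} applied to $(\mathcal{F},g_B)$ shows $\gamma$ stays orthogonal to the leaves of $\mathcal{F}$, i.e.\ horizontal; hence $\eta:=\pi\circ\gamma$ is a $\bar g$-geodesic of $B$ starting orthogonal to $\bar A$. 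As $\bar A$ is an orbit of $\overline{\mathcal{H}}$, its tangent space is spanned by values of local Killing fields $Y$ of $(B,\bar g)$, and along $\eta$ the function $\bar g(\dot\eta,Y)$ is constant by Killing's identity $\bar g(\nabla_{\dot\eta}Y,\dot\eta)=0$; thus $\dot\eta$ stays orthogonal to every orbit of $\overline{\mathcal{H}}$, i.e.\ to every leaf of $\overline{\mathcal{F}}_B$. Lifting back, $\dot\gamma$ stays $g_B$-orthogonal to every leaf of $\overline{\mathcal{F}}$, which by the converse direction of Theorem~\ref{kkr} means $g_B$ is bundle-like for $\overline{\mathcal{F}}$; since the construction of $\overline{\mathcal{F}}$ and the bundle-like property are local, this holds globally.

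The main obstacle, and the one genuine input, is the assertion in the first paragraph that $\overline{\mathcal{H}}$ is again a pseudogroup of local isometries whose orbits are the orbit closures (of constant dimension): this is precisely Molino's structure theorem for Riemannian foliations (lift to the transverse orthonormal frame bundle, transverse parallelism, structural Lie algebra), and in a self-contained account one would either quote it or reprove the statement that a $C^{1}$-limit of local isometries of a Riemannian manifold is a local isometry, so that the closed pseudogroup it generates has as orbits exactly the $\bar A$. Everything downstream of that is the soft reduction sketched above.
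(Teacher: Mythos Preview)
The paper does not give its own proof of this proposition; it is stated with a direct citation to \cite[Lemma 5.2]{molino88}. Your sketch is essentially the standard argument behind that result, and you correctly isolate the one substantive input (Molino's structure theory: the closure of the holonomy pseudogroup is again a pseudogroup of local isometries, generated by a sheaf of transverse Killing fields, with orbits the orbit closures). The bundle-like half, via Reinhart's geodesic characterisation and the Clairaut-type conservation $\bar g(\dot\eta,Y)=\mathrm{const}$ for Killing $Y$, is clean and correct, including your observation that $\overline{\mathcal H}$ preserves whatever transverse metric $g_T'$ underlies the chosen bundle-like $g_B$.

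There is one small slip in the construction of $\tilde g_T$. You write that the $g_T$-orthogonal complement of $T\overline{\mathcal F}$ ``contains $\ker g_T=T\mathcal F$ and is a supplement of $T\overline{\mathcal F}$ in $TH$''. These two assertions are incompatible: since $T\mathcal F\subseteq T\overline{\mathcal F}$, the intersection $(T\overline{\mathcal F})^{\perp_{g_T}}\cap T\overline{\mathcal F}$ is exactly $T\mathcal F\neq 0$, so you do not get a direct-sum decomposition of $TH$ from which to read off $\tilde g_T$. The fix is painless: either work on the quotient $Q=TH/T\mathcal F$, where $g_T$ is nondegenerate and the orthogonal projection off $T\overline{\mathcal F}/T\mathcal F$ is well defined, or simply take your local-chart description $\tilde g_T=\pi^*\bar h$ as the definition and check it glues. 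The rest of your argument is unaffected.
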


\noindent and \cite[Prop.\ 5.2]{molino88}\\

\begin{proposition}
If $(H,\mathcal{F}, g_T)$ is a Riemannian flow with regular closure on a compact connected manifold  then the space of closures of the leaves $H/\overline{\mathcal{F}}$  has a Satake manifold structure for which the projection $H\to H/ \overline{\mathcal{F}}$ is a morphism of Satake manifolds.\\
\end{proposition}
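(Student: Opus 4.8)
The plan is to reduce the statement, via Molino's transverse frame--bundle construction, to the elementary fact that the orbit space of a proper smooth action of a Lie group with finite isotropy groups on a manifold is a Satake manifold with the orbit map as a Satake morphism. To begin, by the previous proposition the leaf closures form a Riemannian foliation $\overline{\mathcal{F}}$ on $H$ sharing the bundle--like metrics of $\mathcal{F}$; let $Q$ be the common normal bundle of $\mathcal{F}$ (so $q=n-1$) and let $\pi\colon B(H)\to H$ be the principal $O(q)$--bundle of $g_T$--orthonormal frames of $Q$. The transverse Levi--Civita connection together with the solder form equips $B(H)$ with a transverse parallelism, so $\mathcal{F}$ lifts $O(q)$--equivariantly to a transversally parallelizable foliation $\mathcal{F}_1$ on $B(H)$ whose leaves project onto those of $\mathcal{F}$; since $H$ is compact, so is $B(H)$, and $\mathcal{F}_1$ is transversally complete.

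Next I would apply Molino's structure theorem for transversally parallelizable foliations on compact manifolds: the closures of the leaves of $\mathcal{F}_1$ are exactly the fibres of a locally trivial fibration $\pi_1\colon B(H)\to W$ onto a connected smooth manifold $W$ (the basic manifold), and in particular they all have the same dimension. The right $O(q)$--action on $B(H)$ sends leaves of $\mathcal{F}_1$ to leaves (the lift being equivariant), hence permutes their closures, hence descends to a smooth action of $O(q)$ on $W$. One checks that $\pi$ maps each $\mathcal{F}_1$--leaf--closure onto a leaf of $\overline{\mathcal{F}}$ --- its image is compact, $\mathcal{F}$--saturated (closures of saturated sets being saturated) and contains a dense leaf of $\overline{\mathcal{F}}$ --- and that two such closures have the same image precisely when they lie on the same $O(q)$--orbit; hence $\pi$ and $\pi_1$ fit into a commuting square inducing a homeomorphism $W/O(q)\cong H/\overline{\mathcal{F}}$, all spaces carrying the quotient topology. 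Equivalently, pullback along $\pi$ identifies the basic functions of $\overline{\mathcal{F}}$ on $H$ with the $O(q)$--invariant functions on $W$.

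It remains to produce a Satake structure on $W/O(q)$ compatible with $\pi_1$ and $\pi$, and here the regular--closure hypothesis enters decisively: it makes all leaves of $\overline{\mathcal{F}}$ equidimensional, and a dimension count through the fibration $\pi_1$ then forces every $O(q)$--orbit in $W$ to have dimension $\dim O(q)$, i.e.\ every isotropy group of the $O(q)$--action on $W$ is finite (a closed discrete subgroup of the compact group $O(q)$). Since $O(q)$ is compact the action is proper, so the orbit space $W/O(q)$ inherits a canonical Satake manifold structure --- local charts modeled on $\tilde U/\Gamma$ with $\tilde U$ a linear slice and $\Gamma$ the finite isotropy, the regular points being the images of the free orbits --- for which $W\to W/O(q)$ is a Satake morphism. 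Transporting this structure back along the submersion--with--connected--fibres $\pi_1$ and the principal bundle $\pi$ endows $H/\overline{\mathcal{F}}$ with a Satake structure making $H\to H/\overline{\mathcal{F}}$ a morphism of Satake manifolds. The main obstacle is precisely this finiteness of isotropy: one must show that regular closure downstairs genuinely excludes a leaf closure of $\mathcal{F}$ sitting inside an $O(q)$--tube larger than the generic one --- equivalently, that the preimage under $\pi$ of a single leaf of $\overline{\mathcal{F}}$ is one $O(q)$--orbit of equidimensional $\mathcal{F}_1$--leaf--closures --- after which the remaining verifications are the routine, if somewhat lengthy, bookkeeping carried out in Molino's monograph \cite{molino88}.
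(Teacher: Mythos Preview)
The paper does not supply a proof of this proposition at all: it is simply quoted from Molino's monograph with the citation \cite[Prop.\ 5.2]{molino88} and no argument is given. Your sketch is essentially Molino's own proof from that reference --- lift to the transverse orthonormal frame bundle to obtain a transversally parallelizable foliation, invoke the structure theorem to get the basic manifold $W$, descend the $O(q)$--action, and use the regular--closure hypothesis to force finite isotropy so that $W/O(q)\cong H/\overline{\mathcal{F}}$ is Satake --- so there is nothing to compare: you have reproduced the argument behind the cited result, and it is correct.
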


Other structure theorems by Molino for general Riemannian foliations were subsequently specialized and improved by Carri\`ere in the flow case \cite{carriere84,carriere84b}\cite[App.\ A]{molino88}, so we shall use directly Carri\`ere's versions.

We have \cite[App.\ A, Thm.\ 1.1]{molino88} \cite[Thm.\ A]{carriere84b}\\

\begin{theorem} \label{nqpx}
In a Riemannian flow over a compact manifold $H$ the closures of the leaves of $H$ partition $H$ into  embedded submanifolds that are diffeomorphic to tori. The foliation induced on each torus is differentiably conjugate (without parameter) to a linear flow on the torus with dense orbits.\\
\end{theorem}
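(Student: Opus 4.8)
The plan is to obtain this from Molino's structure theory for Riemannian foliations, combined with Carri\`ere's refinement that the structural Lie algebra of a Riemannian \emph{flow} is abelian; the leaf closures then get identified with homogeneous spaces of tori carrying linear flows. First I would reduce to the ``minimal'' case. By Molino's theory, in a Riemannian foliation on a compact manifold the closures $\overline{L}$ of the leaves form a partition of the manifold into embedded, saturated submanifolds. Since these closures are the blocks of a partition, $L'\subset\overline{L}$ forces $\overline{L'}=\overline{L}$, so $\mathcal{F}$ restricts on each $\overline{L}$ to a Riemannian flow all of whose leaves are dense in $\overline{L}$. It therefore suffices to prove that a Riemannian flow with all leaves dense on a compact connected manifold $N$ is differentiably conjugate, without parameter, to a linear flow with dense orbits on a torus; applying this to $N=\overline{L}$ for each leaf $L$ then yields the theorem, including the asserted partition into tori.

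For the minimal case I would use that such a flow is transversally homogeneous. Lifting to the bundle $B(N)$ of transverse orthonormal frames equipped with a bundle-like metric, the lifted flow is transversally parallelizable, and by Molino's theorem for transversally parallelizable flows on compact manifolds it is a Lie $\mathfrak{g}$-flow with dense leaves, where $\mathfrak{g}$ is the structural Lie algebra. Granting Carri\`ere's lemma that $\mathfrak{g}$ is abelian, set $k=\dim\mathfrak{g}$, so $\mathfrak{g}\cong\mathbb{R}^{k}$ and the underlying manifold has dimension $k+1$. By Fedida's description of Lie foliations its universal cover splits as $\mathbb{R}\times\mathbb{R}^{k}$ with the foliation carried by the first factor, there is a developing submersion onto $\mathbb{R}^{k}$, and the holonomy homomorphism into $\mathbb{R}^{k}$ has dense image because the leaves are dense. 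A compact $(k+1)$-manifold carrying a Lie $\mathbb{R}^{k}$-flow with dense orbits is then the torus $T^{k+1}$ with the linear flow determined by a vector with rationally independent components (the density hypothesis excluding non-toral flat quotients such as the Klein bottle). Pushing this structure down along $B(N)\to N$ transfers it to $N$ itself; since only the partition into leaves is relevant and the leafwise speed may change, the conjugacy one obtains is without parameter.

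The step I expect to be the main obstacle, and the only one genuinely using the Riemannian hypothesis rather than mere integrability of $\mathrm{span}(n)$, is the abelianness of $\mathfrak{g}$. I would argue as Carri\`ere does: on a leaf closure the flow $\varphi^{n}_{t}$ has relatively compact image in the group of transverse automorphisms of a bundle-like metric, hence the closure $A$ of $\{\varphi^{n}_{t}\}$ there is a compact connected abelian Lie group, it acts transitively on the closure (a closed orbit containing a dense one), and $\mathfrak{g}=\mathrm{Lie}(A)$. Equivalently one may observe, via the {\'A}lvarez class discussed in Remark~\ref{pem}, that a minimal Riemannian flow is isometric, and then that a minimal isometric flow on a compact manifold is a homogeneous torus flow by the Myers--Steenrod theorem and an averaging argument. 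Either route is precisely the content of \cite{carriere84,carriere84b} and \cite[App.\ A]{molino88}, which I would cite for the details.
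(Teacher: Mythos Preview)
The paper does not give its own proof of this theorem; it simply cites it from \cite[App.~A, Thm.~1.1]{molino88} and \cite[Thm.~A]{carriere84b}. Your proposal is a faithful sketch of the argument found in those references---Molino's structure theory to get the partition into embedded leaf closures, reduction to the minimal case on each closure, the abelianness of the structural Lie algebra for flows (Carri\`ere's key lemma), and the identification of a compact manifold carrying a minimal Lie $\mathbb{R}^k$-flow with a torus carrying a linear flow---so there is nothing to compare. One small remark: your first route in the last paragraph (closure of $\{\varphi^n_t\}$ in the transverse isometry group is a compact abelian torus acting transitively) is the more direct one Carri\`ere actually uses and does not need the detour through the \'Alvarez class or Dom\'{\i}nguez's tenseness theorem; the second route works but invokes heavier machinery than necessary.
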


 In particular, if a generator densely fills the horizon, then $H$ is a $n$-torus $\mathbb{T}^n$ and  the flow is differentiably conjugate to a linear flow on $\mathbb{T}^n$.

The physical spacetime is 4-dimensional so for us the most interesting horizon case is obtained for $n=3$. We now impose this dimensionality condition.

In the following result $k+1$ is the dimension of the closures $\bar L\simeq \mathbb{T}^{k+1}$, $L
\in \mathcal{F}$, of maximal dimension (they are tori by the previous theorem).

Observe that if the spacetime is orientable then the horizon is orientable. Otherwise, it is possible to obtain a classification result by passing to a double covering.

The next theorem follows  from \cite[Thm.\ III.A.1]{carriere84} \cite[Cor.\ III.B.4]{carriere84}  (the more detailed version we give comes from looking at the original proof)\\

\begin{theorem} \label{car}
Let $(M,g)$ be a  smooth, $4$-dimensional spacetime and  let $H$ be  an oriented compact horizon in $M$. There are the following mutually excluding possibilities:
\begin{enumerate}
\item[(i)] All orbits are closed, $k=0$. $H$ is a Seifert fibration over a Satake manifold. The  fibers are the orbits of the flow.
\item[(ii)] The flow has precisely two closed orbits and every other orbit densely fills a two-torus, $k=1$. If the closed orbits are removed the manifold is diffeomorphic to a product $(0,1) \times  \mathbb{T}^2$ where the flow in each fiber $\{s\} \times \mathbb{T}^2$ is a linear flow with dense orbits. The manifold $H$ is obtained by gluing two solid tori by their boundaries, on each solid torus the foliation being given through a suspension by an irrational rotation of the disk. Moreover, there are two subcases:
\begin{itemize}
\item[a)] $H$ is diffeomorphic to the lens space\footnote{It comprises the case $L(1,0)=S^3$ while the case $L(0,1)=S^1 \times S^2$ is included in (ii-b). See \cite{saveliev12} for this gluing construction. Note that a $S^3$ topology can also appear in case (i) (e.g.\ Hopf bundle).} $L(p,q)$,  and the flow is conjugate to that of the example \ref{ort}.
\item[b)] $H$ is diffeomorphic to $S^1\times S^2$ and the flow is conjugate to the flow given by the suspension of an irrational rotation of $S^2$ (with respect to, say, the $z$-axis in the standard isometric embedding in $\mathbb{R}^3$).
\end{itemize}
\item[(iii)] Every orbit densely fills a two-torus, so no orbit is closed, no orbit is dense,  $k=1$. $H$ is a $\mathbb{T}^2$-bundle over $S^1$, each  torus fiber being the closure of the orbits in it, and there are two possibilities:
\begin{itemize}
\item[a)] $H$ is diffeomorphic to $\mathbb{T}^3$ with a flow conjugate to a linear flow on the torus.
\item[b)] $H$ is diffeomorphic to the hyperbolic fibration $\mathbb{T}^3_A$, ($
\textrm{tr} A > 2$) and the flow is conjugate to one of the flows of the next example \ref{exm}.
\end{itemize}
\item[(iv)] All orbits are dense, $k=2$. $H$ is diffeomorphic to $\mathbb{T}^3$ with a flow conjugate to a linear flow on the torus.
\end{enumerate}
Moreover, except case (iii-b), the flow is isometric.\\
\end{theorem}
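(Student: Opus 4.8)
The plan is to transport the statement to Carri\`ere's classification of Riemannian flows on compact $3$-manifolds and then to read off the topological data of interest in general relativity, supplementing it with the isometricity verifications made available by the earlier results. By Proposition~\ref{bort} the triple $(H,g_T)$ is a compact, oriented, $3$-dimensional Riemannian flow, so Theorem~\ref{nqpx} applies: the leaf closures partition $H$ into embedded tori $\mathbb{T}^{k+1}$, on each of which the flow is differentiably conjugate to a linear flow with dense orbits, and $k\in\{0,1,2\}$ since $\dim H=3$. Let $k$ be the largest such value. I would then split into the cases $k=0$; $k=1$ with at least one closed orbit; $k=1$ with no closed orbit; and $k=2$; these are mutually exclusive and exhaust all possibilities.

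If $k=0$ every orbit is a circle, so by Epstein's theorem $H$ is a Seifert fibration whose fibres are the orbits, and by the Satake structure theorem for compact Riemannian flows with closed leaves the leaf space $S$ is a Satake manifold for which $H\to S$ is a morphism --- this is case~(i). If $k=2$, one leaf closure equals $H\simeq\mathbb{T}^3$; since the closures partition $H$ it is the only one, so every orbit is dense, and by Theorem~\ref{nqpx} the flow is conjugate to a linear flow on $\mathbb{T}^3$ --- this is case~(iv). The substantial range is $k=1$, where the detailed arguments of \cite[Thm.~III.A.1]{carriere84} and \cite[Cor.~III.B.4]{carriere84} enter. Here the maximal closures are $2$-tori and the union of the remaining closures is exactly the set of closed orbits. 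When this set is empty, the $2$-tori partition $H$, the leaf-closure space $H/\overline{\mathcal{F}}$ is a compact connected $1$-manifold, hence $S^1$, and $H$ is a $\mathbb{T}^2$-bundle over $S^1$; analysing the monodromy $A\in SL(2,\mathbb{Z})$ (which must preserve the linear foliation of the fibre) gives $H\simeq\mathbb{T}^3$ with a linear flow in the non-hyperbolic case, i.e.\ (iii-a), and $H\simeq\mathbb{T}^3_A$ with $\mathrm{tr}\,A>2$ carrying the flow of Example~\ref{exm} in the hyperbolic case, i.e.\ (iii-b). When there are closed orbits, Carri\`ere's analysis shows there are exactly two of them; removing them leaves $(0,1)\times\mathbb{T}^2$ foliated by a linear flow in each slice, and the local suspension structure near each closed orbit exhibits $H$ as two solid tori glued along their boundaries, each carrying the suspension of an irrational rotation of the disk. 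Any such gluing of two solid tori yields a lens space (including $S^3$) or $S^1\times S^2$, and matching the two suspensions across the gluing identifies the flow with that of Example~\ref{ort} (case (ii-a)) or with the suspension of an irrational rotation of $S^2$ (case (ii-b)).

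It then remains to verify isometricity in all cases except (iii-b). In (iii-a) and (iv) the flow is a constant vector field on a flat torus, hence Killing, so it is isometric. In case~(ii) it is conjugate either to the flow of Example~\ref{ort}, which is isometric by construction, or to the suspension of an isometry of $S^2$; since a suspension is always geodesible and isometric --- carry $\dd t^2\oplus g_T$ down to the mapping torus, with $\p/\p t$ a unit Killing field --- it is isometric in both subcases. In case~(i) all orbits are closed, so by Epstein's result the flow is the orbit family of an $S^1$-action; averaging any Riemannian metric over this action yields an invariant metric whose infinitesimal generator is an admissible choice of $n$ and is Killing, so the flow is isometric. Case (iii-b) is genuinely exceptional: the hyperbolic torus-bundle flow is the classical example of a Riemannian flow that is not isometric (its {\'A}lvarez class $[\mu]$ does not vanish), which is why the theorem singles it out.

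The main obstacle is the analysis of the case $k=1$: establishing that there are exactly two closed orbits (respectively, that the leaf-closure space is a circle), reconstructing $H$ as a glued pair of solid tori or as a torus bundle over $S^1$, and matching the flow to the explicit models of Examples~\ref{ort} and~\ref{exm} is precisely the substance of Carri\`ere's classification theorem. A fully self-contained treatment would have to reproduce his work --- the local suspension picture near a closed orbit, Molino's structure theory applied to the foliation by leaf closures, and the bookkeeping of $SL(2,\mathbb{Z})$ conjugacy classes. Granting Theorem~\ref{nqpx} and the cited theorems of Carri\`ere, what remains is the organizational splitting into cases together with the routine isometricity verifications above.
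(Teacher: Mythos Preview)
Your proposal is correct and follows essentially the same approach as the paper: invoke Proposition~\ref{bort} to pass to the Riemannian flow setting, cite Carri\`ere's classification (\cite[Thm.~III.A.1, Cor.~III.B.4]{carriere84}) for the case split (i)--(iv), and then verify isometricity case by case via the $S^1$-averaging (case~i), the explicit construction of Example~\ref{ort} or the suspension structure (case~ii), and the flat metric on $\mathbb{T}^3$ (cases iii-a and iv). The paper's treatment is terser --- it states the theorem as a direct consequence of Carri\`ere's results and records the isometricity checks in a short paragraph afterward --- but your more explicit organization of the $k$-cases and your honest acknowledgment that the $k=1$ analysis is the real content of Carri\`ere's work match the paper's intent exactly.
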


In all instances, except  case (iii-b), $H$ is a Seifert manifold  \cite[App.\ A, Thm.\ 4.4]{molino88} (the fibration that makes it a Seifert manifold is not necessarily the lightlike generators foliation detailed in the previous list). The horizon can be simply connected, and hence $S^3$ (Perelman's theorem), only in cases (i) and (ii-a).

Let us check which of these flows are isometric.
Case (i) is isometric as it is possible to start from an arbitrary Riemannian metric and average by using the $S^1$ action to get an invariant metric \cite[Example 3]{carriere84}.
Case (ii-a) is isometric, the invariant metric being that projected from the $S^3$ covering. Case (ii-b) is isometric as the Riemannian flow comes from a  suspension.  Cases (iv) and (iii-a) are isometric, where the invariant metric is the flat metric on $\mathbb{T}^3$.

The case (iii-b) is left out, in fact, as mentioned by the theorem, Carri\`ere proved that it is not isometric \cite[Cor.\ III.B.4]{carriere84}, and hence that it is the only possible non-isometric Riemannian flow on a compact oriented 3-dimensional manifold.

Joining Cor.\ \ref{mqp} with Thm.\ \ref{car} we obtain\\

\begin{corollary} \label{vmx}
In a 4-dimensional spacetime and under the dominant energy condition an oriented non-de\-ge\-ne\-ra\-te compact horizon $H$ and its flow has the  possible structures of Thm.\ \ref{car} save for case (iii-b) that does not apply, that is
\begin{itemize}
\item[i)]  if all the generators are closed  then $H$ is a Seifert manifold,
\item[ii)] if only two generators are closed and every other generator densely fills a two-torus then $H$ is a lens space,
\item[iii)] if every generator densely fills a two-torus, then $H$ is a $\mathbb{T}^2$-bundle over $S^1$ (and diffeomorphic to $\mathbb{T}^3$),
\item[iv)] If every generator densely fills the horizon, then $H$ is a three-torus $\mathbb{T}^3$.
\end{itemize}

\end{corollary}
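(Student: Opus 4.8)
The plan is to derive the corollary as a bookkeeping consequence of two results already in hand: the structural dichotomy of Theorem~\ref{car} and the isometry criterion of Corollary~\ref{mqp}. The whole point is that the non-degeneracy hypothesis forces the horizon's Riemannian flow to be isometric, and isometry is precisely what excludes the exceptional case (iii-b) in Carri\`ere's classification.

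First I would note that $H$, being an oriented compact horizon in a smooth $4$-dimensional spacetime, is covered by Theorem~\ref{car}, so a priori exactly one of the cases (i), (ii-a), (ii-b), (iii-a), (iii-b), (iv) holds. Next, since the spacetime satisfies the dominant energy condition and $H$ is non-degenerate, Corollary~\ref{mqp} applies and tells us that the Riemannian flow on $H$ is isometric. Here I would make the dependency explicit: under the dominant energy condition one has $L_n\omega=\mathrm{d}\kappa$, a non-degenerate compact horizon admits a rescaled generator with $\kappa$ a nonzero constant (by \cite{minguzzi21,reiris21}), whence $\sigma:=\omega/\kappa$ satisfies $\sigma(n)=1$ and $L_n\sigma=0$, i.e.\ property (ii) of Proposition~\ref{kqf}.

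The key step is then the following comparison. Theorem~\ref{car} asserts that every one of its cases except (iii-b) carries an isometric flow, while Carri\`ere's result quoted there shows that the flow of case (iii-b), the one supported on the hyperbolic torus bundle $\mathbb{T}^3_A$ with $\mathrm{tr}\,A>2$, is \emph{not} isometric. Since we have just shown our flow \emph{is} isometric, case (iii-b) cannot occur. Hence $H$ realizes one of (i), (ii-a), (ii-b), (iii-a), (iv), and it remains only to repackage these into the four alternatives of the statement: case (i) gives the Seifert manifold alternative; cases (ii-a) and (ii-b) both fall under ``exactly two closed generators, all others densely filling a two-torus'', with $H$ a lens space in the extended sense that includes $L(0,1)=S^1\times S^2$; case (iii-a) gives the $\mathbb{T}^2$-bundle over $S^1$ with $H\cong\mathbb{T}^3$; and case (iv) gives $H\cong\mathbb{T}^3$ with a dense generator.

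I do not anticipate a genuine obstacle, as the corollary is essentially a corollary in the literal sense. The only points demanding care are to confirm that the hypotheses actually in force (dominant energy condition, non-degeneracy, compactness, orientability, spacetime dimension four) are exactly those required by Corollary~\ref{mqp} and Theorem~\ref{car}, and to be sure that ``lens space'' is understood in the generalized sense, so that the $S^1\times S^2$ subcase (ii-b) is absorbed into item~(ii) rather than lost in the repackaging.
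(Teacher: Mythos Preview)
Your proposal is correct and follows exactly the paper's approach: the paper simply states that the corollary is obtained by ``joining Cor.~\ref{mqp} with Thm.~\ref{car}'', which is precisely what you do. Your additional remarks unpacking the mechanism of Corollary~\ref{mqp} and flagging the need to read ``lens space'' broadly enough to absorb $S^1\times S^2=L(0,1)$ from case~(ii-b) are helpful elaborations but do not depart from the paper's argument.
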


This is precisely Bustamante and Reiris' classification of non-degenerate horizons \cite{bustamante21} (provided orientability is assumed) improved, as we already observed in \cite{minguzzi21}, replacing the vacuum condition with the dominant energy condition.

A first consequence of our study of Riemannian flows in connection to horizons is then\\

\begin{theorem}
In 4 spacetime dimensions an  oriented compact horizon $H$   has the same possible structures previously determined for  a non-degenerate one under the dominant energy condition, save for the additional possibility (iii-b).\\
\end{theorem}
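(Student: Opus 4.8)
The plan is to derive the statement directly from the Riemannian flow classification already established, so that the argument is essentially bookkeeping with no new analytic input. First I would invoke Proposition~\ref{bort}: every horizon $H$ is in particular a Riemannian flow, the leaves being the unparametrized generators, $n$ a positively oriented tangent field, and the induced transverse metric $g_T$ serving as the defining datum of the flow structure. In $4$ spacetime dimensions $H$ is a $3$-manifold; it is connected by our standing convention on horizons, compact by hypothesis, and oriented by hypothesis. These are exactly the hypotheses under which Theorem~\ref{car} was stated.

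Second, I would simply apply Theorem~\ref{car} to this Riemannian flow. That theorem asserts that the flow falls into one of the mutually exclusive cases (i), (ii-a), (ii-b), (iii-a), (iii-b), (iv), with the indicated topological models for $H$ and conjugacy descriptions of the flow. This already produces the full list of possible structures for an arbitrary oriented compact horizon in $4$D, with no reference to (non-)degeneracy, no energy condition, and no appeal to isometric — as opposed to merely Riemannian — flows.

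Third, I would match this list against Corollary~\ref{vmx}. There it was shown that, under the dominant energy condition, a \emph{non-degenerate} oriented compact horizon realizes precisely Carri\`ere's cases (i), (ii), (iii), (iv), that is, all of them except (iii-b); the latter is excluded in that corollary because Corollary~\ref{mqp} forces the flow of such a horizon to be isometric, whereas (iii-b) is the unique non-isometric Riemannian flow on a compact oriented $3$-manifold. Consequently the classification for a general (possibly degenerate) oriented compact horizon coincides with the non-degenerate one together with the single extra entry (iii-b). I would be careful to phrase the conclusion correctly: (iii-b) appears as a \emph{not-yet-excluded} possibility for the flow structure, not as a case known to be realized by an actual horizon; ruling it out, or exhibiting an example, would require genuinely new input and is not attempted here.

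The only point that needs care is the hypothesis check in the first step — that ``horizon'' in our sense supplies exactly what Theorem~\ref{car} requires: connected and totally geodesic, so that Proposition~\ref{bort} yields a Riemannian flow; compact and oriented by assumption; and $3$-dimensional because the spacetime is $4$-dimensional. Once that is in place the substantial content — Carri\`ere's structure theory for Riemannian flows on compact oriented $3$-manifolds, and Molino's theory underlying it — has already been imported, so there is no residual obstacle; the ``hard part'' lies entirely in the previously cited results rather than in this deduction.
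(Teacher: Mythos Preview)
Your proposal is correct and follows essentially the same route as the paper: the paper simply observes that Theorem~\ref{car} (Carri\`ere's classification) applies to any oriented compact horizon because the Riemannian flow structure comes from Proposition~\ref{bort} and the totally geodesic property, with no input from non-degeneracy or energy conditions, and then compares with Corollary~\ref{vmx} to isolate (iii-b) as the only additional case. Your write-up is more explicit about the hypothesis check and about the status of (iii-b) as merely not-yet-excluded, but the argument is the same.
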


We do not make here any assumption on the existence or non-existence of complete generators on $H$, and we do not impose any energy condition, save possibly for the null energy condition. The point is that the structure theorem \ref{car} depends only on the fact that the generators flow is Riemannian, and this property uses only the smoothness of the horizon and its totally geodesic property which follow from the null energy condition.\\

\begin{remark}
A similar classification result for compact horizons in 5-dimen\-sio\-nal spacetime could be straightforwardly formulated by using the results by Almeida and Molino \cite{almeida86}. In such dimension there are many possible non-isometric structures.\\
\end{remark}


\begin{example} \label{exm}
We present an example of totally geodesic compact lightlike hypersurface which displays a non-isometric Riemannian flow dynamics. Unfortunately,  our example does not satisfy the null energy condition. We do not know if a more physically reasonable example exists (the difficulty in finding one confirms a recent conjecture by Isenberg-Moncrief, see below). The horizon dynamics is the prototypal one of Theorem \ref{car} case (iii-b).

Every matrix $A\in SL(2,\mathbb{Z})$ has an inverse which has also integer entries and so belongs to the same space. Consider the map $A: \mathbb{R}^2 \to \mathbb{R}^2$. Altering the components of the entry vector by  integers results in the components of the image vector being altered by integers. As a consequence, it passes to a bijective map, which we denote in the same way, $A: \mathbb{T}^2 \to \mathbb{T}^2$, $\mathbb{T}=\mathbb{R}/\mathbb{Z}$.
Let $\textrm{tr}(A)>2$. The characteristic polynomial is $p_A(\lambda)=\lambda^2-\textrm{tr}(A) \lambda +1$ so $\Delta=\sqrt{\textrm{tr}(A)^2-4}>0$.

This discriminant is actually irrational. Indeed, if it is rational it is of the form $a/b$ with $a$ and $b$ positive coprime integers, but squaring it we get $\textrm{tr}(A)^2-4 = \frac{a^2}{b^2}.$ Since $tr(A)^2-4$ is an integer, it follows that $b^2$ is a divisor of $a^2$ and hence $b = 1$ since $a$ and $b$ are positive coprime integers. It follows that $(\textrm{tr}(A)- a)(\textrm{tr}(A)+a) = 4$ which gives the contradiction using the fact that $\textrm{tr}(A)>2$. We conclude that $\Delta$ is irrational.

As the two  roots  $\lambda_{\pm}=({\textrm{tr}(A)}\pm \Delta)/2$ are distinct (and irrational)  the matrix is diagonalizable. Observe that by $\det A=1$, $\lambda_+ \lambda_-=1$, so $0<\lambda_-<1<\lambda_+$. The map $A: \mathbb{T}^2 \to \mathbb{T}^2$ is the canonical example of linear Anosov diffeomorphism. 

Let $v_+$ and $v_-$ be the real eigenvectors. Writing $v_+= (a, b)^T$ either $a$ or $b$ is different from zero. In the former case we consider the  first line of $A(v_+)=\lambda_+ v_+$, while in the latter case we consider the second line. Either way, we find $a,b\ne 0$ with an irrational ratio $a/b$. A similar conclusion is reached considering $v_-$.

On the torus $\mathbb{T}^2$ we have now two foliations. That of direction $v_+$ and that of direction $v_-$. All orbits of any of these foliations are
dense in $\mathbb{T}^2$ due to the irrational ratio considered in the previous paragraph. Moreover, $A$ respects the foliations in sending the leaves of the plus foliation to leaves of the plus foliation (if $v'-v \propto v_+$ then $A(v')-A(v)\propto v_+$), and similarly for the minus foliation.

We construct the suspension $\mathbb{T}^3_A=\mathbb{R} \times \mathbb{T}^2/\!\!
\sim$, $(x, v) \sim (x+1, A(v))$, which is thus a $\mathbb{T}^2$ bundle over $S^1$. This gluing respects the plus and minus foliations.

The 1-dimensional integral curves of $X:=\p/\p x$  should not be confused with the Riemannian foliation orbits (lightlike generators) that we shall introduce next.

The example (iii-b) is $\mathbb{T}^3_A$ endowed with the 1-dimensional (suitably oriented) foliation given by the plus or minus foliations. Carri\`ere proved that both are Riemannian but non-isometric.

We focus on one of these oriented foliations, call $\lambda$ the eigenvalue
and $v$ the corresponding eigenvector of $A$. Then $Y=\lambda^{x} v$ passes to the quotient $A_*(Y)(p)=\lambda^{x} \lambda v=\lambda^{x+1} v=Y(A(p))$. Let $\lambda^{-1}$ be the other eigenvalue
and $w$ the corresponding eigenvector of $A$. Then $Z= \lambda^{-x} w$ passes to the quotient.
On the $\mathbb{R}^3$ covering we introduce coordinates $(x,y,z)$ so that $v=\frac{\p}{\p y}$, $w=\frac{\p}{\p z}$, and so $Y=\lambda^{x}\frac{\p}{\p y}$, $Z=\lambda^{-x}  \frac{\p}{\p z}$. The dual basis to $X,Y,Z$ is $\dd x, \lambda^{-x} \dd y,  \lambda^{x} \dd z$.

The fields $X,Y,Z$ satisfy the following commutation relations (the constant $\log \lambda$ could be sent to 1 redefining $X$) \cite{ghys01}
\[
[X,Y]=(\log \lambda) Y, \qquad [X,Z]=-(\log \lambda) Z, \qquad [Y,Z]=0.
\]
They define a Lie algebra and the connected Lie group associated to it is often denoted $G_3$.

Consider the bilinear forms
\begin{align*}
g_T&= \dd x^2+ (\lambda^{x} \dd z)^2, \\
g_B&= \dd x^2+ (\lambda^{-x} \dd y)^2 + (\lambda^{x} \dd z)^2.
\end{align*}
Where $g_B$ is a bundle-like metric for the flow determined by $Y$ (it is immediate that $L_Y g_T=0$). In our previous notation $n:=Y$, $n$ is $g_B$-normalized, and the characteristic form is  $\sigma=\lambda^{-x} \dd y$. Observe that the mean curvature 1-form is
\[
\mu=L_Y \sigma=\lambda^{-x} \dd (L_Yy)=\lambda^{-x}  \dd \lambda^x=(\log \lambda) \dd x,
\]
which, being annihilated by $Y$ and independent of $y$, is basic. Observe that it is closed but it is not exact due to the identification that defines $\mathbb{T}^3_A$. Thus $[\mu]\ne 0$ and hence the Riemannian foliation is not isometric by \`Alvarez's result, see Remark \ref{pem}.
We note that the Euler class vanishes $\mathfrak{e}=\dd \sigma -\sigma \wedge \mu=0$.

Having described the Riemannian flow structure of case (iii-b), we need to clarify if such geometry can be realized as a horizon.

Let us consider the spacetime metric on $\mathbb{R}_u\times \mathbb{T}^3_A$
\begin{align} \label{cqp}
g&= -2(\lambda^{-x} \dd y)(\dd u+ f(u)\lambda^{-x}\dd y) +g_T \nonumber\\
&=-2\lambda^{-x} \dd y (\dd u+ f(u)\lambda^{-x}\dd y) +\dd x^2+ \lambda^{2x} \dd z^2
\end{align}
 with the time orientation given by the lightlike vector $N=\p/\p u$. Suppose that $f(0)=0$ and let $H=\{u=0\}$.

 The metric induced on $H$ is indeed $g_T$. A future-directed lightlike vector field on $H$ is given by $n=Y=\lambda^{x}\frac{\p}{\p y}$ and this expression defines $n$ also outside $H$ in such a way that $[n,N]=0$. Since
\[
g(n,n)=-2 f,
\]
the vector field $n$ is timelike on the region where $f>0$ and spacelike on the region where $f<0$.
On $H$ we have  $\sigma=-g(N,\cdot)$. By Prop.\ \ref{bort} $H$ is totally geodesic.

By  \cite[Eq.\ (47)]{minguzzi21} the surface gravity is
\[
\kappa=\frac{1}{2}\frac{\p}{\p u} g(n,n)=- f'(0),
\]
which is constant throughout $H$.
It can be observed that $\frac{\p}{\p y}$ is a local Killing field for (\ref{cqp}) but it does not globalize. The global field is $n$ but, using $L_n \sigma=\mu=(\log \lambda) \dd x$
\begin{align*}
L_n g
&=- (\log \lambda) [\dd x \otimes \dd u+\dd u \otimes \dd x +2f(u)(\dd x \otimes \sigma+\sigma \otimes \dd x)]
\end{align*}
thus $n$ is not Killing. Similarly, $\frac{\p}{\p z}$  is a local Killing field and $\overline{\frac{\p}{\p z}}$ is a transverse local Killing field (see Sec.\ \ref{killv}).

For $X\in TH$, $\nabla_X n=\omega(X) n$ where $\omega$ can be obtained from the Christoffel symbols calculated with the help of Mathematica
\[
\omega=\frac{1}{2} \log \lambda\dd x -f'(0) \sigma,
\]
which confirms $\kappa=\omega(n)=-f'(0)$.

The covariant Ricci tensor is
\begin{align*}
Ric&=-\frac{3}{2} (\log \lambda)^2 \dd x^2 + 2   (\lambda^{-x} \dd y)( f''(u)\dd u-f'(u)\log \lambda \, \dd x)+f(u) [(\log \lambda)^2\\
&\quad +2 f''(u)](\lambda^{-x} \dd y)^2 ,
\end{align*}
the scalar curvature is
\begin{align*}
S&=-\frac{3}{2} (\log \lambda)^2 -2  f''(u) ,
\end{align*}
the covariant Einstein tensor is
\begin{align*}
G&=[-\frac{3}{4}(\log \lambda)^2+ f''(u)] \dd x^2 - 2   (\lambda^{-x} \dd y)( \frac{3}{4}(\log \lambda)^2 \dd u+f'(u)\log \lambda \, \dd x)\\
& \quad -\frac{1}{2} f(u) (\log \lambda)^2(\lambda^{-x} \dd y)^2 + \frac{1}{4} (3 (\log \lambda)^2 +4f''(u))\lambda^{2x} \dd z^2.
\end{align*}

By using these expressions it is possible to show that  for every choice of function $f$ the null energy condition is not satisfied. This result supports  the conjecture by Isenberg and Moncrief that degenerate Cauchy horizons do not exist in the analytic case \cite{moncrief20} (see also \cite{chrusciel06,khuri18} for similar questions in the black hole case).

\end{example}


\subsection{More on isometric flows: Transverse Killing fields} \label{killv}

A recurring problem in spacetime horizon geometry is that of establishing the existence of further Killing vector fields beyond the one that might be aligned with the generators of the horizon. In the context of horizons that come from the compactification of black hole horizons the existence of such symmetry is often referred as {\em axisymmetry} \cite{hollands07,hollands09}. It is natural to look for the existence of Killing fields for the Riemannian flow  structure $(H,g_T)$, as hopefully the extension of the Killing field in a neighborhood of the horizon could be accomplished with PDE methods.

Fortunately, the theory of Riemannian flows provides  rather accurate answers to this type of questions.
Let us consider a Riemannian flow $(H,g_T)$.
A vector field $X \in \mathfrak{X}(H)$ is a  {\em Killing  field for the transverse metric} if $L_X g_T=0$. Every Killing  field for the transverse metric is a foliate field, $[X,n]
\propto n$, see \cite[Lemma 3.5]{molino88}, indeed if $Z\in \mathfrak{X}(H)$, as $[X,Z]\in \mathfrak{X}(H)$
\begin{align*}
0&=(L_Xg_T)(n,Z)=X(g_T(n,Z))-g_T(L_Xn, Z)-g_T(n,L_X Z)=-g_T(L_Xn, Z)
\end{align*}
and from the arbitrariness of $Z$ it follows $L_Xn\propto n$, i.e.
 $X$ is a foliate field.

Observe that if $X$ is a Killing  field for the transverse metric then $X+f n$ is a Killing field for the transverse metric too, where $f$ is any function. Indeed, for $Y,Z\in  \mathfrak{X}(H)$
\begin{align*}
L_{X+fn} g_T(Y,Z)\!&=(X\!+\!fn) (g_T(Y,Z))-g_T([X\!+\!fn, Y], Z)-g_T(Y,[X\!+\!fn,Z])\\
&=L_{X} g_T(Y,Z)+f L_{n} g_T(Y,Z) =0
\end{align*}
The projection $\bar X$ of $X$ is called {\em transverse Killing  field}. By the previous observation, any representative $X$ of a transverse Killing  field  $\bar X$ is a  Killing  field for the transverse metric. The Killing fields for the transverse metric form a Lie algebra, and it is easily checked that for $X,Y$ representatives of transverse Killing fields, $[\bar{X},\bar{Y}]:=\overline{[X,Y]}$ does not depend on the representatives, which shows that transverse Killing fields form a Lie algebra too.

In the following result $k+1$ is the dimension of the closures $\bar L\simeq \mathbb{T}^{k+1}$, $L
\in \mathcal{F}$, of maximal dimension (they are tori by the  Theorem \ref{nqpx}).

There is a further characterization of isometric flows that we did not mention so far and that reads as follows: the central transverse sheaf is globally trivial \cite{molino85}. Without entering into the details of this strong result we  summarize just one direction which is of interest to us (it can be read as saying that non-degenerate horizons are isometric foliations and hence Killing foliations \cite[Example 5.4]{alexandrino22}, see \cite{alexandrino22} for the latter notion)\\

\begin{theorem} \label{cnpz}
Let $H$ be compact and oriented. If $(H,g_T)$ is an isometric flow (e.g.\ because it comes from a non-degenerate horizon and the dominant energy condition holds, cf.\ Cor.\ \ref{mqp}) or $H$ is simply connected, then there are $k$ Killing fields $X_i$, $i=1,\cdots, k$, for  (the transverse metric) $g_T$  such that $[X_i,X_j]\propto n$ and at each point $\textrm{span}(n,X_1,\ldots, X_k)$ is the tangent space to a tori which is the closure of a generator passing through that point. Moreover, for each $i$, $[X_i,Y]\propto n$, where $Y$ is any foliate vector field.\\
\end{theorem}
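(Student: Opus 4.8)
The plan is to import Molino's structure theory for Riemannian foliations, specialized to flows, and to read the statement off the \emph{central transverse sheaf}. First I would eliminate the simple-connectedness alternative: by Proposition~\ref{kthf} a Riemannian flow on a compact simply connected manifold is isometric, so in either case it suffices to treat an isometric flow $(H,g_T)$ over a compact oriented manifold.

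Next I would recall the central transverse sheaf $\mathcal{C}$ of the flow \cite{molino88,molino85}: it is a locally constant sheaf of germs of transverse Killing fields (equivalently, of germs of transverse fields commuting with every local foliate field), its stalk is the structural Lie algebra, and its orbits are exactly the leaf closures. By Theorem~\ref{nqpx} those closures are embedded tori carrying a foliation conjugate to a linear flow with dense leaves; this pins the structural Lie algebra down to the abelian algebra $\mathbb{R}^k$, with $k+1$ the dimension of a maximal closure $\mathbb{T}^{k+1}$. The isometric hypothesis now enters through the characterization recalled just before the statement (Molino and Sergiescu \cite{molino85}): a Riemannian flow on a compact manifold is isometric if and only if $\mathcal{C}$ is globally trivial. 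Hence $\mathcal{C}$ is isomorphic to the constant sheaf with fiber $\mathbb{R}^k$, so it admits $k$ everywhere defined sections $\bar X_1,\dots,\bar X_k$ whose values span, at each point $p$, the image of the stalk $\mathcal{C}_p$ inside $Q_p$.

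To pass from transverse fields to vector fields on $H$, for each $i$ I would choose a representative $X_i\in\mathfrak{X}(H)$ of $\bar X_i$. By the discussion preceding the theorem every representative of a transverse Killing field is a Killing field for $g_T$ (that is, $L_{X_i}g_T=0$) and is foliate, and two representatives differ by a multiple of $n$; so the $X_i$ are well suited to the statement. The three claimed properties then follow: $[X_i,X_j]\propto n$ because $\mathcal{C}$ is a sheaf of Lie algebras with abelian stalk, so $[\bar X_i,\bar X_j]=0$; at each $p$ the subspace $\textrm{span}(n,X_1,\dots,X_k)$ equals $T_pL$ plus the span of the $X_i$, hence equals $T_p\bar L_p$ because the orbits of $\mathcal{C}$ are the leaf closures, and $\bar L_p$ is a torus by Theorem~\ref{nqpx}; and $[X_i,Y]\propto n$ for any foliate $Y$ because elements of $\mathcal{C}$ commute with all local foliate fields.

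The bulk of the work is in citing Molino's machinery accurately rather than in computation; the part I expect to demand the most care is the identification of the central sheaf with the objects at hand: that its global triviality is \emph{equivalent} to the isometric condition (and that the implication used points the right way), that its stalk is abelian in the flow case, that its sections commute with \emph{all} foliate fields (which on $H$ rests on foliate fields having, in a local foliate parallelism, components constant along leaf closures over the open dense union of maximal closures, the general case following by continuity), and that their span recovers $T\bar L_p$ also at closures of non-maximal dimension, where the $X_i$ necessarily become dependent modulo $n$. Once these structural facts are in hand, the translation back into the horizon language of Killing fields for $g_T$ is routine via the lemmas on representatives of transverse Killing fields recalled above.
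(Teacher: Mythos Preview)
Your proposal is correct and follows essentially the same route as the paper: the paper does not supply a detailed proof but rather presents the theorem as an import from Molino's structure theory, invoking precisely the Molino--Sergiescu characterization of isometric flows via global triviality of the central transverse sheaf \cite{molino85} (mentioned in the paragraph preceding the statement) and the Killing foliation viewpoint \cite{alexandrino22}. Your write-up is in fact more explicit than the paper's treatment in spelling out why the stalk is abelian (via Theorem~\ref{nqpx}), why the sections commute with all foliate fields, and how the span recovers $T_p\bar L_p$ even at closures of submaximal dimension; these are exactly the points one would need to check when unpacking the cited references.
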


If the isometric assumption is not satisfied the Killing fields exist locally \cite{molino85} but certainly some of them does not globalize \cite{molino85}, e.g.\ the field $\p/\p z$ in example \ref{exm}.

If the generators do not close then $k\ge 1$ and so there exists a (non-vanishing) transverse Killing field. Similar results in the spacetime framework were obtained in \cite{isenberg92,hollands07}.

The vector fields $X_i$ will in general be linearly dependent at some points. Mozgawa's theorem clarifies to what extent the fields can be found to be linearly independent at every point \cite{mozgawa85}\cite[Thm.\ 3.6]{alexandrino22}. The result is that there are at least $r$ Killing fields $\bar X_i$, linearly independent at every point, where $r+1$ is the smallest dimension for a generator closure.

Observe that in the Thm.\ \ref{cnpz} one gets just $[X_i, n] \propto n$ which is a bit unsatisfactory. However, we can improve it as follows\\

\begin{proposition}
Let $H$ be an oriented compact non-degenerate horizon and suppose that the dominant energy condition holds, then the Killing fields $X_i$ in Thm.\ \ref{cnpz} can be chosen so that $[X_i,n]=0$ and $\omega(X_i)=0$  for every $i$. 
\end{proposition}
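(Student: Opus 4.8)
The plan is to put the statement in a convenient gauge and then upgrade the Killing fields supplied by Thm.~\ref{cnpz} in two steps: first replacing them by ones that commute with $n$, and then adjusting each by a multiple of $n$ so that it is also annihilated by $\omega$. First I would fix the gauge furnished by Cor.~\ref{mqp}: since $H$ is compact, oriented, non-degenerate and the dominant energy condition holds, choose $n$ with surface gravity $\kappa$ a nonzero constant and set $\sigma:=\omega/\kappa$, so that $\sigma(n)=1$, $\omega=\kappa\,\sigma$, and $L_n\sigma=0$ (the last because $L_n\omega=\dd\kappa=0$, \cite[Lemma~6]{minguzzi21}). Then $h:=g_T+\sigma\otimes\sigma$ is bundle-like with $L_nh=0$ and $h(n,n)=1$, i.e.\ $n$ is a unit Killing field of the compact Riemannian manifold $(H,h)$.

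For the first upgrade I would go back to the construction behind Thm.~\ref{cnpz}. By Myers--Steenrod $\mathrm{Isom}(H,h)$ is a compact Lie group; let $\bar G$ be the closure in it of the one-parameter group generated by $n$. Then $\bar G$ is a compact connected abelian group, hence a torus, whose orbits are exactly the leaf closures $\bar L$ and whose Lie algebra $\mathfrak g$ consists of $h$-Killing fields, contains $n$, and at each point equals the tangent space of the leaf closure through that point (this is precisely the global triviality of the central transverse sheaf, \cite{molino85}, on which Thm.~\ref{cnpz} rests). Choosing $X_1,\dots,X_k\in\mathfrak g$ that complete $n$ to a basis at a point of a leaf closure of maximal dimension, the $X_i$ satisfy all conclusions of Thm.~\ref{cnpz}, and since $\bar G$ is abelian we get in addition $[X_i,X_j]=0$ and, crucially, $[X_i,n]=0$. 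This already yields the first half of the statement.

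For the second upgrade I would set $f_i:=-\sigma(X_i)$ and $X_i':=X_i-\sigma(X_i)\,n$. Adding a multiple of $n$ to a Killing field of the transverse metric keeps it one (the computation preceding Thm.~\ref{cnpz}), so each $X_i'$ is Killing for $g_T$. Because $[X_i,n]=0$ and $L_n\sigma=0$, the function $\sigma(X_i)$ is constant along the generators,
\[
\partial_n\bigl(\sigma(X_i)\bigr)=(L_n\sigma)(X_i)+\sigma\bigl([n,X_i]\bigr)=0,
\]
hence $[X_i',n]=[X_i,n]+[f_in,n]=-\partial_n(f_i)\,n=0$, so the new fields still commute with $n$; and
\[
\omega(X_i')=\kappa\,\sigma(X_i')=\kappa\bigl(\sigma(X_i)-\sigma(X_i)\,\sigma(n)\bigr)=0.
\]
It remains to note that the $X_i'$ retain the remaining properties of Thm.~\ref{cnpz}: since $X_i'-X_i\propto n$, $[n,X_i]=0$, $[X_i,Y]\propto n$ for foliate $Y$, and $[Y,n]\propto n$ for such $Y$ (foliateness of $Y$), one gets $[X_i',X_j']\propto n$ and $[X_i',Y]\propto n$, while $\mathrm{span}(n,X_1',\dots,X_k')=\mathrm{span}(n,X_1,\dots,X_k)$ is tangent to the leaf closures.

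I expect the one genuinely delicate point to be the first upgrade. One cannot in general arrange $[X_i,n]=0$ by the naive device $X_i\rightsquigarrow X_i+fn$: writing $[X_i,n]=c_in$ this amounts to solving $\partial_nf=c_i$ globally, and along a generator dense in a $(k{+}1)$-torus this is a small-divisor (leafwise cohomology) equation which need not have a smooth solution; one really must exploit the compact-group rigidity underlying Thm.~\ref{cnpz}, as above. By contrast the second upgrade is elementary, the only thing to watch being the order of operations: the $\omega$-adjustment preserves $[X_i,n]=0$ precisely because $\sigma(X_i)$ is leafwise constant, and that in turn relies on having already achieved $[X_i,n]=0$.
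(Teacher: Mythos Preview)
Your proof is correct, but the paper's argument is considerably shorter and reverses your order of operations. The paper first performs your ``second upgrade'', replacing $X_i\to X_i+h_i n$ so that $\omega(X_i)=0$, and then observes that $[X_i,n]=0$ is automatic: writing $[X_i,n]=f_i n$ and using $i_n\dd\omega=0$ (which holds under the dominant energy condition, \cite[Lemma~5]{minguzzi21}), one computes
\[
0=\dd\omega(n,X_i)=n(\omega(X_i))-X_i(\omega(n))-\omega([n,X_i])=f_i\kappa,
\]
whence $f_i=0$ since $\kappa\ne 0$. There is thus no need to revisit the Molino--Sergiescu/Myers--Steenrod construction behind Thm.~\ref{cnpz}. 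In fact, your worry that the naive shift $X_i\mapsto X_i+fn$ runs into a small-divisor problem when trying to achieve $[X_i,n]=0$ is resolved by the paper's device: the specific choice $f=-\omega(X_i)/\kappa$ works, precisely because $\omega$ is not an arbitrary $1$-form but satisfies $i_n\dd\omega=0$. Your route has the merit of making explicit the abelian torus action underlying Thm.~\ref{cnpz} (and would work even without the identity $i_n\dd\omega=0$), but for the present proposition the paper's argument is both shorter and more self-contained.
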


\begin{proof}
Indeed, we know  \cite{reiris21,minguzzi21}  that $n$ can be chosen so  that   $\omega(n)=\kappa \ne 0$ where $\kappa$ is a constant. We can replace $X_i\to X_i +h_i n$, so that $\omega(X_i)=0$. Let $[X_i,n]=f_i n$. Under the dominant energy condition \cite[Lemma 5]{minguzzi21} $i_n\dd\omega=0$, thus
\[
0=\dd \omega(n,X_i)=n(\omega(X_i))-X_i(\omega(n))-\omega([n,X_i])=f_i \kappa
\]
from which we get $f_i=0$.
\end{proof}

\begin{corollary}
Let $H$ be an oriented compact non-degenerate horizon and suppose that the dominant energy condition holds and that some generator does not close,\footnote{In spacetime dimension 4 this is equivalent to the request that $(H,\mathcal{F})$ is not a Seifert fibration \cite{epstein72}.} then there is a vector field $X\in \mathfrak{X}(H)$ not everywhere aligned with $n$, such that $L_X g_T=L_n g_T=0$, $[X,n]=0$.
\end{corollary}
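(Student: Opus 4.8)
The plan is to read off the field $X$ from the Killing fields produced by Theorem~\ref{cnpz}, after first putting them in the normalized form supplied by the preceding Proposition. The hypotheses — $H$ oriented, compact, non-degenerate, with the dominant energy condition — imply, via Corollary~\ref{mqp}, that the Riemannian flow on $H$ is isometric, so Theorem~\ref{cnpz} applies and yields $k$ Killing fields $X_1,\ldots,X_k$ for the transverse metric $g_T$ with $[X_i,X_j]\propto n$ and such that, at each $p\in H$, the subspace $\textrm{span}(n,X_1,\ldots,X_k)$ is precisely the tangent space at $p$ to the torus which is the closure of the generator through $p$.

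Next I would use the hypothesis that some generator fails to close. If every $X_i$ were everywhere proportional to $n$, then $\textrm{span}(n,X_1,\ldots,X_k)=\textrm{span}(n)$ at every point, forcing every generator closure to be one-dimensional, i.e.\ a circle, contradicting the assumption (in particular this already forces $k\ge 1$). Hence there is an index $i_0$ and a point $p_0\in H$ with $X_{i_0}(p_0)\notin\textrm{span}(n(p_0))$; set $X:=X_{i_0}$.

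It remains to arrange $[X,n]=0$. Here I would invoke the preceding Proposition, which permits choosing the $X_i$ so that $[X_i,n]=0$ (and $\omega(X_i)=0$); the modification used there replaces each $X_i$ by $X_i+h_i n$ for suitable functions $h_i$. This replacement does not change $L_{X_i}g_T$ (it is recorded earlier that $X+fn$ is again a Killing field for $g_T$), does not change the span $\textrm{span}(n,X_1,\ldots,X_k)$, and does not move $X_{i_0}(p_0)$ into or out of $\textrm{span}(n(p_0))$, since adding a multiple of $n(p_0)$ to a vector not in $\textrm{span}(n(p_0))$ keeps it outside. Thus after the adjustment $X=X_{i_0}$ still has $X(p_0)\notin\textrm{span}(n(p_0))$, so $X$ is not everywhere aligned with $n$; $L_Xg_T=0$ since $X$ is a Killing field for $g_T$; $[X,n]=0$; and $L_ng_T=0$ holds automatically from the Riemannian flow structure of the horizon (Prop.~\ref{bort}). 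This is the desired $X$.

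The proof is essentially bookkeeping on top of Theorem~\ref{cnpz} and the preceding Proposition, so I expect no serious obstacle; the only points needing a moment's care are verifying that the adjustment $X_i\mapsto X_i+h_i n$ preserves transversality of $X_{i_0}$ to $n$ at $p_0$ (it does), and that "some generator does not close" genuinely excludes the trivial alternative $\textrm{span}(n,X_1,\ldots,X_k)=\textrm{span}(n)$ everywhere.
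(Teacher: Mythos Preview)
Your proof is correct and follows essentially the same route as the paper: the paper's proof is the one-line observation that the non-closed generator has closure a torus $\mathbb{T}^{s+1}$ with $s\ge1$ (Thm.~\ref{nqpx}), hence $k\ge1$, with the remaining conclusions left implicit from Theorem~\ref{cnpz} and the preceding Proposition. You have simply spelled out those implicit steps, including the harmless check that the adjustment $X_i\mapsto X_i+h_i n$ preserves transversality to $n$ at the chosen point.
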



\begin{proof}
By Thm.\ \ref{nqpx}  the closure of the non closed generator is a torus $\mathbb{T}^{s+1}$, $s\ge 1$, hence $k\ge 1$.
\end{proof}

The previous and the next results do not rely on analyticity or vacuum assumptions and hold in any dimensions. They can be compared with results in \cite{hollands07,moncrief20} where the  authors imposed stronger assumptions but the Killing field was found in a (one-sided) neighborhood of the horizon.\\

\begin{corollary}
On a strongly causal spacetime that satisfies the dominant energy condition, let $H^+$ be a  totally geodesic smooth connected component of a  stationary black hole event horizon and suppose that $H^+$ admits an oriented compact cross-section (these assumptions are natural, see \cite{minguzzi24}). Suppose that $H^+$ is non-degenerate and that somewhere on $H$ the  Killing vector $k$ does not align with $n$ (non-static, i.e.\ rotating).
Then there is a vector field $X\in \mathfrak{X}(H)$ not everywhere aligned with $n$, such that $L_X g_T=L_n g_T=0$, $[X,n]=0$.
\end{corollary}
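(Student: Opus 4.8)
The plan is to reduce the black-hole event horizon statement to the previously established Corollary about abstract non-degenerate compact horizons. First I would note that the hypotheses of the statement are tailored so that $H^+$ is a connected totally geodesic smooth null hypersurface, i.e.\ a horizon in the sense of this paper, that admits an oriented compact cross-section $T$. Since the flow of $n$ moves points through $T$ (by definition of cross-section) and $H^+$ is connected, the leaf space is governed entirely by the dynamics on this compact cross-section; in particular, all the Riemannian-flow structure theory applies to $H^+$ exactly as for an abstract compact horizon once we pass to the compact cross-section picture. Concretely, I would replace $H^+$ by the compact manifold $H$ obtained as the suspension over $T$ by the first-return diffeomorphism (a cross-section exists by hypothesis, so by \cite{fried82} the flow is a suspension and $H$ is compact); the generator foliation on $H$ inherits the Riemannian-flow structure $(H,g_T)$ from the totally geodesic property via Proposition \ref{bort}.

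The key step is then to invoke the previous Corollary (the one asserting existence of $X$ with $L_Xg_T=L_ng_T=0$, $[X,n]=0$, for an oriented compact non-degenerate horizon under the dominant energy condition, provided some generator does not close). So I need to check its three hypotheses for $H$. Orientability and compactness: $H$ is oriented and compact by construction (the cross-section $T$ is oriented and compact, and the flow direction orients the fiber). Dominant energy condition: this is assumed for the spacetime, hence holds along $H^+$. Non-degeneracy: assumed directly. It remains to verify that \emph{some generator of the horizon does not close}, which is where the rotation hypothesis enters: if the stationary Killing vector $k$ fails to be tangent to $H^+$ at some point, I would argue that the horizon generators cannot all be closed. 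Indeed, for a static (non-rotating) horizon $k$ is tangent to $H^+$ and aligned with $n$ up to scale, and this is precisely the configuration forced if the generator foliation were a Seifert fibration with $k$ generating the $S^1$-action; the standard black-hole rigidity picture (see \cite{minguzzi24}) says that if $k$ is transverse somewhere, the horizon is rotating and the generators do not close. Once non-closure of some generator is in hand, the previous Corollary applies verbatim and yields the desired $X$ on $H$, which transports back to a vector field on $H^+$ (extend it to be $\varphi^n$-invariant off the cross-section) with the stated properties $L_Xg_T=L_ng_T=0$ and $[X,n]=0$, not everywhere proportional to $n$.

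The main obstacle I anticipate is the step connecting the spacetime/stationarity hypothesis ``$k$ not tangent to $H^+$ somewhere'' to the purely foliation-theoretic conclusion ``some generator does not close.'' This requires knowing that on a stationary black-hole event horizon with compact cross-section the Killing field $k$, restricted to $H^+$, is either everywhere tangent (static case) or the horizon generators are non-periodic; equivalently, that a rotating horizon cannot have all generators closed. I expect the paper handles this by appealing to \cite{minguzzi24} for the structure of stationary event horizons with compact cross-sections, together with the fact that if all generators closed then $(H,\mathcal{F})$ would be a Seifert fibration (case (i) of Theorem \ref{car}, using Epstein's theorem in dimension $3$, and the analogous higher-dimensional statement), forcing $k$ to be tangent to the horizon. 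The remaining details — that the suspension/cross-section reduction is legitimate and that the vector field transports back — are routine given the machinery already developed, so the argument is essentially a matter of assembling Proposition \ref{bort}, the cross-section/suspension equivalence of \cite{fried82}, and the previous Corollary, with the rotation hypothesis supplying non-closure of a generator.
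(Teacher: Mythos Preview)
Your compactification step is misconceived, and this breaks the argument. The event horizon $H^+$ is non-compact; its null generators are affine lines that cross the compact cross-section $T$ exactly once, so there is no first-return diffeomorphism of the flow of $n$ on $T$, and Fried's result \cite{fried82} (which concerns flows on compact manifolds) does not apply. The paper's compactification is the ``standard trick'' of \cite{friedrich99,moncrief08}: one quotients $H^+$ by the $\mathbb{Z}$-action generated by the time-$\tau$ map of the stationary Killing field $k$ (which is tangent to $H^+$, being an isometry preserving the causal structure). The resulting compact $H$ depends on the choice of $\tau$.

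The second gap is more serious: you treat ``$k$ not aligned with $n$ somewhere $\Rightarrow$ some generator does not close'' as a fact to be cited, but it is not automatic and your proposed justification (closed generators force a Seifert fibration, hence $k$ tangent) is a non sequitur. In the paper the period $\tau$ is a \emph{free parameter}, and the whole point is to \emph{choose} it so that at least one generator fails to close in the quotient. Concretely, one passes to the quotient $\tilde S$ of $H^+$ by the generator flow; since $[k,n]\propto n$, $k$ projects to a Killing field $\bar k$ on $(\tilde S,g_T)$, non-zero at the image $\bar p$ of a generator where $k\not\propto n$. If the $\bar k$-orbit through $\bar p$ is non-closed, any $\tau$ works; if it closes with period $\bar\tau$, take $\tau=r\bar\tau$ with $r$ irrational. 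Either way the chosen generator does not close in $H$, and only then does the previous Corollary apply.
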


\begin{proof}
It is sufficient to compactify the horizon via a standard trick \cite{friedrich99,moncrief08}. The period $\tau$ of the Killing flow used for the identification must be chosen so that at least one generator does not close. To accomplish this result, first consider the quotient $\tilde S$ of $H^+$ with respect to the geodesic (generator) flow (existence and triviality of all the bundles used follows from results in \cite{minguzzi24}). Since the flow of $k$ sends generators to generators it satisfies $[k,n]\propto n$ and projects on a Killing field $\bar k$ on the quotient (endowed with a Riemannian metric whose pullback is $g_T$).

Let us pick a lightlike generator where $k$ is not aligned to it and consider its projection $\bar p$. If the orbit of $\bar k$ starting from $\bar p$ is not closed the period $\tau$ can be chosen arbitrarily, the generator is not going to close in the compactified space. If instead, for the special choice $\bar \tau$ it closes, we choose $\tau=r \bar \tau$ with $r$ irrational, so that again, the  generator is not going to close in the compactified space.
\end{proof}

These theorems should be compared with those obtained by using the theory of Lie groups \cite{isenberg92,alexandrino15,bustamante21} which can be applied in the non-degenerate case as the flow is isometric.

We stress that thanks to the connection with the theory of Riemannian flows,  some of the above results hold also in the degenerate case, with the modification that $X$ (or the fields $X_i$) exists just locally, in a neighborhood of every point. Fields from different patches cannot  just be matched together at the intersections in a consistent global way.\\

\begin{remark}
We mention that the previous results were meant to help  establish the existence of symmetries for the horizon. Such symmetries might in turn constrain  the dynamical structure. For instance, in a 4-dimensional spacetime, if the spacetime admits a conformal Killing field tangent to the  horizon and everywhere transverse to the generators, then the horizon cannot have the structure  (ii) of Thm.\ \ref{car}. Indeed, the closed generators would be mapped to distinct arbitrarily close closed generators, which is not possible as the two closed generators are isolated.
\end{remark}

%
%

\section{Surface gravity}
One would like to prove that on a degenerate horizon the vector field $n$ can be chosen so as to have zero surface gravity. Ultimately, this might help clarifying the geometry of such horizons, possibly leading to the proof that  degenerate Cauchy horizons do not exist. We start investigating the necessary conditions for  constant  surface gravity.


\subsection{Necessary conditions for  constant surface gravity. }

Let us look for necessary conditions for the existence of a lightlike field $n$ of constant surface gravity. Observe that we do not impose dimensionality conditions on the horizon.\\

\begin{lemma}
\label{nc}
Let $n$ be a smooth tangent vector field of constant zero surface gravity.
For any smooth function $f$ on $H$, and $n^{\prime} = e^fn$, the following holds,
\begin{itemize}
\item[(i)]
\begin{equation}
\label{equ1}
\omega^{\prime}(n^{\prime}) = \kappa^{\prime} =  n^{\prime}(f).
\end{equation}
\item[(ii)] For any integral curve $\gamma$ of $n^{\prime}$ such that $\gamma(0)= p, \gamma (s) = q, \\
 s \in [0, +\infty)$; it holds $ \int_{\gamma([0, s])}\omega^{\prime} = f(q) - f(p).$ \\
In particular, for any periodic integral curve we get: $ \int_{\gamma}\omega^{\prime} = 0$.
\end{itemize}
\end{lemma}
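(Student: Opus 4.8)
The plan is to read off both statements from the renormalization identity (\ref{ren2}) under the hypothesis $\kappa\equiv 0$, after which part (ii) is just the fundamental theorem of calculus. No compactness of $H$ and no restriction on $\dim H$ will be used.

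For (i): since by assumption the surface gravity of $n$ vanishes identically, $\kappa = \omega(n) = 0$ everywhere on $H$, formula (\ref{ren2}) reduces to $\kappa' = e^f(\kappa + \p_n f) = e^f\,\p_n f$. On the other hand $n' = e^f n$, so $n'(f) = e^f\,\p_n f = \kappa'$; since $\kappa' := \omega'(n')$ by definition of surface gravity, this gives $\omega'(n') = \kappa' = n'(f)$, which is (\ref{equ1}).

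For (ii): I would parametrize the integral curve $\gamma$ of $n'$ compatibly with the field, i.e.\ $\dot\gamma(t) = n'|_{\gamma(t)}$ with $\gamma(0)=p$, $\gamma(s)=q$ (once $n'$ is fixed there is no rescaling freedom left in the parametrization, so no spurious conformal factor enters the line integral). Then
\[
\int_{\gamma([0,s])}\omega' \;=\; \int_0^s \omega'_{\gamma(t)}\big(\dot\gamma(t)\big)\,\dd t \;=\; \int_0^s \big(\omega'(n')\big)\big|_{\gamma(t)}\,\dd t,
\]
and by part (i) the integrand equals $n'(f)|_{\gamma(t)} = \tfrac{\dd}{\dd t}(f\circ\gamma)(t)$, whence the integral is $f(\gamma(s)) - f(\gamma(0)) = f(q) - f(p)$. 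If $\gamma$ is periodic then $p=q$ and the integral vanishes. (One could equally argue from $\omega' = \omega + \dd f$ of (\ref{ren1}), noting that $\int_\gamma \omega = \int_0^s e^f\kappa|_{\gamma(t)}\,\dd t = 0$, but invoking part (i) is cleaner.)

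I do not expect a genuine obstacle here: the content is entirely forced by (\ref{ren2}) and the fundamental theorem of calculus. The only points deserving a word of care are the matching of the curve's parametrization with the chosen field $n'$, and the remark — worth stressing in the text — that the statement holds with no dimensional or compactness hypothesis, so it is available in full generality for the later analysis of degenerate horizons.
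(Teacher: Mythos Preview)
Your proof is correct and essentially coincides with the paper's, which simply states that the lemma is immediate from Eq.~(\ref{ren1}); you primarily invoke (\ref{ren2}) instead, but these are two sides of the same renormalization identity and you yourself note the (\ref{ren1}) alternative in your parenthetical remark. Your added observations on parametrization and the absence of dimensional or compactness hypotheses are accurate and do not depart from the paper's implicit reasoning.
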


\begin{proof}
It is  immediate from Eq.\ (\ref{ren1}).
\end{proof}

Point (ii) implies that up to a constant $f$ can be uniquely determined on $\bar \gamma$ first determining it on the curve and then by using continuity. By Carri\`ere's result $\bar \gamma$ is a torus.\\

\begin{proposition}    \label{selc}
Let $H$ be a compact horizon. Let $n'$ be a future directed lightlike tangent field and suppose that there is another choice $n$, $n'=e^f n$, such that $n$ has constant surface gravity $\kappa$.
Let $\alpha:H\to  T^*H$ be a 1-form  such that $\alpha(n')=1$ and let us consider the Riemannian metric $\tilde{g} = g_T + \alpha\otimes \alpha$ on $H$.  Then $\kappa$ is negative, zero or positive if and only if so is the integral
\[
\int_H \kappa^{\prime}d\tilde{g} ,
\]
where $d \tilde{g}$ is the canonical Riemannian measure induced by $\tilde{g}$.
\end{proposition}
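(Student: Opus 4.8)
The plan is to reduce $\int_H \kappa'\,d\tilde g$ to a strictly positive multiple of $\kappa$. First, combine the rescaling law (\ref{ren2}) with $n=e^{-f}n'$: since $\partial_n f=e^{-f}n'(f)$ one gets
\[
\kappa'=e^{f}\bigl(\kappa+\partial_n f\bigr)=e^{f}\kappa+n'(f),
\]
with $\kappa$ the given constant. Integrating over the compact horizon,
\[
\int_H \kappa'\,d\tilde g=\kappa\int_H e^{f}\,d\tilde g+\int_H n'(f)\,d\tilde g .
\]
The first summand equals $\kappa$ times the strictly positive number $C:=\int_H e^{f}\,d\tilde g$, so once the second summand is shown to vanish we obtain $\operatorname{sign}\!\left(\int_H \kappa'\,d\tilde g\right)=\operatorname{sign}(\kappa)$, which is the assertion.

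The heart of the argument is thus $\int_H n'(f)\,d\tilde g=0$, and I would establish it by showing that $n'$ is divergence-free for $\tilde g$. Writing $n'(f)=\operatorname{div}_{\tilde g}(f n')-f\,\operatorname{div}_{\tilde g}(n')$, the first term integrates to zero by the divergence theorem on the closed manifold $H$, so it remains to prove $\operatorname{div}_{\tilde g}(n')=0$, equivalently $L_{n'}(d\tilde g)=0$. Here one uses that $\tilde g=g_T+\alpha\otimes\alpha$ with $\alpha(n')=1$ is precisely a bundle-like metric with characteristic form $\alpha$, together with the fact (part of the Riemannian flow structure recalled in Section 2) that $L_{n'}g_T=0$. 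Concretely, in a simple cylindrical neighbourhood pick the leaf coordinate $t$ with $n'=\partial/\partial t$; then $L_{n'}g_T=0$ forces $g_T=g_{ab}(x)\,dx^a dx^b$ with no $t$-dependence, while $\alpha=dt+a_b(t,x)\,dx^b$, so the matrix of $\tilde g$ in coordinates $(t,x)$ has determinant $\det\!\bigl(g_{ab}(x)\bigr)$. Hence $d\tilde g=\sqrt{\det g_{ab}(x)}\;dt\wedge dx^1\wedge\cdots\wedge dx^{n-1}$ is independent of $t$, so $L_{\partial/\partial t}(d\tilde g)=0$; being local, this gives $\operatorname{div}_{\tilde g}(n')=0$ on all of $H$. (Invariantly: $\iota_{n'}(d\tilde g)$ is, in each such neighbourhood, the pullback of the transverse Riemannian volume of the local quotient $(B,g_T)$, hence closed, so $L_{n'}(d\tilde g)=d\,\iota_{n'}(d\tilde g)=0$.)

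Assembling the pieces, $\int_H \kappa'\,d\tilde g=C\kappa$ with $C>0$, which proves the proposition. The only non-routine step is the vanishing of $\int_H n'(f)\,d\tilde g$ --- i.e.\ the flow-invariance of the volume of the bundle-like metric $\tilde g$; everything else is the algebra of (\ref{ren2}) and Stokes' theorem, and the argument uses neither a dimension restriction nor any hypothesis on completeness of the generators.
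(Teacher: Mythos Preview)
Your proof is correct and follows essentially the same route as the paper: both write $\kappa'=e^f\kappa+n'(f)$ from (\ref{ren2}), establish $\operatorname{div}_{\tilde g}(n')=0$, apply the divergence theorem to kill $\int_H n'(f)\,d\tilde g$, and conclude $\int_H\kappa'\,d\tilde g=\kappa\int_H e^f\,d\tilde g$. The only difference is in the verification that $n'$ is divergence-free: the paper computes $\mathrm{Tr}(\tilde g^{-1}L_{n'}\tilde g)$ in a $\tilde g$-orthonormal frame $(n',e_1,\ldots,e_{n-1})$ and observes that each term vanishes, while you compute the determinant of $\tilde g$ in adapted cylindrical coordinates and note it is $t$-independent; both are short and equivalent.
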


\begin{proof}
Since $H$ is totally geodesic we have  $\widetilde{\textrm{div}} n^{\prime} =0$, indeed  (see also \cite[Prop.\ 3.7, point 2]{gutierrez16})
\begin{align*}
2 \widetilde{\textrm{div}} n^{\prime}&={\textrm{Tr}(L_{n'} \tilde g)}=(L_{n'} \tilde g)(n',n')+\sum_{i=1}^n (L_{n'} \tilde g)(e_i,e_i)=2 (L_{n'} \alpha)(n')=0
\end{align*}
where $(n',e_1,\cdots, e_n)$ is an orthonormal basis for $\tilde g$, hence $\alpha (e_i)=0$ for every $i$, and where we used $L_{n'} g_T=0$.

 From equation (\ref{ren2}),
\[
\int_H \kappa^{\prime}d\tilde{g} = \int_H[e^f\kappa+n^{\prime}(f) ]d\tilde{g}.
\]
By the divergence theorem \cite[Thm.\ 16.48]{lee03},
\[
\int_Hn^{\prime}(f) d\tilde{g}=\int_H [\widetilde{\textrm{div}}(fn^{\prime}) - f\widetilde{\textrm{div}}n^{\prime}]d\tilde{g}= 0.
\]
Hence $\int_H \kappa^{\prime}d\tilde{g} =\kappa \int_H e^f d\tilde{g}$, which, since $\int_H e^f d\tilde{g}> 0$, concludes the proof.
\end{proof}

%


We are interested in the degenerate case for which the necessary condition is satisfied as the next result shows.\\

\begin{theorem}
\label{Birk app}
Suppose that the spacetime satisfies the dominant energy condition and let $n$ be any lightlike field tangent to a compact degenerate horizon $H$.
Let $\alpha:H\to T^*H$ be a 1-form  such that $\alpha(n)=1$ and let us consider the Riemannian metric $\tilde{g} = g_T + \alpha\otimes \alpha$ on $H$.
Then
\[
\int_H \kappa\,d\tilde{g} = 0.
\]
\end{theorem}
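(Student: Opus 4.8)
The plan is to reduce this to the necessary-condition result in Proposition \ref{selc}. The key structural fact we may invoke is the dichotomy for compact horizons under the dominant energy condition, proved in \cite{minguzzi21,reiris21}: on a degenerate compact horizon, \emph{every} generator is complete, and moreover the surface gravity can be normalized to zero, i.e.\ there is a choice $n_0$ of the lightlike tangent field with $\kappa_0 \equiv 0$. So first I would fix such an $n_0$ with identically vanishing surface gravity.

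Next, write the given field $n$ as $n = e^{f} n_0$ for a smooth function $f: H \to \mathbb{R}$ (this is legitimate since $n$ and $n_0$ are both positive lightlike tangent fields, so their ratio is a positive smooth function, and $f$ is its logarithm). Then Proposition \ref{selc} applies verbatim with $\kappa = 0$ and with $n' = n$: the integral $\int_H \kappa\, d\tilde g$ has the same sign as $\kappa_0 = 0$, hence is zero. Concretely, unpacking the proof of Proposition \ref{selc}: from the renormalization formula \eqref{ren2} one has $\kappa = e^{f}(\kappa_0 + \partial_{n_0} f) = \partial_n f$ (using $\kappa_0 \equiv 0$ and $\partial_n = e^{f}\partial_{n_0}$), and since $H$ is totally geodesic the computation $2\widetilde{\mathrm{div}}\, n = \mathrm{Tr}(\tilde g^{-1} L_n \tilde g) = 2(L_n\alpha)(n) = 0$ shows $n$ is $\tilde g$-divergence-free; the divergence theorem on the compact manifold $H$ then gives $\int_H \partial_n f\, d\tilde g = \int_H \widetilde{\mathrm{div}}(f n)\, d\tilde g = 0$. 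Alternatively one just cites Proposition \ref{selc} directly, noting its hypotheses are met: $n$ plays the role of $n'$, $n_0$ the role of the field with constant surface gravity $\kappa = 0$, and any $\alpha$ with $\alpha(n) = 1$ gives the stated $\tilde g$.

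The only genuine input beyond Proposition \ref{selc} is the existence of a zero-surface-gravity normalization $n_0$ on a degenerate horizon under the dominant energy condition; this is exactly the content of the results of Bustamante–Reiris and Gurriaran–Minguzzi recalled in Section 1 (``if surface gravity is normalizable to zero then the horizon is degenerate,'' together with the dichotomy, which conversely forces degenerate horizons to admit the zero normalization). I expect this citation step to be the main point to get right: one must be careful that the degenerate case genuinely yields $\kappa_0 \equiv 0$ for some choice of $n$, not merely completeness of one generator — but this is precisely what the dichotomy plus the non-degenerate characterization (non-degenerate $\iff$ surface gravity normalizable to a nonzero constant) delivers. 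Everything else is the already-established divergence-theorem argument of Proposition \ref{selc}, so the proof is short.
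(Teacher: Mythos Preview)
Your argument rests on the existence of a lightlike tangent field $n_0$ with identically vanishing surface gravity on a degenerate compact horizon, and you try to extract this from the dichotomy together with the equivalence ``non-degenerate $\iff$ surface gravity normalizable to a nonzero constant.'' But that deduction is invalid: the contrapositive only tells you that on a degenerate horizon the surface gravity \emph{cannot} be normalized to a nonzero constant; it does \emph{not} say it can be normalized to zero. The third possibility---that it cannot be normalized to any constant at all---is not excluded by anything in \cite{reiris21,minguzzi21}. Indeed, the existence of a zero-surface-gravity normalization on degenerate horizons is precisely the open problem that Section~\ref{ckqp} of this paper is devoted to, and only partial results (the circle-bundle case and the Diophantine case) are obtained there. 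Those partial results actually \emph{use} Theorem~\ref{Birk app} as input, so your argument would also be circular within the paper's logical structure.

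The paper's own proof avoids this by appealing not to any normalization of $\kappa$ but to Birkhoff's ergodic theorem: since $n$ is $\tilde g$-divergence-free the flow is volume-preserving, so $\int_H \kappa\,d\tilde g = \int_H \overline{\kappa}\,d\tilde g$ where $\overline{\kappa}(x)=\lim_{t\to\infty}\frac{1}{t}\int_0^t \kappa(\phi_s(x))\,ds$; degeneracy (in the form of the boundedness characterization of \cite[Def.~4]{minguzzi21}) forces $\overline{\kappa}=0$ a.e., and the result follows. This uses only the weaker information that $\int_0^t \kappa$ remains bounded along each generator, which is genuinely available in the degenerate case.
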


\begin{proof}
The flow $(\phi_t)_{t\in \mathbb{R}}$ of $n$ is a measure preserving flow since it preserves the Riamannian volume element, in fact, as shown in the proof of Prop.\ \ref{selc} $\widetilde{\textrm{div}}(n)=0$. Then, Birkhoff's Ergodic Theorem for measure preserving transformations gives \cite{birkhoff31,nemytskii60,arnold68}
\[
\int_H \kappa\,d\tilde{g} = \int_H \overline{\kappa}\,d\tilde{g},
\]
where $\overline{\kappa}$ is the function defined almost everywhere by
\[
\overline{\kappa}(x) = \lim_{t\rightarrow +\infty} \frac{1}{t}                   \int_0^{t}\kappa(\phi_s(x))ds.
\]
(We are not assuming the ergodicity property which would imply that $\overline{\kappa}$ is a constant.) Let $x$ be such that $\overline{\kappa}(x)$ is well defined, then $\overline{\kappa}(x) = 0$, otherwise  the function $t\mapsto \int_0^{t}\kappa(\phi_s(p))ds$ will go to $\pm \infty$, a contradiction since the horizon is degenerate (see the equivalence proved in \cite[Def.\ 4]{minguzzi21}). It follows that  $\overline{\kappa}(x) = 0$ for almost every $x$ in $H$, which implies that $ \int_H \kappa\,d\tilde{g} = 0.$
\end{proof}

By Carri\`ere's result every generator densely fills a torus, see Thm.
 \ref{nqpx}. The necessary condition restricted to the torus is satisfied  as the next result shows.\\

\begin{proposition}
\label{birk}
Let $H$ be as in Theorem \ref{Birk app}, if a generator densely fills a torus $\mathbb{T}^a$,  then
\[
\int_{\mathbb{T}^a} \kappa\,d\bar{g} = 0,
\]
where $d \bar g$ is the volume of the metric induced on the torus by $\tilde g$.
\end{proposition}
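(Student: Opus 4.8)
The plan is to mimic the proof of Theorem \ref{Birk app}, but applied to the restriction of the flow to an invariant torus $\mathbb{T}^a=\bar\gamma$, the closure of a dense generator. By Carri\`ere's result (Theorem \ref{nqpx}) such a $\bar\gamma$ is an embedded submanifold diffeomorphic to a torus, the foliation restricted to it is conjugate to a linear flow with dense orbits, and in particular $n$ is tangent to $\mathbb{T}^a$ and its flow $(\phi_t)$ maps $\mathbb{T}^a$ to itself. So the first step is to observe that all the structure used in the proof of Theorem \ref{Birk app} localizes to $\mathbb{T}^a$: the induced metric $\bar g$ on $\mathbb{T}^a$ is $g_T|_{\mathbb{T}^a}+\alpha|_{\mathbb{T}^a}\otimes\alpha|_{\mathbb{T}^a}$, and since $L_n g_T=0$ and $\alpha(n)=1$ the same computation $2\,\widetilde{\mathrm{div}}\,n = \mathrm{Tr}(\tilde g^{-1}L_n\tilde g) = 0$ goes through verbatim on $\mathbb{T}^a$, giving $\overline{\mathrm{div}}\,n = 0$ there; hence $(\phi_t)$ restricted to $\mathbb{T}^a$ is measure preserving for the Riemannian volume $d\bar g$.

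Next I would apply Birkhoff's Ergodic Theorem to the measure-preserving flow $(\phi_t|_{\mathbb{T}^a})$ on the compact space $(\mathbb{T}^a,d\bar g)$ and the continuous (hence integrable) function $\kappa|_{\mathbb{T}^a}$, exactly as in Theorem \ref{Birk app}. This yields
\[
\int_{\mathbb{T}^a}\kappa\,d\bar g=\int_{\mathbb{T}^a}\overline{\kappa}\,d\bar g,\qquad \overline{\kappa}(x)=\lim_{t\to+\infty}\frac1t\int_0^t\kappa(\phi_s(x))\,ds
\]
defined for $d\bar g$-a.e.\ $x\in\mathbb{T}^a$. Then the key pointwise step: for any $x$ where $\overline{\kappa}(x)$ exists, $\overline{\kappa}(x)=0$, because otherwise $t\mapsto\int_0^t\kappa(\phi_s(x))\,ds$ diverges to $\pm\infty$, contradicting degeneracy of $H$ via the equivalence of \cite[Def.\ 4]{minguzzi21} (the integral of $\kappa$ along a complete generator controls the affine reparametrization, so an unbounded integral forces incompleteness of that generator, hence of $H$ by the dichotomy). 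Since $\overline{\kappa}=0$ a.e., the integral $\int_{\mathbb{T}^a}\kappa\,d\bar g=0$.

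The only subtlety — and the step I would treat most carefully — is making sure that the degeneracy input is legitimately usable at the level of the torus: the flow line through $x\in\mathbb{T}^a$ is a generator of the full horizon $H$, so it is a complete geodesic (up to reparametrization) precisely because $H$ is degenerate and the dichotomy applies; thus the argument ``unbounded $\int_0^t\kappa(\phi_s(x))ds\Rightarrow$ incomplete generator'' is about the same generators as in Theorem \ref{Birk app}, and nothing new is needed. One should also note that Birkhoff's theorem is applied on the compact invariant set $\mathbb{T}^a$ with its own finite measure $d\bar g$, which is fine. Everything else — the divergence computation, the continuity/integrability of $\kappa$, the measure-preservation — is routine once restricted to $\mathbb{T}^a$, so I would keep the write-up short and point back to the proof of Theorem \ref{Birk app}.
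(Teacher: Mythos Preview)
Your proposal is correct and follows essentially the same approach as the paper: restrict to the invariant torus, show the flow of $n$ preserves the induced volume, then apply Birkhoff and the degeneracy dichotomy exactly as in Theorem~\ref{Birk app}. The only cosmetic difference is in the divergence step: the paper computes $\overline{\mathrm{div}}\,n$ via the induced Levi--Civita connection $\overline{\nabla}$, using $\bar g(\overline{\nabla}_{e_i}n,e_i)=\tilde g(\widetilde{\nabla}_{e_i}n,e_i)=\tfrac12(L_n\tilde g)(e_i,e_i)=0$ for an orthonormal frame $(n,e_1,\ldots,e_{a-1})$ of $T\mathbb{T}^a$, whereas you invoke the trace formula $2\,\overline{\mathrm{div}}\,n=\mathrm{Tr}(\bar g^{-1}L_n\bar g)$ directly; since $\mathbb{T}^a$ is $n$--invariant these are the same computation, and your remark that $\alpha(e_i)=0$ for $e_i\perp_{\tilde g} n$ is exactly what makes it go through.
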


\begin{proof}
We keep the notations of the proof of Theorem \ref{Birk app}. We still denote by $n$ the restricted vector field on $\mathbb{T}^a$. It is sufficient to prove that $n$ is divergence free for the induced metric on $\mathbb{T}^a$ as the rest of the argument is as above. Let $(n,e_1, \ldots, e_{a-1})$ be an orhonormal basis of $T_p\mathbb{T}^a$  (hence $\alpha(e_i)=0$ for every $i$) and let $\overline{g}$ and  $\overline{\nabla}$ be respectively the metric and the connection induced on $\mathbb{T}^a$  by the Riemannian metric $\widetilde{g}$. Then $$\overline{div}(n) = \overline{g}(\overline{\nabla}_nn,n) + \sum_{i=1}^{a-1} \overline{g}(\overline{\nabla}_{e_i} n,e_i).$$
Since $n$ is unitary\footnote{More in detail $2 \overline{g}(\overline{\nabla}_nn,n)=n(\overline{g}(n,n))=n(\widetilde{g}(n,n))=n(1)=0$.} for $\widetilde{g}$, $\overline{g}(\overline{\nabla}_nn,n) = 0$. On the other hand, since both $n$  and $e_i$ are tangent to $\mathbb{T}^a$,
\[
\overline{g}(\overline{\nabla}_{e_i}n,e_i) = \widetilde{g}(\widetilde{\nabla}_{e_i}n,e_i) = \frac{1}{2}(L_n\widetilde{g})(e_i,e_i)=0,
\]
where in the last equality\footnote{In the first equality we used $\widetilde{\nabla}_X Y=\overline{\nabla}_XY + \beta(X,Y) u$ for $X,Y\in T\mathbb{T}^a$, where $\beta$ is the second fundamental form of $\mathbb{T}^a$ in $(H,\tilde g)$, $\overline{\nabla}_XY \in T\mathbb{T}^a$  and $u$ is a unit vector $\widetilde g$-normal to $\mathbb{T}^a$. The connection $\overline{\nabla}$ is the Levi-Civita connection for $\overline{g}$, see e.g.\ Kobayashi and Nomizu, Foundations of Differential Geometry, vol. II, Sec.\ 3.1.} we used $L_n g_T=0$ and $\alpha(e_i)=0$.
Hence $\overline{div}(n) =0$ and the restricted flow of $n$ on $\mathbb{T}^a$ is also volume preserving.
\end{proof}

\subsection{Sufficient conditions for zero surface gravity} \label{ckqp}
In this section we obtain some positive results on the existence of a field $n$ with zero surface gravity for degenerate horizons.

\subsubsection{The circle bundle case}
 In this section, we consider a horizon whose generators are all closed forming a ${S}^1$ (circle) bundle.
 Examples of this type appeared in \cite{moncrief81b,moncrief82,moncrief83}.
The main result of this section is the following.\\

\begin{theorem}
\label{closed orbits}
Let $H$ be a degenerate horizon whose generators provide the oriented fibers of a smooth ${S}^1$ (circle) bundle  $\pi: H \rightarrow B$. Then there exists on $H$ a  smooth tangent vector field $n$ of constant zero surface gravity.
\end{theorem}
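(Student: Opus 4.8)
The plan is to rescale an arbitrary future-directed lightlike field. Fix any smooth future-directed $n_0$ tangent to $H$, with $\nabla_X n_0=\omega_0(X)n_0$ and $\kappa_0:=\omega_0(n_0)$, and look for a positive $\rho\in C^\infty(H)$ such that $n:=\rho\,n_0$ has vanishing surface gravity. By the rescaling rule (\ref{ren2}), written with $e^f=\rho$, this is the linear first-order equation along the generators
\[
\p_{n_0}(\log\rho)=-\kappa_0 .
\]
On a single generator $\gamma$ — a circle through $p$, parametrised by the flow $\phi_s$ of $n_0$ with period $T_\gamma$ — this has the solution $\rho(s)=\rho(0)\exp\!\big(-\!\int_0^s\kappa_0(\phi_{s'}(p))\,\dd s'\big)$, which is periodic, hence well defined on the circle $\gamma$, exactly when $\mathcal I_\gamma:=\int_\gamma\omega_0=\int_0^{T_\gamma}\kappa_0(\phi_s(p))\,\dd s$ vanishes. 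This number is independent of the choice of $n_0$ (it changes by $\int_\gamma\dd f=0$ under $n_0\mapsto e^f n_0$), and it is exactly the obstruction isolated in Lemma \ref{nc}(ii).

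First I would dispatch this obstruction using the degeneracy hypothesis. Reparametrising a closed generator by an affine parameter $\lambda$, one finds $\dd\lambda/\dd s$ proportional to $\exp\!\big(\int_0^s\kappa_0(\phi_{s'}(p))\,\dd s'\big)$, whose integral over a half-line $s\in(-\infty,0]$ or $s\in[0,+\infty)$ is finite iff $\mathcal I_\gamma\neq0$; hence a closed generator is a complete geodesic iff $\mathcal I_\gamma=0$. Since $H$ is degenerate, the dichotomy \cite{reiris21,minguzzi21} makes every generator complete, so $\mathcal I_\gamma=0$ for every fiber $\gamma$ (equivalently, this is the boundedness characterisation of degeneracy of \cite[Def.\ 4]{minguzzi21} applied to the periodic orbits). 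Thus the equation above is solvable, up to a multiplicative constant, on each individual fiber.

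The main step is to produce a single smooth $\rho$ on all of $H$, and here the circle-bundle structure enters. I would fix a free smooth $S^1$-action $\Psi$ on $H$ whose orbits are exactly the fibers of $\pi$ (this is part of the datum when $\pi$ is presented as a principal bundle, and in general follows by reducing the structure group $\mathrm{Diff}^+(S^1)$ to $SO(2)$, using that $\mathrm{Diff}^+(S^1)$ deformation retracts onto $SO(2)$, so the associated quotient bundle has a section). Let $V$ be the fundamental vector field of $\Psi$, oriented along $n_0$ and normalised so its orbits have period $2\pi$, and write $n_0=h\,V$ with $h\in C^\infty(H,\mathbb R_{>0})$; the equation becomes $\p_V u=-\kappa_0/h=:\psi$ with $u=\log\rho$. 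A smooth solution exists iff $\psi$ has zero mean on each $V$-orbit, and the change of variables $\dd\theta=h\,\dd s$ between the $S^1$-parameter $\theta$ and the $n_0$-parameter $s$ gives $\int_0^{2\pi}\psi(\Psi_\theta(p))\,\dd\theta=-\int_0^{T_{\gamma_p}}\kappa_0(\phi_s(p))\,\dd s=-\mathcal I_{\gamma_p}=0$. Hence the standard averaging construction (on each orbit $\psi$ has zero mean, so it has a smooth $S^1$-antiderivative depending smoothly on the orbit; explicitly one may take $u(p)=-\tfrac1{2\pi}\int_0^{2\pi}\theta\,\psi(\Psi_{-\theta}(p))\,\dd\theta$, or expand in Fourier modes along the orbits) yields a smooth $u$ with $\p_V u=\psi$. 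Then $\rho:=e^u\in C^\infty(H)$ is positive and $n:=\rho\,n_0$ is a smooth future-directed lightlike tangent field with $\kappa\equiv0$.

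I expect the main obstacle to be this globalisation step: promoting a fibrewise-solvable linear ODE with periodic coefficients to one global smooth solution. Encoding the fibration as an $S^1$-action reduces it cleanly to solvability of $\p_V u=\psi$ under a vanishing-orbital-mean condition, which is classical; alternatively one could solve on trivialising neighbourhoods and patch, the mismatch being a Čech cocycle valued in the fine sheaf $(C^\infty(-,\mathbb R),+)$ and hence a coboundary, but the $S^1$-route is more transparent. The only other point requiring care is that the argument uses degeneracy essentially — by Lemma \ref{nc}(ii) it is precisely what makes $\mathcal I_\gamma$ vanish — so nothing is being claimed for non-degenerate circle-bundle horizons, in agreement with the earlier fact that there $\kappa$ normalises instead to a nonzero constant.
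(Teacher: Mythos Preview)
Your proof is correct and follows the same strategy as the paper: pass to a principal $S^1$-bundle structure, use degeneracy (via the characterisation in \cite[Def.\ 4]{minguzzi21}) to show $\int_\gamma\omega_0=0$ on every fiber, and then solve the fiberwise linear equation and rescale. The only cosmetic difference is in the globalisation step: the paper integrates in local trivialisations and glues with a partition of unity lifted from the base (so that $n'(\rho_i)=0$), whereas you write down a single global averaging formula using the $S^1$-action---an alternative you yourself mention.
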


\begin{proof}
 By \cite[Prop.\ 6.15]{morita01} every oriented $S^1$ bundle admits the structure of principal $S^1$ bundle, thus we can assume that $H$ is a principal $S^1$ bundle\footnote{That is, the field $n'$ can be chosen so that it induces an $S^1$-action, i.e.\ over each closed orbit the integral parameter $s$ of $n'$, $n'=\dd/\dd s$, changes of $2\pi$. In the remainder of the proof the vector field $n'$ is assumed to have been chosen in this way.}.
The idea of the proof is the following: given a smooth lightlike tangent vector field $n^{\prime}$  on $H$ inducing a $S^1$ action with period $2\pi$, we  construct a function $f$ such that $\kappa^{\prime} =  n^{\prime}(f)$. Then the existence of $n$ will follow from Eq.\ (\ref{ren2}) by choosing  $n = e^{-f}n^{\prime}$.

First, we claim that for any integral curve $\gamma$ of $n^{\prime}$,
\begin{equation}
\label{degen cond}
\int_{\gamma([0, 2\pi])}\omega^{\prime} = \int_0^{2\pi}\kappa^{\prime}(\gamma(s))ds = 0.
\end{equation}

In fact, if $ \int_{\gamma([0, 2\pi])}\omega^{\prime} \neq 0$, then  $ \int_{\gamma([0, +\infty))}\omega^{\prime} = \infty$; which means that the horizon is non-degenerate (by the equivalences of \cite[Def.\ 4]{minguzzi21}), a contradiction.

Since $\pi$ is a principal  bundle, there exists a covering $(U_i)_{i\in I}$ of $B$ by coordinate neighborhoods such that for each $i \in I$, there exists a diffeomorphism $\phi_i: \pi^{-1}(U_i) \rightarrow U_i\times {S}^1$ such that $\pi\lvert _{\pi^{-1}(U_i)} = pr_1\circ\phi_i$, where
$pr_1: U_i\times {S}^1 \rightarrow U_i$ is the first projection. Additionally, denoting with $t$ the standard coordinate on $S^1$ with period $2 \pi$,  $\phi_{i*}n^{\prime} = \frac{\partial}{\partial t}$.
Let  $x^1,\cdots, x^{n-1}$ be coordinates  on $U_i$.
Let us define $\tilde{f}_i$ on  $U_i\times {S}^1$ by $\tilde{f}_i(x^1,\cdots, x^{n-1},t) = \int_0^t \kappa^{\prime}(\phi_i^{-1}(x^1,\cdots, x^{n-1},s))ds.$ We note that $\tilde{f}_i$ is well defined, i.e, it is periodic with respect to the variable $t$. This follows from Eq.\ (\ref{degen cond}).

Moreover, by definition, $(\phi_{i*}n^{\prime})(\tilde{f}_i) =  \kappa^{\prime}\circ\phi_i^{-1}$, hence on $\pi^{-1}(U_i)$, it holds $n^{\prime}(f_i) = \kappa^{\prime}$ where $f_i = \tilde{f}_i\circ \phi_i.$ Let $(\overline{\rho}_i)_{i\in I}$   a partition of unity of $B$ subordinate to $(U_i)_{i\in I}$ and consider  the partition of unity  $(\rho_i)_{i\in I}$ of $H$ subordinate to $(\pi^{-1}(U_i))_{i\in I}$ where $\rho_i =  \overline{\rho}_i\circ\pi$; then $ f = \sum_{i\in I} \rho_if_i$ satisfies  on $H$, $n^{\prime}(f) = \kappa^{\prime}$. It follows that $n = e^{-f}n^{\prime}$ has constant zero surface gravity.
\end{proof}

\subsubsection{Horizons without closed orbits}
In Theorem \ref{closed orbits}, we proved that for compact degenerate horizons that are principal circle bundles (a subcase of  case (i) of Theorem \ref{car}) there exists a smooth lightlike vector field $n$ with zero surface gravity. In the next theorem we prove that this is also the case if (iii-a) or (iv) holds under some algebraic condition on the flow.

The real vector space $\mathbb{R}^n$ will be equipped with
its usual scalar product $\langle\, , \rangle$  and the associated norm $\lvert . \lvert$.\\

\begin{definition}(\cite[Def.\ 2.2]{dehghan07})
Let $a = (a_1, \cdots , a_n)$ be a vector of $\mathbb{R}^n$ such that the numbers $a_1, \cdots , a_n$  are linearly independent over $\mathbb{Q}$. (This implies in particular that the orbits of the linear flow defined by the  projection of $a$ on $\mathbb{T}^n = \mathbb{R}^n / \mathbb{Z}^n $ are dense in $\mathbb{T}^n$).
We say that the vector  $a$ is {\em Diophantine} if there exist real numbers $C > 0$ and $\tau > 0$ such that $\lvert \langle m, a \rangle  \lvert \geq \frac{C}{\lvert m \lvert^{\tau}}$ for any non zero $m \in \mathbb{Z}^n$.
\end{definition}

For instance, any vector $ a = (a_1, \cdots , a_n)$  of $\mathbb{R}^n$ (such that the numbers $a_1, \cdots , a_n$  are linearly independent over $\mathbb{Q}$) for which the components $a_1, \cdots , a_n$ are algebraic numbers is Diophantine.\\

\begin{definition}
We will say that a linear flow on  $\mathbb{T}^n$ is {\em Diophantine} if the vector $a$ of $\mathbb{R}^n$ defining it is a Diophantine vector.\\
\end{definition}

We have the following\footnote{The statement and proof of the next result are improved and simplified with respect to the published version. }
\begin{theorem} \label{cqjb}
Let $(M,g)$ be a  $4$-dimensional spacetime satisfying the dominant energy condition. Let $H$ be  a smooth degenerate compact totally geodesic horizon in $M$.  Suppose the flow on the horizon is of the type (iii-a) or (iv) of Theorem \ref{car} hence conjugate to a linear flow on a 3-torus. Moreover, assume the conjugate flow on $\mathbb{T}^3$  is Diophantine  (in case (iv)) or Diophantine  on one (and hence every) $2$-torus fibers (in case (iii-a)). Then
there exists a smooth lightlike tangent vector field $n^{\prime}$ on $H$ with constant zero surface gravity.
\end{theorem}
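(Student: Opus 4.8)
The plan is to reduce the problem to solving a cohomological equation on a torus. Fix any smooth lightlike tangent field $n$ on $H$; by Eq.~(\ref{ren2}) it suffices to produce a smooth function $f:H\to\mathbb{R}$ with $\partial_n f=\kappa$ (the surface gravity of $n$), for then $n'=e^{-f}n$ has $\kappa'=e^{-f}(\kappa+\partial_n(-f))=0$. By Theorem~\ref{car}, in case (iv) $H=\mathbb{T}^3$ and $n$ is differentiably conjugate to a linear flow; in case (iii-a) $H\cong\mathbb{T}^3$ is a $\mathbb{T}^2$-bundle over $S^1$, each fiber being the closure of the generators it contains, and on each fiber the flow is conjugate to a linear flow. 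So after a diffeomorphism we may assume $n$ is (proportional to) a constant-coefficient vector field $V_a=\sum a_i\partial/\partial\theta^i$ on a torus, where the direction vector $a$ is Diophantine. Note that conjugation rescales $n$ by a positive function, which by Eq.~(\ref{ren2}) only rescales $\kappa$ by that function and adds a $\partial_n$-exact term, so without loss of generality the equation to solve is $V_a f=\kappa$ on the torus with $\kappa$ a given smooth function.

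First I would record the necessary integral condition: expanding $\kappa$ in a Fourier series $\kappa=\sum_{m}\hat\kappa(m)e^{2\pi i\langle m,\theta\rangle}$, the equation $V_a f=\kappa$ with $f=\sum_m\hat f(m)e^{2\pi i\langle m,\theta\rangle}$ forces $2\pi i\langle m,a\rangle\hat f(m)=\hat\kappa(m)$, so the zero mode must vanish, $\hat\kappa(0)=\int\kappa\,d\theta=0$, and then $\hat f(m)=\hat\kappa(m)/(2\pi i\langle m,a\rangle)$ for $m\neq 0$. The vanishing of the zero mode is exactly the content of Proposition~\ref{birk} (in case iii-a, applied fiberwise) and Theorem~\ref{Birk app}: degeneracy plus the dominant energy condition forces $\int\kappa\,d\bar g=0$ on each leaf-closure torus; one must check that the relevant volume form, pushed through the conjugating diffeomorphism, is (a positive multiple of) the flat one, or more robustly reformulate Birkhoff's argument directly for $V_a$ on the flat torus using that $V_a$ is divergence-free for $d\theta$ and the orbits are complete. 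Either way the zero-mode obstruction is killed.

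The analytic heart is then to show the formally defined $f$ is genuinely smooth. This is where the Diophantine hypothesis enters: since $\kappa$ is $C^\infty$, its Fourier coefficients decay faster than any polynomial, $|\hat\kappa(m)|\le C_N|m|^{-N}$ for all $N$; and the Diophantine condition $|\langle m,a\rangle|\ge C|m|^{-\tau}$ gives $|\hat f(m)|\le C_N'|m|^{\tau-N}$, again rapid decay, so $f=\sum_{m\neq 0}\hat f(m)e^{2\pi i\langle m,\theta\rangle}$ converges in $C^\infty$ and solves $V_a f=\kappa$. In case (iii-a) one runs this construction on each $\mathbb{T}^2$ fiber; here the extra work is to check that the resulting family of fiberwise solutions assembles into a globally smooth function on the total space $\mathbb{T}^3$ — this should follow because the construction of $\hat f(m)$ from $\hat\kappa(m)$ is, in the transverse ($S^1$-base) direction, just division by a nowhere-zero continuous function and multiplication, so smooth dependence on the base parameter is inherited from that of $\kappa$, provided the Diophantine constants can be taken locally uniform along the base (which is automatic on a compact base if the direction varies continuously and stays Diophantine, shrinking $C$ slightly).

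The main obstacle I expect is precisely this last point: controlling the small divisors \emph{uniformly} as the leaf-closure and hence the direction vector $a$ vary over the $S^1$-base in case (iii-a), and matching the fiberwise solutions into one smooth function on $H$. A secondary technical nuisance is bookkeeping the conjugating diffeomorphism of Theorem~\ref{car}: it is only guaranteed ``without parameter'', so one must be careful that it transports ``constant surface gravity'' and the degeneracy condition correctly, and that the push-forward of $n$ is a constant multiple of $V_a$ rather than merely tangent to the linear foliation (one may need to first normalize $n$ within its rescaling class using a transverse invariant volume, cf.\ Prop.~\ref{kqf}, to make it literally $V_a$). Once the conjugation is set up cleanly, the cohomological-equation argument is the standard KAM-type estimate and goes through routinely.
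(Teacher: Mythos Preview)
Your proposal is correct and follows essentially the same route as the paper: reduce to the cohomological equation $n(f)=\kappa$ on the torus, kill the zero mode via the ergodic/degeneracy argument (Theorem~\ref{Birk app}, Proposition~\ref{birk}), and solve via the Diophantine small-divisor estimate. The paper outsources the Fourier step to \cite{dehgan07} while you spell it out, and the paper is explicit that one should first invoke the isometric property (Prop.~\ref{kqf}) to pick $n$ Killing, so that the conjugation of \cite{carriere84} makes $\phi_*n$ literally the linear field $V_a$ and the invariant volume a constant multiple of $d\theta$ --- you correctly flag this as a ``secondary technical nuisance'' and propose exactly the same fix.

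The one genuine divergence is the globalization in case (iii-a). You argue that the fiberwise Fourier solution (normalized to zero fiber-mean) depends smoothly on the base coordinate because $\kappa$ does and the divisors $\langle m,a\rangle$ are fixed; this is correct, and your uniformity worry is in fact moot since after the global conjugation to a linear flow on $\mathbb{T}^3$ the direction $a$ is the same on every $\mathbb{T}^2$ fiber. The paper instead forgets the Fourier construction once fiberwise existence is established: it observes that a zero-surface-gravity field on a fiber is determined by its value at a single point (density of the orbit plus continuity), fixes values along a local section of the $\mathbb{T}^2$-bundle over $S^1$, and glues the resulting local geodesic fields with a partition of unity lifted from $S^1$. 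This works because two zero-surface-gravity choices differ by $e^g$ with $n(g)=0$, hence $g$ constant on fibers, so a base-partition-of-unity combination stays zero-surface-gravity. Both arguments are valid; the paper's is a bit more robust (it would survive fiber-dependent Diophantine constants), while yours is more direct once the global linearization is in hand.
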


\begin{proof}
Since the flow is conjugate (without parameter) to a linear flow on $\mathbb{T}^3$ we can work on this manifold and rescale $n$ on $H$ so that its push forward on  $\mathbb{T}^3$  is the vector defining the linear flow.  Similarly the surface gravity $k$ can be pushed to a function on  $\mathbb{T}^3$ and the bilinear form $g_T$ can be pushed to a bilinear form on $\mathbb{T}^3$. We are thus going to work on $\mathbb{T}^3$, dropping the diffeomorphism, where $n$ is a constant vector (where $\mathbb{T}^3$ is given the canonical flat affine structure induced from $\mathbb{R}^3$). Note that $g_T$ is invariant by the flow of $n$ and so, introduced a constant 1-form $\alpha$ such that $\alpha(n)=1$, we have $L_n\alpha=0$ and defining $\widetilde g=g_T+\alpha\otimes \alpha$, $L_n \widetilde g=0$ and hence $L_n \dd \widetilde g=0$.

In case (iv), as $n$ is Diophantine, its components are rationally independent (meaning that a linear combination with integer coefficients vanishes only if the coefficients are all zero), which implies unique ergodicity\footnote{In the map case see, for instance,  M. Einsiedler and T. Ward, {\em Ergodic Theory with a view towards Number Theory} (Springer-Verlag, London, 2011), Cor.\ 4.15. For the flow case that interests us see, e.g.\, J.-F.\ Quint, {\em Examples of unique ergodicity of algebraic
flows}, Lectures at Tsinghua University, Beijing, November 2007, Example 1.2.11.
 The flow case can be proved imposing for any continuous function $f: \mathbb{T}^n \to \mathbb{R}$ and any $t \in \mathbb{R}$,
$\int_{\mathbb{T}^n} f(\phi_t(x)) \, d\mu(x) = \int_{\mathbb{T}^n} f(x)  d\mu(x)$, and proceeding
with a Fourier expansion of $f$. For related material see V. I. Arnold {\em Mathematical Methods of Classical Mechanics}    (Springer-Verlag, New York, 1989), Sec.\ 51.}
and hence that $\dd \widetilde g =a \dd x \dd y \dd z$, where $a$ is a constant and $\dd x \dd y \dd z$ is the standard flat volume on the torus. From Theorem \ref{Birk app} we get that $\int_{H} \kappa   \dd x \dd y \dd z=0$ which, from \cite[Thm.\ 2.3 point (i)]{dehghan07} implies that there exists a smooth function $f$ such that $n(f)=\kappa$.

In case (iii-a), $n$ is Diophantine when regarded as tangent vector over the $\mathbb{T}^2$ fibers, which implies that the flow is uniquely ergodic on each $\mathbb{T}^2$ fiber. The canonical volume on such tori thus coincides, up to a factor, with that of the  metric induced from $\widetilde g$ and so, by  our Prop.\ \ref{birk},  on each 2-torus the integral of $\kappa$ with the canonical volume of the 2-torus vanishes.
By \cite[Thm.\ 2.4]{dehghan07} there exists a smooth function $f$ such that $n(f)=\kappa$.

In both cases it follows that   $n^{\prime} = e^{-f}n$ has constant zero surface gravity.
\end{proof}%

\section{Further results for non-degenerate horizons: vanishing of the 2-form $d\omega$}
In the next theorem, we prove that the  2-form $d\omega$ vanishes identically on non-degenerate compact horizon $H$ with dense null generators (and hence $H$ is a torus)  in $4$-dimensional spacetime. For horizons, whose  every generator densely fills a two-torus (and so are torus bundle over the circle \cite[Cor.\ 1.2]{bustamante21}), we prove that the 2-form $d\omega$ vanishes at least on a two-torus. As the proof uses some results about contact $1$-form, we recall the following (see \cite{geiges08}).\\

\begin{definition}
Let $M$ be a manifold of odd dimension $2n+1$. An orientable {\em contact
structure} is a maximally non-integrable hyperplane field $\xi = ker\alpha \subset TM$, that is, the defining differential $1$-form $\alpha$ is required to satisfy $\alpha\wedge(d\alpha)^n \neq 0$ (meaning that it vanishes nowhere). Such a $1$-form $\alpha$ is called a contact form. The pair $(M,\xi)$ is called a {\em contact manifold}.\\
\end{definition}

It follows that $\alpha\wedge(d\alpha)^n$ is a volume form on $M$ and so $M$ is orientable.
In dimension $3$, the contact condition reduces to  $\alpha\wedge(d\alpha) \neq 0$ which is equivalent to $d\alpha\lvert_{\xi} \neq 0.$

Associated with a contact form $\alpha$, one has the
so-called {\em Reeb vector field} $R_{\alpha}$, uniquely defined by the equations
\begin{itemize}
\item[i)] $d\alpha(R_{\alpha}, .) = 0$,
\item[ii)] $\alpha(R_{\alpha}) = 1$.
\end{itemize}

The Reeb vector field of a contact form defined on a $3$-dimensional compact manifold has always a periodic orbit \cite{taubes07}. This fact allows us to prove the following.\\

\begin{proposition}
Let $(M, g) $ be a $4$-dimensional spacetime which satisfies the dominant energy condition.
Let $H$ be a compact non-degenerate horizon of  $(M, g)$.
\begin{enumerate}
\item If the null generators are dense in $H$,  then the  2-form $d\omega$ vanishes identically.
\item If every generator densely fills a two-torus, then the 2-form $d\omega$ vanishes at least on a two-torus.
\end{enumerate}
\end{proposition}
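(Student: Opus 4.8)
The plan is to use the fact that under the dominant energy condition $i_n\dd\omega=0$ (from \cite[Lemma 5]{minguzzi21}), so $\dd\omega$ is a basic 2-form on the horizon, and to combine this with the contact geometry machinery together with the non-degeneracy assumption, which by \cite{minguzzi21,reiris21} lets us choose $n$ so that $\kappa=\omega(n)$ is a nonzero constant; rescaling, assume $\kappa=1$, so that $\omega(n)=1$ and $i_n\dd\omega=0$. Hence $\omega$ is (up to orientation) a \emph{contact-type candidate}: it satisfies exactly the defining equations of a contact form whose Reeb field is $n$, \emph{provided} $\omega\wedge\dd\omega\neq 0$ everywhere. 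So the strategy is to argue by contradiction: suppose $\dd\omega$ does not vanish identically.

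For part (1), assume the generators are dense in $H$, so by Thm.~\ref{nqpx} $H\cong\mathbb{T}^3$ and the flow of $n$ is conjugate to a linear flow with dense orbits. Since $\dd\omega$ is basic and $i_n\dd\omega=0$, at any point where $\dd\omega\neq 0$ we get $\omega\wedge\dd\omega\neq 0$ there (because $\omega(n)=1$ while $i_n\dd\omega=0$ means $\dd\omega$ is, fiberwise, a nonzero 2-form on the 2-plane transverse to $n$, and wedging with $\omega$ detects it). The set $\{\omega\wedge\dd\omega\neq 0\}$ is open, nonempty by assumption, and saturated (since $L_n\omega=i_n\dd\omega+\dd(i_n\omega)=0+\dd 1=0$, the whole structure is $n$-invariant, so this set is a union of generators); its saturation being open and the flow having dense orbits, if it is nonempty it must be all of $H$. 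Therefore $\omega$ is a contact form on $\mathbb{T}^3$ with Reeb field $n$. By Taubes' theorem (the Weinstein conjecture in dimension 3), the Reeb field of a contact form on a closed 3-manifold has a periodic orbit. But $n$ is conjugate to a linear flow with \emph{dense} orbits, which has no periodic orbit --- contradiction. Hence $\dd\omega\equiv 0$ on $H$.

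For part (2), assume every generator densely fills a two-torus; by Thm.~\ref{nqpx} the leaf closures partition $H$ into embedded 2-tori, on each of which the induced foliation is conjugate to an irrational linear flow. Fix one such torus $T\cong\mathbb{T}^2$ and consider the pullback $\omega|_T$; since $n$ is tangent to $T$ with $\omega(n)=1$, the pullback is a nowhere-zero 1-form on $T$. The pullback of $\dd\omega$ to $T$ is $\dd(\omega|_T)$, a 2-form on the 2-torus, i.e.\ $\dd(\omega|_T)=h\,\mathrm{vol}_T$ for a function $h$; by Stokes' theorem $\int_T \dd(\omega|_T)=0$, so $h$ changes sign or vanishes identically. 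I want to conclude $h\equiv 0$. If not, the set where $\omega|_T\wedge\dd(\omega|_T)\ne 0$ is nonempty; but on a 2-torus one cannot have a contact structure (the contact condition $\alpha\wedge\dd\alpha\neq 0$ requires odd dimension $2n+1$, and on $\mathbb{T}^2$ a 1-form wedged with its differential is automatically zero by dimension count). So instead I should argue directly: the restricted 1-form $\omega|_T$ together with $n$ (the Reeb-type field, $i_n\dd\omega|_T=0$ still) --- here the right tool is to note that $\dd(\omega|_T)$ is basic for the (1-dimensional, dense) foliation of $T$ by the generators, hence $i_n\dd(\omega|_T)=0$ on the \emph{2}-torus forces $\dd(\omega|_T)=0$ pointwise, since a 2-form on a surface annihilated by a nonzero vector field is the zero 2-form. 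Thus $\dd\omega$ vanishes on $T$. (One must additionally check this $T$ is a genuine embedded torus appearing in case (iii) or the $k=1$ stratum of case (iv), which Thm.~\ref{nqpx}/\ref{car} provides.)

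The main obstacle I expect is part (1): getting the saturated open set $\{\omega\wedge\dd\omega\ne0\}$ to be \emph{all} of $H$ so that $\omega$ is globally a contact form, and then being certain that the conjugacy in Thm.~\ref{nqpx} is strong enough (differentiable, parameter-free) to transport the ``no periodic orbit'' property of the linear flow to $n$ --- the subtlety being that $n$ need not be literally a constant linear field, only differentiably conjugate to one up to reparametrization, and periodicity of orbits is invariant under reparametrization, so this should be fine. The one genuinely delicate point is whether Taubes' theorem applies: it needs a bona fide contact form on the closed oriented 3-manifold $\mathbb{T}^3$, which is exactly what the contradiction hypothesis plus $i_n\dd\omega=0$, $\omega(n)=1$ gives us. For part (2) the dimension-count argument for 2-forms on surfaces is routine once the structure of the leaf-closure tori is in hand.
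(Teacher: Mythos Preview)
Your argument for part~(1) is correct and is essentially the paper's proof: normalize $\kappa$ to a nonzero constant so that $\omega(n)=1$ and $i_n\dd\omega=0$, observe that $L_n\omega=0$ so the zero set of $\dd\omega$ is closed and flow-invariant, use density of the generators to conclude that if $\dd\omega$ vanishes anywhere it vanishes everywhere, and otherwise invoke Taubes' proof of the Weinstein conjecture to produce a periodic Reeb orbit, contradicting density. The paper organizes the contradiction slightly differently (it first supposes $\dd\omega_p\ne 0$ for all $p$, obtains a periodic orbit, and then spreads the resulting single zero via $L_n\dd\omega=0$ and continuity to the orbit closure), but the content is the same.

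Your argument for part~(2), however, has a genuine gap. You restrict to a leaf-closure torus $T\subset H$ and argue that the pullback $\iota_T^*\dd\omega=\dd(\omega|_T)$ vanishes because it is a $2$-form on a surface annihilated by the nonzero tangent field $n$. This is true, but it is \emph{trivially} true and is not what the proposition asserts. At a point $p\in T$, the $2$-form $\dd\omega_p\in\Lambda^2 T_p^*H$ is, thanks to $i_n\dd\omega=0$, completely determined by its value $\dd\omega_p(e_1,e_2)$ on a $2$-plane transverse to $n$; the tangent space $T_pT=\mathrm{span}(n,e_1)$ only contains \emph{one} such transverse direction, so the pullback $\iota_T^*\dd\omega$ sees only $\dd\omega_p(n,e_1)=0$ and tells you nothing about $\dd\omega_p(e_1,e_2)$. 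In short, you have shown $\dd\omega|_{TT\times TT}=0$ (automatic), not $\dd\omega|_T=0$ as a section of $\Lambda^2T^*H$ over $T$ (the actual claim).

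The fix is to run the \emph{same} Taubes argument as in part~(1). In case~(2) no generator is closed either, so again the hypothesis ``$\dd\omega_p\ne0$ for all $p$'' would make $\omega$ a global contact form with Reeb field $n$ and force a periodic orbit, a contradiction. Hence $\dd\omega_{p_0}=0$ for some $p_0$; since $L_n\dd\omega=0$, the full $2$-form $\dd\omega$ vanishes along the $n$-orbit through $p_0$ and, by continuity, on its closure, which is one of the $2$-tori. This is exactly the paper's proof, which treats~(1) and~(2) by a single argument and only distinguishes the cases at the very last step (orbit closure $=H$ versus orbit closure $=\mathbb{T}^2$).
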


\begin{proof}
From \cite[Thm.\ 7]{minguzzi21}, we can choose a smooth  normal tangent vector field $n$ such that $\nabla_nn = -n$. From \cite[Lemma 5, Lemma 6]{minguzzi21}, $d\omega(n, .) = 0$, $L_nd\omega = 0$. So $d\omega$ is invariant under the flow of $n$.  Suppose that, $\forall p\in H$, $d\omega_p \neq 0$. Since $\omega(-n) = 1$, $d\omega(-n, \cdot ) = 0$, it follows that $\omega$ is a contact form on the compact $3$-dimensional manifold $H$ with Reeb vector field $-n$. From \cite{taubes07}, $-n$ and then $n$ has at least a periodic orbit. So, if  the null generators are dense or densely fill two-tori, then there exists some $p\in H$ such that $d\omega_p = 0$. By the invariance of $d\omega$ under the flow of $n$, it follows that $d\omega$ vanishes along the orbit $\gamma_p$ of $n$ and by continuity, it vanishes on its closure. If the null generators are dense, then the  2-form $d\omega$ vanishes identically on $H$, while if every generator densely fills a two-torus, then the 2-form $d\omega$ vanishes at least on a two-torus.
\end{proof}

\begin{corollary}
Let $(M, g) $ be a $4$-dimensional spacetime which satisfies the dominant energy condition.
Let $H$ be a compact non-degenerate horizon of  $(M, g)$.
 If the null generators are dense in $H$ (hence it is $\mathbb{T}^3$) then
 \begin{enumerate}
\item The $1$-form $\omega$ defines a foliation on $H$.
\item $\forall X, Y \in TH$,
 $R(X, Y )n = 0$ and    $R(n,X)Y = R(n,Y)X$.
 \end{enumerate}
\end{corollary}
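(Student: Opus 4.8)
I would obtain the three items from the vanishing $d\omega\equiv 0$ proved in the preceding proposition (which applies precisely because the generators are dense) together with Carri\`ere's structure theorem~\ref{nqpx}.

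For items 1 and 2, first normalize $n$ so that $\nabla_n n=-n$, whence $\omega(n)=\kappa\equiv-1$ is a nonzero constant and $\omega$ vanishes nowhere. Since $d\omega=0$, for $X,Y\in\ker\omega$ we get $\omega([X,Y])=-d\omega(X,Y)=0$, so $\ker\omega$ is an involutive corank‑one distribution which, by Frobenius, integrates to the codimension‑one foliation ``defined by $\omega$'' --- this is item 1. For item 2, differentiating the totally geodesic relation $\nabla_Z n=\omega(Z)n$ once more yields the elementary identity $R(X,Y)n=d\omega(X,Y)\,n$ for $X,Y\in TH$; here every term stays tangent to $H$ because $H$ is totally geodesic, so $R$ may be read as the ambient Riemann tensor restricted to $H$, which coincides with the intrinsic one on triples of tangent vectors. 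Thus $R(X,Y)n=0$, and the first Bianchi identity $R(n,X)Y+R(X,Y)n+R(Y,n)X=0$ immediately gives $R(n,X)Y=R(n,Y)X$.

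For item 3, I would first note that a spacelike plane $\Pi\subset TH$ must be transverse to $n$ (otherwise $g_T|_\Pi$ is degenerate), so its image in $Q=TH/\textrm{span}(n)$ is the whole fibre $Q_p$. Using $\nabla^T_Y\bar X=\overline{\nabla_Y X}$ one checks $R^T(\bar X,\bar Y)\bar Z=\overline{R(X,Y)Z}$, and since $g_T$ descends to $Q$ and annihilates $n$, for a $g_T$‑orthonormal basis $u,v$ of $\Pi$ one gets $K(\Pi)=g_T(R(u,v)v,u)=g_T\big(R^T(\bar u,\bar v)\bar v,\bar u\big)$, i.e.\ the Gaussian curvature of the transverse metric at $p$. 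So it suffices to prove the transverse metric is flat. By Carri\`ere's theorem~\ref{nqpx}, in the dense case $H=\mathbb{T}^3$ and the flow is differentiably conjugate, without parameter, to a minimal linear flow $n_0$ on $\mathbb{T}^3$; pushing $g_T$ forward by the conjugating diffeomorphism produces a continuous symmetric tensor $\tilde g_T$ with $\tilde g_T(n_0,\cdot)=0$, and the invariance $L_n g_T=0$ transfers to $L_{n_0}\tilde g_T=0$ (the reparametrization relating $n$ to $n_0$ is harmless since $\tilde g_T$ kills the flow direction). As a linear flow acts on the tangent spaces of $\mathbb{T}^3$ by the identity, each coefficient of $\tilde g_T$ in the standard frame is a continuous function constant along the dense orbits, hence constant; so $\tilde g_T$ has constant coefficients, the metric it induces on the $2$‑dimensional local quotient is flat, and since flatness is diffeomorphism‑invariant, $g_T$ is transversally flat and $K(\Pi)=0$.

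Items 1 and 2 are essentially formal once $d\omega\equiv 0$ is granted. The only point needing care --- and the part I expect to be the real work --- is the bookkeeping in item 3: checking that the sectional curvature formed with the degenerate ambient metric on $TH$ genuinely equals the curvature of the quotient metric, and that the conjugacy available (of foliations, not of parametrized flows) still forces $L_{n_0}\tilde g_T=0$. Both are handled by systematically exploiting that the tensors in play annihilate, or are annihilated by, the null direction $n$.
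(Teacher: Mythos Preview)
Your handling of items 1 and 2 is essentially the paper's own argument (the paper simply cites \cite[Lemma~3]{minguzzi21} for item~2, which is precisely the identity $R(X,Y)n=d\omega(X,Y)\,n$ you derive, combined with the first Bianchi identity).

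Your argument for item~3 is correct but takes a genuinely different route. The paper introduces a rigging metric $\tilde g=g+g(T,\cdot)\otimes g(T,\cdot)$ on $H$, invokes results of Guti\'errez--Olea \cite{gutierrez16} to identify $K(\Pi)$ with $\widetilde K(\Pi)$ on $\ker\omega$, observes that $n$ is Killing for $\tilde g$ so that the scalar curvature $\tilde s=2\widetilde K(\Pi)$ is constant by density of the orbits, and then rules out $\tilde s>0$ via the Schoen--Yau theorem on $\mathbb{T}^3$ and $\tilde s<0$ via a growth argument on $\pi_1(\mathbb{T}^3)$ using Blumenthal--Hebda \cite{blumenthal83}. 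By contrast, you identify $K(\Pi)$ directly with the Gaussian curvature of the transverse Levi-Civita connection (using the paper's own formula $\nabla^T_Y\bar X=\overline{\nabla_Y X}$), push the transverse metric to $\mathbb{T}^3$ via Carri\`ere's smooth conjugacy, and observe that a flow-invariant tensor under a minimal linear flow has constant coefficients, hence vanishing transverse curvature. Your approach is considerably more elementary --- it replaces Schoen--Yau and the $\pi_1$-growth dichotomy by a one-line density argument --- and it stays entirely within the Riemannian-flow framework the paper has been developing; the paper's approach, while heavier, exhibits interesting links to global Riemannian theorems and would survive in situations where one only knows the curvature is constant rather than having an explicit flat model.
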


\begin{proof}
Point 1 is trivial since $\omega$ is a non-singular closed $1$-form.
Point 2 follows directly  from  Lemma 3 of \cite{minguzzi21}.

\end{proof}

\section{Conclusions}

In this work, we pointed out that the structure of compact (Cauchy) horizons in general relativity can be studied by taking advantage of a body of results developed since the fifties of the last century for Riemannian foliations.

Some of these results were rediscovered over the years by the community working in Lorentzian geometry, but several others remained unnoticed. We introduced the reader to this body of literature selecting some first relevant results for their application to horizons.

It should be noted that in relativity theory the study of degenerate horizons remained elusive because they lead to flows that can be, contrary to the non-degenerate case, non-isometric. In the non-degenerate case, one can apply the full machinery of Lie group theory via the Myers-Steenrod theorem, which is not available in the non-isometric case. Still, as we observed, the more general theory of Riemannian foliations can indeed be applied to the degenerate case. The results on Riemannian flows, suitably reinterpreted, are independent of any non-degeneracy condition. In this way, by using results by Molino and Carri\`ere, we  reobtained  the classification by Bustamante and Reiris for the non-degenerate 4-dimensional spacetime case and showed that the more general (possibly degenerate) case just involves one further possible hyperbolic structure $\mathbb{T}^3_A$.

For a compact Cauchy horizon of a simply connected partial Cauchy hypersurface in arbitrary dimensions  the flow turns out to be isometric (thanks to results by Ghys).
The study of Riemannian flows is  so advanced that  detailed structure results for horizons can be given for spacetime dimension 5 (thanks to results  by Almeida and Molino).

In this paper we also discussed the existence of Killing fields tangent to the horizon but transverse to the generators, again importing and adapting results from the theory of Riemannian flows.

Although the notion of surface gravity does not belong to the geometry of Riemannian flows,  the found structure for the horizon allows one to approach the remaining problem of the existence of a field of  zero surface gravity in the degenerate case via a case-by-case study. In   Section \ref{ckqp} we have indeed obtained some first results in this direction  showning the importance of the Diophantine condition.

Finally, in the last section, we returned to the non-degenerate case. Since much of the geometry of the horizon is controlled by the 1-form $\omega$ we provided some novel results on this form and on the geometry of the horizon that might be useful in future work.

\section*{Acknowledgments}
R.H. was supported by ICTP-INDAM ``Research in Pairs'' grant and by ``fondi d'internazionalizzazione''  of the Department of Mathematics of Universit\`a Degli Studi di Firenze thanks to which he visited Florence in August-November 2021, during the covid pandemic.

This study was funded by the European Union - NextGenerationEU, in the framework of the PRIN Project (title) {\em Contemporary perspectives on geometry and gravity} (code 2022JJ8KER – CUP B53D23009340006). The views and opinions expressed in this article are solely those of the authors and do not necessarily reflect those of the European Union, nor can the European Union be held responsible for them.

We thank Gianluigi Del Magno for help in finding  some references on unique ergodicity of flows on tori.




\end{document}